\newtheorem{observation}{Observation}
\newtheorem{definition}{Definition}
\newtheorem{lemma}{Lemma}
\newtheorem{theorem}{Theorem}
\newtheorem{corollary}{Corollary}
\newtheorem{example}{Example}
\renewcommand{\b}{\mathbf{b}}
\newcommand{\M}{\mathcal{M}}
\newcommand{\T}{\mathcal{T}}
\newcommand{\vgraph}{OSP-graph}
\newcommand{\ver}{\mathcal{O}_i^\T}
\newcommand{\verp}{\mathcal{O}_i^{\T'}}
\newcommand{\vect}[1]{\mathbf{#1} }
\renewcommand{\a}{\vect a}
\newcommand{\bi}{\b_{-i}}
\renewcommand{\xi}{\vect{x}_{-i}}
\newcommand{\x}{\vect{x}}
\newcommand{\y}{\vect{y}}
\newcommand{\z}{\vect{z}}
\newcommand{\s}{\vect{s}}
\renewcommand{\t}{\vect{t}}
\renewcommand{\c}{\vect{c}}
\renewcommand{\d}{\vect{d}}
\newcommand{\bb}[1]{\vect{b}^{(#1)}}
\newcommand{\bbbb}[2]{b^{(#1)}_{#2}}
\newcommand{\bbb}[1]{b^{(#1)}_i}
\newcommand{\cc}[1]{\vect{c}^{(#1)}}
\newcommand{\ccc}[1]{c^{(#1)}_i}
\renewcommand{\a}{\vect a}
\newcommand{\ab}[1]{\a^{(#1)}}
\newcommand{\out}{\mathcal S}
\begin{document}


\title{On the Connection between Greedy Algorithms and Imperfect Rationality}

\author{Diodato Ferraioli\thanks{Universit\`a di Salerno, Italy. Email: {\tt dferraioli@unisa.it}} \and Carmine Ventre\thanks{King's College London, UK. Email: {\tt carmine.ventre@kcl.ac.uk}}}

\date{}

\maketitle


\begin{abstract}
The design of algorithms or protocols that are able to align the goals of the planner with the selfish interests of the agents involved in these protocols is of paramount importance in almost every decentralized setting (such as, computer networks, markets, etc.) as shown by the rich literature in Mechanism Design. 
Recently, huge interest has been devoted to the design of mechanisms for imperfectly rational agents, i.e., mechanisms for which agents are able to easily grasp that there is no action different from following the protocol that would satisfy their interests better. This work has culminated in the definition of Obviously Strategyproof (OSP) Mechanisms, that have been shown to capture the incentives of agents without contingent reasoning skills. 



Without an understanding of the algorithmic nature of OSP mechanisms, it is hard to assess how well these mechanisms can satisfy the goals of the planner. For the case of binary allocation problems and agents whose private type is a single number, recent work has shown that a generalization of greedy completely characterizes OSP. 
In this work, we strengthen the connection between greedy and OSP by providing 
a characterization of OSP mechanisms for all optimization problems involving these single-parameter agents. 
Specifically, we prove that OSP mechanisms must essentially work as follows: they either greedily look for agents with ``better'' types and allocate them larger outcomes; or reverse greedily look for agents with ``worse'' types and allocate them smaller outcomes; or, finally, split the domain of agents in ``good'' and ``bad'' types, and subsequently proceed in a reverse greedy fashion for the former and greedily for the latter. We further demonstrate how to use this characterization to give bounds on the approximation guarantee of OSP mechanisms for the well known scheduling related machines problem. 
\end{abstract}

\newpage

\setcounter{page}{1}


\section{Introduction}
\label{sec:Introduction}
The work in economics and computation 
has provided many protocols 
that align the goals of the planner with the selfish interests of agents involved in these protocols: examples range from network protocols (e.g., TCP \cite{akella2002selfish}, BGP \cite{nisan2011best}, Ethernet \cite{chen2007contention,feldman2016online, adamczyk2018random}) where this alignment was an indirect goal, to market protocols (e.g., ad auctions \cite{varian2007position}, spectrum auctions \cite{milgrom2004putting}) and the more recent blockchain protocols \cite{chiu2019incentive} for which the field has evolved to consider incentives more formally. 
The development of this field, 
known as \emph{mechanism design}, in fact recognizes that agents may, in principle, have an advantage if they deviate from the protocol's prescriptions. This could invalidate the guarantees of the protocol (such as, the maximization of  some social measure of welfare or the revenue of the designer) that only hold under the assumption that agents behave as dictated.
%
In Mechanism Design, the aim is to design special protocols, termed \emph{mechanisms}, that have both good performances and are compatible to the incentives of the agents, that is, it is in the agents' best interests to follow the protocol --- a property often termed \emph{strategy-proofness} (SP).

This line of research has led to the development of deep theoretical contributions and many 
fascinating results, that either design efficient and performing mechanisms in many settings \cite{book,archer2001truthful}, or prove their impossibility \cite{DSS15,CKK21}. Still, very few of the mechanisms designed in the literature have found practical applications. Arguably, one of the main reasons for this scarce applicability is the complexity of these mechanisms: not only they are too cumbersome to implement, but it is also too complex for agents to understand that there is no advantage in deviating from 
the protocol.

In order to address this issue,  the design of ``\emph{simple}'' mechanisms has attracted huge interest. In particular, we distinguish two approaches. Some work provides definitions of simplicity that turn out to be only partially satisfactory. For example, 
\citet{hartline2009simple} defined simplicity in a comparative way, so that one mechanism is more complex of another if the former can be simulated by the latter if we add few more agents; this definition is inspired by a seminal result by 
\citet{bulow1996auctions}, proving that a revenue-maximizing mechanism for selling a single item provides the same revenue achieved by a social-welfare maximizing mechanism with one more agent. While this definition is useful to compare different mechanisms, it does not necessarily lead to mechanisms that avoid the complexities described above.

A 
different approach is to propose mechanism that are ``trivially'' simple, the most prominent example being \emph{posted price} mechanisms. In these mechanisms, a price is associated a-priori to each possible action of the agents; hence, it is immediate for the agent to decide which action is the most convenient. Many posted price mechanisms have been proposed in a variety of settings \cite{babaioff2014simple,adamczyk2015sequential,feldman2017makespan,eden2017simple,correa2017posted}. Unfortunately, posted price mechanisms are also known to have severe limitations. For example, strong impossibility or inapproximability results for these mechanisms have been proved in diverse settings (cf., \cite{adamczyk2015sequential,babaioff2017posting}). Clearly, without a precise definition of simplicity, it is impossible to understand if these impossibility results hold only for this specific mechanism format or for every simple mechanism.

A specific notion has emerged in the literature in economics, with the definition of 
\emph{Obviously Strategyproof (OSP)} mechanisms
\cite{liosp}. These are proved to match the concept of simplicity we seek in that even agents with imperfect rationality, namely those lacking contingent reasoning skills, understand that it is best for them to play according to the mechanism's rules. Roughly speaking, a mechanism is OSP if whenever it requires an agent to take an action, the worst outcome that she can achieve by following the protocol is not worse than the best outcome that she can achieve by deviating. It is evident that posted price mechanisms are OSP, but it is not hard to see that other mechanism formats satisfy this property. Consider, for example, an English auction (a.k.a., ascending price auction) where the price to sell a single item is raised at each time step, and agents have to decide whether to leave or remain in the mechanism. The last agent left in the auction is the only winner -- she receives the item at the current price. Following the mechanism in this context means that each agent participates to the mechanism as long as the price is below her valuation for the item, and leaves when the price is too high. The best case when deviating from this rule and leaving the mechanism when the price is below one's valuation leads to 
losing the item, which is certainly not better than the worst outcome when following the protocol. Similarly, staying in the auction when the price is above the valuation can only lead to outcomes for which the agent has a non-positive utility (losing the item or winning it for a price above her valuation) which is worse than following the protocol (which always guarantees non-negative utility); hence, deviating is again not better than following the protocol.

OSP mechanisms have attracted a lot of attention in both economics and computer science. Some works provide preliminary characterizations for these mechanisms akin the revelation principle for SP (which does not hold for OSP) 
\cite{badegonczarowski,pycia2,mackenzie}.
This research allows to think, without loss of generality, at deterministic (rather than randomized) extensive-form mechanisms where each agent moves sequentially (rather than concurrently). More relevant to our paper is a technique to characterize OSP via \emph{cycle-monotonicity}, which has been defined by \citet{MOR22}.

Fewer results are instead known on the construction of these mechanisms. Most of these results focus on restricted preferences, such as single-peaked domains \cite{badegonczarowski,AMN19,AMN20} whereas others focus on specific applications, e.g., stable matching  \cite{ashlagigonczarowski}, machine scheduling \cite{MOR22} and binary allocation problems \cite{MOR22}. Negative results, such as inapproximability or impossibility results, about the performances of OSP mechanisms are similarly quite sparse. Some inapproximability results have been instead provided for special mechanisms formats, that can be observed to be OSP, such as \emph{deferred acceptance} auctions \cite{MS20}. For example, \citet{DGR17} prove that the approximation guarantee of these mechanisms are quite poor compared to what strategyproof mechanisms can do for several optimization problems, as confirmed by more recent work along this line \cite{CGS22,FGGS22}. 
It is unknown whether these results extend to any OSP mechanism.

\paragraph{Our Contribution.}
We focus on the case of selfish agents with a type space that consists of the set of real numbers, i.e.,  \emph{single-parameter} agents. Despite its simplicity, this setup still allows to model many fundamental optimization problems, see, e.g., \cite{book,AT01}. Moreover, the extent to which is possible to incentivate these 
agents has been widely studied and is well understood for perfectly rational agents. However, 
we are very far from understanding how easy it is to design OSP mechanisms 
or establish their limits even for this simple setup. 

Recently, \citet{fpv21} provided a characterization of OSP mechanisms for single-parameter problems \emph{with binary outcomes}, i.e., where agents are either selected or not in the eventual solution. They show that a mechanism is OSP if and only if it employs a \emph{two-way greedy} algorithm. 
This is in essence either a greedy algorithm (i.e., selecting agents with ``good'' types as long as it is feasible) or reverse greedy (a.k.a., deferred acceptance) algorithms (i.e.,  discarding agents with ``bad'' types), with the possibility of interleaving the two approaches only in some rare 
cases. Let us focus on the Minimum Spanning Tree (MST) problem to make the difference between forward and reverse greedy more explicit. Agents here control the edges of a graph and their type is the cost for using the edge. 
Greedy algorithms incrementally build the MST by selecting the agent with the lowest cost (a ``good'' type to use the terminology above) that do not close a cycle, whereas reverse greedy rejects agents with the highest costs until it is left with a spanning tree. The great advantage of this characterization is that 
one can design OSP mechanisms or evaluate their limits, by simply importing well-known results about greedy and reverse greedy algorithms, cf. bounds in \cite{fpv21}.

The techniques of \citet{fpv21} seem tailored to the binary outcome setting, and therefore it is unclear to what extent the connection between greedy algorithms and OSP mechanisms holds more generally. 
In this work, we provide such a result: essentially, we show that every algorithm for single-parameter optimization algorithms can be turned into an OSP mechanism if and only if it is \emph{three-way greedy}, i.e., it has the following format: either it is a greedy algorithm or a reverse greedy algorithm or a carefully built combination of the two. In this context, a greedy algorithm looks for agents with ``good'' types (e.g., low costs) and allocates them outcomes that are monotone in their type (e.g., non-increasing in the cost). Similarly, a reverse greedy algorithm  looks for agents with ``bad'' types (e.g., high costs) and allocates them monotone outcomes (e.g., non-increasing in the cost). The combination of the two approaches instead defines two sets: one containing ``good'' types (e.g., low costs) and the other comprised of ``bad'' types (e.g., high costs) so that each type in the good set is better than the types in the bad set. The algorithm guarantees that the outcomes allocated to the types in the good set are larger than the outcomes assigned to those in the bad set. At this point, the algorithm runs a reverse greedy algorithm on the  set of good types, and a greedy algorithm on the  set of bad types.

As in the binary outcome case, we have that interleaving between greedy and reverse greedy may actually occur. We provide a characterization of the cases in which this interleaving can occur: intuitively, this is either when the outcome of the agent is essentially revealed (i.e., it does not depend on the actions of other agents actions), as in the case of two outcomes; or when interleaving involves types that are clearly distinguishable from the rest of the domain, meaning that either they are associated to outcomes that are far away from the ones allocated to types in the rest of the domain, or the value of these types is sufficiently far away from the rest of the domain.

From a technical point of view, our characterization is based on the cycle monotonicity characterization provided by \citet{MOR22}. However, in order to work with negative cycles of arbitrary length, we rely on a contribution that 
could be of independent interest. Namely, we introduce some \emph{ironing} steps 
that lead to cycles with a ``canonical'' structure. Leveraging this structure, we then show that we can restrict 
without loss of generality to OSP mechanisms whose implementation is \emph{outcome-monotone}: i.e., queries to the agents are feasible only when all the possible outcomes resulting from taking one action are smaller than the possible outcomes resulting from taking an alternative action. 

We apply our results to show a lower bound on the approximation ratio of {any} OSP mechanism for the well studied problem of scheduling on selfish related machines. Agents have as private type (the inverse of) the speed of the machine they own and we are interested in minimizing the makespan (i.e., the latest completion time of a machine). It was known that the bound is in the interval $(\sqrt{n}, n]$ for $n$ selfish machines. In particular, the lower bound was proved by looking at cycles of length two whilst the upper bound can be simply obtained through an ascending (or descending) mechanism to find the speed of each machine \cite{MOR22}. Interestingly, $\sqrt{n}$ is known to be tight for domains of size three whereas the optimum is possible with two types only  \cite{MOR22}. Thanks to our characterization, we are able to show that with five or more types, each mechanism with approximation lower than $n$ needs to necessarily create a cycle of length four that has negative weight, and thus it is not OSP. Moreover, when the agent domains have four types only, we prove a lower bound of $n/2+1$ and give some indication that this is tight.
This contribution not only closes the gap left open in \cite{MOR22} but also shows (i) how to work with the seemingly unwieldy characterization of OSP mechanisms for general optimization problems; and, (ii) that to some extent, it is sufficient to focus on cycles of length four to obtain bounds on the approximation ratio of OSP mechanisms.

\section{Preliminaries and Notation}
We let $N$ denote a set of $n$ \emph{selfish agents} and $\out$ a set of feasible \emph{outcomes}. Each agent $i$ has a \emph{type} $t_i \in D_i$ that we assume to be her \emph{private knowledge}. We call $D_i$ the \emph{domain} of $i$. With $t_i(X) \in \mathbb{R}$ we denote the \emph{cost} of agent $i$ with type $t_i$ for the outcome $X \in \out$. When costs are negative, the agent has a {profit} from the solution, called \emph{valuation}. We will be working with costs and use that terminology accordingly but our results do not assume that costs are positive. 

A \emph{mechanism} interacts with the agents in $N$ to select an outcome $X \in \out$. Specifically, agent $i$ takes \emph{actions} (e.g., saying yes/no) that may signal to the mechanism a type $b_i \in D_i$ different from $t_i$ (e.g., saying yes could signal that the type has some properties that $b_i$ has but $t_i$ does not). 
We then say that agent $i$ takes \emph{actions compatible with (or according to) $b_i$} and call $b_i$ the presumed type. 

For a mechanism $\M$, $\M(\b)$ denotes the outcome returned by the mechanism when the agents take actions according to their presumed types $\b = (b_1, \ldots, b_n)$  (i.e., each agent $i$ takes actions compatible with the corresponding $b_i$).
This outcome is computed by a pair $(f,p)$, where $f = f(\b)=(f_1(\b),\ldots,f_n(\b))$ (termed \emph{social choice function} or algorithm) maps the actions taken by the agents according to $\b$ to a feasible solution in $\out$,
and $p(\b)=(p_1(\b),\ldots,p_n(\b)) \in \mathbb{R}^n$ maps the actions taken by the agents according to $\b$ to \emph{payments}. 

Each selfish agent is equipped with a \emph{quasi-linear utility function}, i.e., agent $i$ has utility function $u_i \colon D_i \times \out \rightarrow \mathbb{R}$: for $t_i \in D_i$ and for an outcome $X \in \out$ returned by a mechanism $\M$, $u_i(t_i, X)$ is the utility that agent $i$ has for the implementation of outcome $X$ when her type is $t_i$, i.e., 
$
u_i(t_i, \M(b_i, \bi)) = p_i(b_i, \bi) - t_i(f(b_i, \bi)).
$

A \emph{single-parameter} agent $i$ has as private information  a single real number $t_i$ and $t_i(X)$ can be expressed as $t_i \mathsf{w}_i(X)$ for some publicly known function $\mathsf{w}$; note that $\mathsf{w}_i(X)$ is a non-negative real number (and $\out=\mathbb{R}^n_{\geq 0}$).  Moreover, observe that the cost of player $i$ is independent on what the outcome $X$ prescribes for players different from $i$. We make no other assumption on $\out$. To simplify the notation, we will write $t_i f_i(\b)$ when we want to express the cost of a single-parameter agent $i$ of type $t_i$ for the output of social choice function $f$ on input the actions corresponding to a bid vector $\b$. 

\paragraph{Extensive-form Mechanisms and Obvious Strategyproofness.}
We here follow \citet{fpv21} and introduce the concept of implementation tree to formally define (deterministic) OSP mechanisms. As in prior work, our definition is built on the one by \citet{mackenzie} rather than the original definition by \citet{liosp}; see \cite{fpv21} for details.

An \emph{extensive-form mechanism} 
$\M$ is a triple $(f,p,\T)$ where, as from above, the pair $(f,p)$ determines the outcome of the mechanism, and $\T$ is a 
tree, called \emph{implementation tree}, such that:
\begin{itemize}[leftmargin=0.45cm, noitemsep, topsep=0pt]
	\item Every leaf $\ell$ of the tree is labeled with a possible outcome of the mechanism $(X(\ell), p(\ell))$, where $X(\ell) \in \out$ and $p(\ell) \in \mathbb{R}$;
	\item Each node $u$ in the implementation tree $\T$ defines the following:
    \begin{itemize}[leftmargin=0.6cm, noitemsep, topsep=0pt]
        \item An agent $i=i(u)$ to whom the mechanism makes a query. Each possible answer to this query leads to a different child of $u$.
        \item A subdomain $D^{(u)}=(D_i^{(u)}, D_{-i}^{(u)})$ containing all types that are \emph{compatible} with $u$, i.e., compatible with all the answers to the queries from the root down to node $u$.
        Specifically, the query at node $u$ defines a partition of the current domain of $i=i(u)$, $D_i^{(u)}$ into $k\geq 2$ subdomains, one for each of the $k$ children of node $u$. Thus, the domain of each of these children will have as the domain of $i$, the subdomain of $D_i^{(u)}$ corresponding to a different answer of $i$ at $u$, and an unchanged domain for the other agents.
\end{itemize}
\end{itemize}

Observe that, according to the definition above, for every profile $\b$ there is only one leaf $\ell = \ell(\b)$
such that $\b$ belongs to $D^{(\ell)}$.
Similarly, to each leaf $\ell$ there is at least a profile $\b$ that belongs to $D^{(\ell)}$.
For this reason, we say that $\M(\b) = (X(\ell), p(\ell))$.

Two profiles $\b$, $\b'$ are said to \emph{diverge} at a node $u$ of $\T$ if this node has two children $v, v'$ such that $\b \in D^{(v)}$, whereas $\b' \in D^{(v')}$. For every such node $u$, 
we say that 
$i(u)$ is the \emph{divergent agent} at $u$.

We are now ready to define obvious strategyproofness.
An extensive-form mechanism $\M$ is \emph{obviously strategy-proof (OSP)} if for every agent $i$ with real type $t_i$,
for every vertex $u$ such that $i = i(u)$, for every $\bi, \bi'$ (with $\bi'$ not necessarily different from $\bi$),
and for every $b_i \in D_i$, with $b_i \neq t_i$,
such that $(t_i, \bi)$ and $(b_i, \bi')$ are compatible with $u$, but diverge at $u$,
it holds that $u_i(t_i, \M(t_i, \bi)) \geq u_i(t_i,\M(b_i, \bi'))$.
Roughly speaking, an OSP 
mechanism requires that, at each time step
agent $i$ is asked to take a decision that depends on her type, the worst utility that
she can get if she behaves according to her true type
is at least the best utility she can get by behaving differently.
{We stress that our definition does not restrict the alternative behavior to be consistent with a fixed type. Indeed, as noted above, each leaf of the tree rooted in $u$, denoted $\T_u$, corresponds to a profile $\b = (b_i, \bi')$ compatible with $u$: then, our definition implies that the utility of $i$ in the leaves where she plays truthfully is at least as much as the utility in every other leaf of $\T_u$.}

\paragraph{Cycle-monotonicity Characterizes OSP Mechanisms.}
We next describe the main tools introduced by \citet{MOR22} to show that  OSP can be characterized by the absence of negative-weight cycles in a suitable weighted graph over the possible strategy profiles. 
We consider a mechanism $\M$ with implementation tree $\T$ for a social choice function $f$, and define: 
\begin{itemize}[leftmargin=0.45cm, noitemsep, topsep=0pt]
	\item \textbf{Separating Node:} A node $u$ in the implementation tree $\T$ is  $(\a,\b)$-separating for agent $i=i(u)$ if $\a$ and $\b$ are compatible with $u$ (that is, $\a,\b\in D^{(u)}$), and the two types $a_i$ and  $b_i$ belong to two different subdomains of the children of $u$ (thus implying $a_i\neq b_i$).
	\item \textbf{OSP-graph:} For every agent $i$, we define a 
	 directed weighted graph $\ver$ having a node for each profile in $D=\times_i D_i$. The graph 
	contains 
	edge $(\a, \b)$ if and only if $\T$ has some node $u$ which is $(\a,\b)$-separating for $i=i(u)$, and the weight of this edge is $w(\a, \b)= a_i(f_i(\b)- f_i(\a))$. Throughout the paper, we will denote with $\a \rightarrow \b$ an edge $(\a,\b) \in \ver$, and with $\a \rightsquigarrow \b$ a path among these two profiles in $\ver$.
	\item \textbf{OSP Cycle Monotonicity (OSP CMON):} 
	OSP cycle monotonicity (OSP CMON)  holds  if, for all $i$, the graph $\ver$ does not contain negative-weight cycles. Moreover, 
	OSP two-cycle monotonicity (OSP 2CMON) holds if the same is true when considering cycles of length two only, i.e., cycles with only two edges. Sometimes, we will simply say CMON and 2CMON below.
\end{itemize}

\begin{theorem}[\cite{MOR22,fpv21}]\label{thm:cmon}
	A mechanism with implementation tree $\T$ for a social function $f$ is OSP on finite domains if and only if  OSP CMON holds. Moreover, for any OSP mechanism $\M=(f, p, \T)$ where $\T$ is not a binary tree, there is an OSP mechanism $\M'=(f, p, \T')$ where $\T'$ is a binary tree.
\end{theorem}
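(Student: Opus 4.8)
The statement bundles a cycle-monotonicity characterization with a binarization claim, and I would establish them in that order.

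\medskip
\noindent\emph{The ``only if'' direction of the characterization.} Suppose $\M=(f,p,\T)$ is OSP and take any edge $\a\rightarrow\b$ of $\ver$, witnessed by an $(\a,\b)$-separating node $u$ with $i=i(u)$. The profiles $\a=(a_i,\a_{-i})$ and $\b=(b_i,\bi)$ are compatible with $u$ and diverge at $u$ --- this is exactly what being $(\a,\b)$-separating means --- so the OSP inequality at $u$ with true type $a_i$ gives $p_i(\a)-a_if_i(\a)\ge p_i(\b)-a_if_i(\b)$, i.e. $p_i(\b)-p_i(\a)\le a_i\big(f_i(\b)-f_i(\a)\big)=w(\a,\b)$. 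Summing this over a directed cycle of $\ver$ telescopes the payment terms to $0$, so every cycle has nonnegative weight and OSP CMON holds.

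\medskip
\noindent\emph{The converse.} In an extensive-form mechanism a payment is a label of a leaf, so for each agent $i$ I must produce a function $p_i$ on the leaves of $\T$; unwinding the OSP definition shows the constraints it must satisfy are precisely $p_i(\ell_2)-p_i(\ell_1)\le a_i\big(f_i(\ell_2)-f_i(\ell_1)\big)$, ranging over all separating nodes $u$ of $i$, all pairs of distinct children $v,v'$ of $u$, all leaves $\ell_1$ in the subtree of $v$ and $\ell_2$ in the subtree of $v'$, and all $a_i\in D_i^{(\ell_1)}$ (with $f_i(\ell):=\mathsf{w}_i(X(\ell))$). Over the finite set of leaves this system of difference constraints is solvable if and only if the induced weighted ``leaf-graph'' --- one node per leaf, an edge $\ell_1\rightarrow\ell_2$ of weight the least admissible $a_i\big(f_i(\ell_2)-f_i(\ell_1)\big)$ --- has no negative cycle. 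So it remains to show that a negative cycle of the leaf-graph would force one in $\ver$. Given such a cycle $\ell_0\rightarrow\cdots\rightarrow\ell_{m-1}\rightarrow\ell_0$, for the edge $\ell_j\rightarrow\ell_{j+1}$ fix a witnessing separating node $u_j$ and a type $\abb{j}\in D_i^{(\ell_j)}$ attaining its weight, and let $\bb{j}$ be any profile reaching $\ell_j$ whose $i$-th coordinate is $\abb{j}$ (possible since $D^{(\ell_j)}=D_i^{(\ell_j)}\times D_{-i}^{(\ell_j)}$). Since $\ell_j$ and $\ell_{j+1}$ lie below distinct children of $u_j$, so do $\abb{j}$ and $\abb{j+1}\in D_i^{(\ell_{j+1})}$, and both $\bb{j}$ and $\bb{j+1}$ reach $u_j$; hence $u_j$ is $(\bb{j},\bb{j+1})$-separating, $\bb{j}\rightarrow\bb{j+1}$ is an edge of $\ver$ of weight $\abb{j}\big(f_i(\ell_{j+1})-f_i(\ell_j)\big)$, which equals the leaf-graph weight. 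Taking indices cyclically yields a negative cycle of $\ver$, contradicting OSP CMON; so the system is feasible for every $i$, and with its solution as payments $(f,p,\T)$ is OSP.

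\medskip
\noindent\emph{The ``moreover'' part.} Given OSP $\M=(f,p,\T)$ with $\T$ non-binary, replace every node $u$ with $k\ge 3$ children $v_1,\dots,v_k$ --- inducing a partition $D_i^{(u)}=P_1\sqcup\cdots\sqcup P_k$, $i=i(u)$ --- by a binary gadget: a binary tree whose internal nodes each query $i$, splitting the running subdomain into two nonempty unions of blocks $P_j$, and whose $k$ leaves are identified with $v_1,\dots,v_k$; the original subtrees are then reattached. The result $\T'$ is binary, routes every profile to the same original leaf as $\T$, and so, re-using the leaf labels $p$, is a legitimate extensive-form mechanism for $f$. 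Crucially, $(f,p,\T')$ and $(f,p,\T)$ generate the same family of OSP constraints: inside the gadget for $u$ the separating-node constraints are exactly ``$p_i(\ell_2)-p_i(\ell_1)\le a_i(f_i(\ell_2)-f_i(\ell_1))$ for $\ell_1$ below some $v_j$, $\ell_2$ below some $v_{j'}$ with $j\ne j'$, and $a_i\in D_i^{(\ell_1)}$'', i.e. precisely the bundle contributed by $u$ in $\T$, and everything outside the gadgets is untouched (equivalently, $\verp=\ver$). Hence $(f,p,\T')$ is OSP because $(f,p,\T)$ is.

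\medskip
\noindent The delicate point throughout is the mismatch between $\ver$, whose vertices are full bid profiles, and payments, which may depend only on the reached leaf: one cannot simply take shortest-path distances in $\ver$ as payments, since these need not be constant over the profiles reaching a common leaf. The lifting argument is what bridges this, and its bookkeeping --- above all, checking that a separating node witnessing a leaf-graph edge still witnesses the corresponding edge of $\ver$ once its endpoints' $i$-coordinates have been fixed to the chosen types $\abb{j}$ --- is the step requiring care; the telescoping direction and the binarization are routine by comparison.
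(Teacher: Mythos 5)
The paper does not prove Theorem~\ref{thm:cmon}: it is stated as an import from \cite{MOR22,fpv21} and used as a black box. There is therefore no in-paper proof to compare against, and your proposal is best read as an independent reconstruction of the cited result.

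As such, the argument looks sound. You correctly isolate the only nontrivial point: the OSP constraints bind \emph{leaves}, while $\ver$ lives on full bid \emph{profiles}, so one cannot directly take shortest-path potentials in $\ver$ as payments. Your leaf-graph (with edge weight the minimum $a_i\bigl(f_i(\ell_2)-f_i(\ell_1)\bigr)$ over $a_i\in D_i^{(\ell_1)}$ and a separating node of $i$ between $\ell_1$ and $\ell_2$) is exactly the right object: feasibility of the difference system is equivalent to no negative cycle there, and your lifting step --- choose for each leaf-graph edge the minimizing type $\abb{j}$, pick any $\bb{j}\in D^{(\ell_j)}$ with $i$-coordinate $\abb{j}$ (possible since $D^{(\ell_j)}$ is a product), and check that the shared separating node still witnesses an edge $\bb{j}\to\bb{j+1}$ of $\ver$ with the same weight --- is correct, because $\abb{j}\in D_i^{(\ell_j)}$ and $\abb{j+1}\in D_i^{(\ell_{j+1})}$ lie in disjoint subdomains of the children of $u_j$, hence are distinct types, and both $\bb{j},\bb{j+1}$ are compatible with $u_j$. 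The only-if direction by telescoping payments around a cycle is standard. For the binarization, your observation that every ordered pair of blocks $P_j,P_{j'}$ is separated at exactly one gadget node, and that the induced constraints for leaves under $v_j$ and $v_{j'}$ are identical to those generated by $u$ in $\T$ (while no new constraints arise since gadget nodes never split a single $P_j$), establishes $\verp=\ver$ and hence preservation of OSP with the same $p$. One small caveat worth stating explicitly: the ``if'' direction necessarily reads the theorem as asserting \emph{existence} of some payment rule making $(f,\cdot,\T)$ OSP, since OSP CMON does not depend on $p$; your proof correctly adopts this reading, which is how the result is used throughout the paper.
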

Given the result above, we henceforth assume that the agents have finite domains and that  
the implementation trees of our mechanisms are binary. 

\paragraph{Scheduling Related Machines.}
We have a set of $m$ jobs to execute and the $n$ agents control related machines.
Agent $i$ has type $t_i$, a job-independent processing time per unit of job
(i.e., an execution speed $1/t_i$ for all the jobs). The set of feasible solutions is an allocation of the $m$ jobs to the $n$ machines. We let $\b$, with $b_i \in D_i$, denote a generic instance of the machine processing times. The social choice function $f$ {maps its input $\b$ to} 
a possible allocation $f(\b) = (f_1(\b), \ldots, f_n(\b))${,} 
where $f_i(\b)$ denotes the job load assigned to machine $i$. 
The cost that agent $i$ faces for the schedule $f(\b)$ {when her type is $b_i$} is $b_i(f(\b))=b_i \cdot f_i(\b)$. We focus on social choice functions $f^*$ {returning a schedule that minimizes} the \emph{makespan} {for every instance $\b$}, i.e., 
$
f^*(\b) \in \arg\min_{\x {\in \out}} {MS(\x,\b),}
$ 
{where {$MS(f,\b)=\max_{i=1}^n b_i(f_i(\b))$ denotes the makespan of solution $f(\b)$} according to machine processing times $\b$.} We say that $f$ is $\rho$-approximate if it returns a solution whose cost is at most $\rho$ times the optimum{, that is, for all instances $\b$, we have
	{$MS(f(\b),\b) \leq \rho \cdot MS(f^*(\b),\b).$}}

\section{The Characterization}
This section contains our characterization of OSP mechanisms. 
We first state a couple of technical ingredients needed for our proof: 
we show 
that any mechanism that is not CMON exhibits a certain antimonotone behaviour; then we define 
a necessary condition for OSP that does a lot of the leg work for the characterization (we defer the technical work to prove this important theorem to Section \ref{sec:ordered}).
We finally 
introduce the mechanism format and the proof of the characterization. 

\paragraph{Antimonotone Types.}
We begin with a structural property of negative-weight cycles. 

\begin{theorem}\label{thm:anatomy}
	Let $\M$ be a mechanism with implementation tree $\T$ and social choice function $f$ that is  OSP 2CMON but not OSP CMON. Then, for every negative-weight cycle $C$ in some OSP graph $\ver$ there are fours profiles, $\bb 0, \bb 1, \bb 2, \bb 3$, belonging to $C$ such that:
		(i) $\bbb 0 > \bbb 1 > \bbb 2 > \bbb 3$; 
		(ii) $f_i(\bb 1) > f_i(\bb 2)$; and,
		(iii) there is an edge between $\bb 1$ and $\bb 0$, an edge between $\bb 2$ and $\bb 3$ but no edge between $\bb 1$ and $\bb 2$ in $\ver$.
\end{theorem}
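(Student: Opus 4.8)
The plan is to argue entirely inside the OSP graph $\ver$. Fix a negative-weight cycle $C=(\cc 0\to\cc 1\to\dots\to\cc{k-1}\to\cc 0)$ of it (indices mod $k$) and abbreviate $x_j:=\ccc j$, $y_j:=f_i(\cc j)$, so that $w(C)=\sum_{j=0}^{k-1}x_j(y_{j+1}-y_j)<0$. First I would record the elementary consequences of OSP 2CMON. A node is $(\a,\b)$-separating exactly when it is $(\b,\a)$-separating, so the edges of $\ver$ come in antiparallel pairs, and 2CMON on such a pair gives $(a_i-b_i)(f_i(\b)-f_i(\a))\ge 0$; since a separating node forces $a_i\ne b_i$, this says that \emph{along every edge the larger $i$-type receives a weakly smaller outcome}. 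Call $(\cc a,\cc b)$ an \emph{antimonotone pair} when $x_a>x_b$ and $y_a>y_b$; by the previous sentence an antimonotone pair is never joined by an edge, which already yields the ``no edge between $\bb 1$ and $\bb 2$'' clause of (iii) as soon as $\bb 1,\bb 2$ are produced so as to satisfy (i)--(ii).

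Next I would show that a negative cycle always contains an antimonotone pair; this is the ``antimonotone behaviour'' in the statement. Suppose, for contradiction, that $x_a>x_b\Rightarrow y_a\le y_b$ for all vertices of $C$. List the vertices by nonincreasing $x$, breaking ties by nondecreasing $y$; the hypothesis makes the $y$-sequence nondecreasing along this list. Then the discrete Myerson/Rochet potential $\phi$, defined inductively by $\phi(\cc{\pi(\ell)})=\phi(\cc{\pi(\ell-1)})+x_{\pi(\ell)}(y_{\pi(\ell)}-y_{\pi(\ell-1)})$, satisfies $w(\cc a,\cc b)\ge\phi(\cc b)-\phi(\cc a)$ for every ordered pair of vertices of $C$ --- a one-line telescoping that uses $x_{\pi(\ell)}\le x_{\pi(\alpha)}$ whenever $\alpha\le\ell$ together with $y_{\pi(\ell)}-y_{\pi(\ell-1)}\ge 0$. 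Summing over the edges of $C$ gives $w(C)\ge 0$, a contradiction. (In particular $C$ has length at least four: in a cycle of length at most three every two vertices are joined by an edge, so the configuration has no antimonotone pair and the same argument gives weight $\ge 0$.)

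The core of the proof is then to promote a carefully chosen antimonotone pair to the four required profiles. I would aim to find an antimonotone pair $(\bb 1,\bb 2)$ on $C$ with the additional property that $\bb 1$ is \emph{not} a local maximum, and $\bb 2$ \emph{not} a local minimum, of the cyclic sequence $x_0,\dots,x_{k-1}$. Given such a pair: consecutive vertices of $C$ are joined by an edge, hence have distinct $i$-types, so local extrema along $C$ are strict, whence $\bb 1$ has a $C$-neighbour $\bb 0$ with $\bbb 0>\bbb 1$, and $\bb 2$ has a $C$-neighbour $\bb 3$ with $\bbb 3<\bbb 2$; both of these are edges of $C$, hence of $\ver$. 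Then $\bbb 0>\bbb 1>\bbb 2>\bbb 3$ and $f_i(\bb 1)>f_i(\bb 2)$, while $\{\bb 1,\bb 2\}$ is a non-edge by the first paragraph, and the four profiles are distinct and on $C$ (the non-edge forbids $\bb 1,\bb 2$ from being $C$-adjacent, so $\bb 0\ne\bb 2$ and $\bb 3\ne\bb 1$, and (i) separates the remaining pairs). This establishes (i)--(iii).

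What remains --- and this is the step I expect to be the main obstacle --- is to produce such a refined pair, where arbitrary cycle length really bites. The idea is an exchange argument: if an antimonotone pair $(\cc p,\cc q)$ has $\cc p$ a local $x$-maximum, then 2CMON on the two $C$-edges at $\cc p$ forces both of its neighbours to have $y$-value at least $y_p>y_q$, so replacing $\cc p$ by the neighbour lying on a descending $x$-run issuing from $\cc p$ yields another antimonotone pair in which $\cc p$ is no longer a local maximum --- unless both neighbours of $\cc p$ already lie at or below $x_q$, and symmetrically on the $\cc q$-side. Such ``pinned-at-the-$x$-extremes'' configurations are the delicate case; I would dispose of them by grouping the edge contributions of $C$ and invoking 2CMON repeatedly (one checks directly in the length-$4$ case that a pinned antimonotone pair forces $w(C)\ge 0$), and, in general, via an ironing step that rewrites $C$ as a cycle of canonical shape on which the extremal structure is transparent. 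I would isolate and prove this reduction first.
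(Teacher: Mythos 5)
Your first two paragraphs are sound and nicely self-contained. The observation that $\ver$ is symmetric so 2CMON forces $(a_i-b_i)(f_i(\b)-f_i(\a))\ge 0$ on every edge, hence antimonotone pairs are non-edges, is exactly right. The potential/telescoping argument showing that a negative cycle must contain an antimonotone pair is correct (sorting by nonincreasing $x$ with ties broken by nondecreasing $y$, so that $y$ becomes nondecreasing, and bounding $w(\a,\b)\ge\phi(\b)-\phi(\a)$). Your third paragraph is also correct as conditional reasoning: \emph{if} one has an antimonotone pair $(\bb 1,\bb 2)$ on $C$ with $\bb 1$ not a local $x$-maximum and $\bb 2$ not a local $x$-minimum of the cyclic $x$-sequence, then the appropriate $C$-neighbours supply $\bb 0$ and $\bb 3$, distinctness follows from the strict type chain, and (i)--(iii) hold.

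The genuine gap is the one you flag yourself in the last paragraph: you never prove that such a ``refined'' antimonotone pair exists, and this is where essentially all the work of the theorem lives. Your exchange heuristic (swap a local-max $\cc p$ for a $C$-neighbour $\cc{p'}$ with $x_q<x_{p'}<x_p$) does not by itself terminate in a refined pair --- after fixing the top end, the bottom end may still be a local minimum whose neighbours are all $\ge x_{p'}$, the ``partner'' changes at each step, and you explicitly defer the ``pinned-at-the-$x$-extremes'' configuration to an unproved ironing step, checking only the length-4 case in passing. Until that reduction is supplied, the argument does not establish the statement. For reference, the paper avoids reasoning about local extrema over the whole cycle altogether: it fixes a negative-weight edge $(\cc{neg1},\cc{neg2})$ of $C$, walks forward to the first positive-weight edge $(\cc{pos1},\cc{pos2})$, shows that some such negative-to-positive stretch must itself have negative weight, and then takes $\bb 1$ to be the profile $\cc{\max}$ of largest type among those with a strictly decreasing outgoing edge on that stretch, $\bb 0$ its $C$-successor, $\bb 2=\cc{pos1}$, $\bb 3=\cc{pos2}$; when $f_i(\cc{pos2})<f_i(\cc{neg1})$ it iterates by adjoining the next positive edge and derives a contradiction in the mixed case, so the process terminates. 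Your plan is a plausible alternative route, but as written it postpones rather than solves the hard step.
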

\begin{proof}
	Let $C$ be a negative-weight cycle in some \vgraph\ $\ver$. Since $C$ has negative-weight then there must be an edge $(\cc{neg1}, \cc{neg2})$ therein of negative-weight. This means that $k_1=f_i(\cc{neg1})>f_i(\cc{neg2})=k_2$ and, by OSP 2CMON, that $\ccc{neg1} < \ccc{neg2}$. Let $(\cc{pos1}, \cc{pos2})$ be the \emph{first} positive-weight edge following $(\cc{neg1}, \cc{neq2})$ in $C$ so that, by definition of edge weight and OSP 2CMON, $\ccc{pos1} > \ccc{pos2}$. 
	Note that such an edge must exist since the cycle will need to go from a profile for which $i$ has outcome $k_2$  (i.e., $\cc{neg2}$) to one where $f_i$ is $k_1$. For $C$ to be negative then it must be the case that the weight of 
	\begin{equation}\label{eq:negativepart}
		\cc{neg1} \rightarrow \cc{neg2}  \rightsquigarrow \cc{pos1} \rightarrow \cc{pos2}
	\end{equation}
	is negative. (If we cannot find no such four profiles in $C$ then for each negative-weight edge in $C$ there is a sequence of cycle edges following it that weigh more, a contraction.) 
	
	Let $\mathcal{P}$ denote the path 
	\[
	\cc{1}=\cc{neg1} \rightarrow \cc{2}=\cc{neg2}  \rightarrow \cc{3} \rightarrow \cdots \rightarrow \cc{\ell-1} \rightarrow \cc{\ell} = \cc{pos1}
	\]
	that is, the path between $\cc{neg1}$ and $\cc{pos1}$ (extremes included). Note that for convenience we are naming profiles in $C$ (including $\cc{neg1}, \cc{neg2}$ and $\cc{pos1}$) with consecutive indices. Similarly, we let $\cc{\ell+1} = \cc{pos2}$. Below, we also let $k_j$ denote a shorthand for $f_i(\cc{j})$.
	%
	%
	
	By construction, there cannot be positive-weight edges in $\mathcal{P}$ implying that $k_{j+1} \leq k_j < k_1$ for all $(\cc{j}, \cc{j+1}) \in \mathcal{P}$. We let $\mathcal{P}_{<0}=\left\{\cc{j} | k_{j+1}<k_j \right\}$, that is, all the profiles $\cc{j}$ whose outgoing edge in $\mathcal{P}$ has negative weight. By  \eqref{eq:negativepart} we then have
	\begin{align*}
		- (k_1-k_{\ell}) \ccc{\max} + (k_{\ell+1}-k_{\ell}) \ccc{pos1} & <\sum_{\cc{j} \in \mathcal{P}_{<0}} \ccc{j} \left(k_{j+1}-k_{j}\right) + (k_{\ell+1}-k_{\ell}) \ccc{pos1} \\ 
		& =  \sum_{j=1}^{\ell-1} \ccc{j} \left(k_{j+1}-k_{j}\right) + (k_{\ell+1}-k_{\ell}) \ccc{pos1} < 0,
	\end{align*}
	where $\cc{\max}$ is the profile in $\mathcal{P}_{<0}$ with maximum value of $\ccc{\max}$ (clearly, $\cc{pos1} \not\in \mathcal{P}_{<0}$). 
	%
	We now consider two cases depending upon the value of $f_i(\cc{pos2})$; recall that $k_{\ell+1}=f_i(\cc{pos2}) > f_i(\cc{pos1})=k_\ell$ since the weight of $(\cc{pos1}, \cc{pos2})$ is positive and that by construction $k_1 > k_\ell$. 
	
	If $f_i(\cc{pos2}) \geq f_i(\cc{neg1})$ (that is, $k_{\ell+1} \geq k_1$) then the proof concludes by setting $\bb{1} = \cc{\max}$, $\bb{0}$ to the profile following $\bb{1}$ in $\mathcal{P}$ (by definition of $\cc{\max}$, we have $\bbb{0}>\bbb{1}$), $\bb{2}=\cc{pos1}$ and $\bb{3}=\cc{pos2}$. In fact, it is the case that for some $1\leq j \leq \ell-1$, $f_i(\bb{1})=f_i(\cc{\max}) = k_j > k_{j+1} \geq f_i(\cc{pos1}) = f_i(\bb{2})$. Recall that by OSP 2CMON there is no edge indeed between $\bb{1}$ and $\bb{2}$.
		
	Assume then that $k_{\ell+1} < k_1$. We now let $(\cc{pos3}, \cc{pos4})$ denote the subsequent positive-weight edge in $C$; again this must exist since $k_{\ell+1}<k_1$. We are going to consider the weight of the path 
	\begin{equation}\label{eq:negativepart:2}
		\cc{neg1} \rightarrow \cc{neg2}  \rightsquigarrow \cc{pos1} \rightarrow \cc{pos2}  \rightsquigarrow \cc{pos3} \rightarrow \cc{pos4}
	\end{equation}
	in $C$. If this were non-negative then the rest of $C$ must have negative-weight and we can repeat the argument there (specifically, it is not possible that in the remainder of $C$ all  paths like \eqref{eq:negativepart} and \eqref{eq:negativepart:2} have non-negative weight). We can therefore assume that \eqref{eq:negativepart:2} is negative. 
	
	Similarly to above, we let $\mathcal{P}'$ denote the path in $C$ between $\cc{pos2}$ and $\cc{pos3}$ (and rename profiles) as follows
	$$
	\cc{pos2} = \cc{\ell + 1} \rightarrow \cc{\ell+2}  \rightarrow \cdots \rightarrow \cc{\kappa-1} \rightarrow \cc{\kappa} = \cc{pos3}.
	$$
	Following this formalism, we let $\cc{\kappa+1}=\cc{pos4}$; as above, $k_j$ is a shorthand for $f_i(\cc{j})$ for $\cc{j}$ in $\mathcal{P'}$ and $\mathcal{P}'_{<0}=\left\{\cc{j} | \ell+1 \leq j \leq \kappa \wedge w(\cc{j},\cc{j+1}) < 0\right\}$. We again have $k_{j+1} \leq k_j$, for $j \geq \ell+1$. Then 
	\begin{equation}\label{eq:twopos}
		\begin{split}
			& - (k_1-k_{\ell}) \ccc{\max} - (k_{\ell+1}-k_{\kappa}) \ccc{\max'} + (k_{\ell+1}-k_{\ell}) \ccc{pos1} + (k_{\kappa+1}-k_{\kappa}) \ccc{pos3} < \\ & \sum_{\cc{j} \in \mathcal{P}_{<0} \cup \mathcal{P}'_{<0}} \ccc{j} \left(k_{j+1}-k_{j}\right)  + (k_{\ell+1}-k_{\ell}) \ccc{pos1}+ (k_{\kappa+1}-k_{\kappa}) \ccc{pos3} =\\ 
			& \sum_{j=1, j \neq \ell}^{\kappa-1} \ccc{j} \left(k_{j+1} - k_{j}\right) + (k_{\ell+1}-k_{\ell}) \ccc{pos1}+ (k_{\kappa+1}-k_{\kappa}) \ccc{pos3} < 0,
		\end{split}
	\end{equation}
	where $\cc{\max'}$ is the profile in $\mathcal{P}'_{<0}$ with maximum value of $\ccc{\max'}$ (or a dummy profile with $\ccc{\max'}=0$ if $\mathcal{P}'_{<0}=\emptyset$). 
	Firstly note that, as above, $f_i(\cc{\max}) = k_j > k_{j+1} \geq f_i(\cc{pos1})$ for some $j < \ell$. Then if $\ccc{\max}$ were to be larger than $\ccc{pos1}$ then we could conclude the proof with the same definitions of $\bb{0}$, $\bb{1}$, $\bb{2}$ and $\bb{3}$ as in the case of the path \eqref{eq:negativepart}. Similarly, we have $f_i(\cc{\max'}) = k_j > k_{j+1} \geq f_i(\cc{pos3})$ for some $\kappa> j > \ell$ and, again, should it be that $\ccc{\max'}>\ccc{pos3}$ then we could similarly conclude the proof by choosing suitable profiles in $\mathcal{P'}$. We are 
	left with 
	\begin{equation}\label{eq:mixcase}
	\ccc{\max}\leq \ccc{pos1} \wedge \ccc{\max'}\leq \ccc{pos3}.
	\end{equation}
	By leveraging the signs of the factors of the types in \eqref{eq:twopos} and using \eqref{eq:mixcase} above, we obtain: 
	{\allowdisplaybreaks
	\begin{align*}
		- (k_1-k_{\ell+1}) \ccc{\max} + (k_{\kappa+1}-k_{\ell+1}) \ccc{pos3} & < 0,\\
		- (k_{\ell+1}-k_{\kappa+1}) \ccc{\max'} + (k_{\ell+1}-k_1) \ccc{pos1} & < 0.
	\end{align*}
}
	If $k_{\kappa+1} \geq k_1$, by \eqref{eq:mixcase}, we then get the contradiction that 
	\begin{align*}
		\ccc{\max} > \ccc{pos3} \geq \ccc{\max'} > \ccc{pos1} \geq \ccc{\max}.
	\end{align*}
	
	
	
	If instead $k_{\kappa+1} < k_1$ then we can repeat the argument and add the positive-weight edge following $(\cc{pos3}, \cc{pos4})$ in $C$. Since the cycle needs to go back to a node with outcome at least $k_1$ this process will terminate and will identify the needed pair of types with antimonotone outcomes (which, by OSP 2CMON, are not connected by an edge) with their neighbors in $C$.
\end{proof}

\begin{definition}[Antimonotone Types and Witness Profiles]
	We say that two types like $\bbb{1}$ and $\bbb{2}$ in the statement of the theorem above are \emph{antimonotone}  and call the profiles $\bb{1}$ and $\bb{2}$ \emph{witnesses} of antimonotonicity of $\bbb{1}$ and $\bbb{2}$.
\end{definition}

\paragraph{A Necessary Condition.}

Let us start by providing a useful definition.
\renewcommand{\L}{\mathcal{L}}

\begin{definition}[Monotone Labels of Types/Agents]
	Let $\M=(f,p, \T)$ be an extensive-form mechanism. For a generic $t \in D_i^{(u)}$, $u$ being a node of the implementation tree $\T$, we define the set of outcomes that $i$ can be allocated for $t$ in the subtree rooted at $u$, also called \emph{label} of $t$ for $i$ at $u$, as
	$
	\L_i^{(u)}(t)=\{f_i(t, \bi) \; | \; \bi \in D_{-i}^{(u)}\}.
	$ 
	Moreover, given two types $t, t' \in D_i^{(u)}$, $t> t'$, we let
	$
	\L_i^{(u)}(t) \preceq \L_i^{(u)}(t')$ 
 denote the case in which $x \leq y 
	\; \forall x \in \L_i^{(u)}(t)$ and 
	$y \in \L_i^{(u)}(t').
	$  
	If this is true for all types in the current domain at $u$, we say that the labels of $i$ are \emph{monotone} at $u$.
\end{definition} 
In words, a label of a type at a certain node $u$ of the implementation tree contains all the outcomes that the divergent agent can be allocated in the subtree $\T_u$ rooted at $u$. Note that if the labels at a certain $u$ are monotone for two types, then these cannot be antimonotone. If the labels at $u$ are monotone for agent $i$, it means that she has no antimonotone types in $D_i^{(u)}$.

%

The next notion introduces a particular query that not only does not break 2CMON but also keeps the types in the two parts ordered. Recall that 
we can 
restrict to binary implementation trees. 

\begin{definition}[Ordered Query]
	Let $\M=(f,p, \T)$ be an extensive-form mechanism. Let $u \in \T$ be a node where $i=i(u)$ and $D_i^{(u)}$ is separated into $L$ and $R$. We say that the query at $u$ is \emph{ordered} if for all $l,r \in D_i^{(u)}$ with $l$ in $L$ and $r$ in $R$, $l < r$ and $\L_i^{(u)}(r) \preceq \L_i^{(u)}(l) $.
\end{definition}

We say that a mechanism is ordered if it only makes ordered queries. Next result shows that we can focus on ordered mechanisms without loss of generality as long as we are interested in OSP. 
\begin{theorem}\label{thm:monowlog}
		Any OSP mechanism $\M=(f, p, \T)$ can be transformed into an equivalent OSP mechanism $\M'=(f, p, \T')$ where all queries in $\T'$ are ordered.
\end{theorem}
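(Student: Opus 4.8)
The overall strategy is to show that any query that is *not* ordered can be replaced locally by a (sequence of) ordered queries, without changing the social choice function $f$ or the payments $p$, and without destroying the OSP property. Since by Theorem \ref{thm:cmon} we may assume $\T$ is binary, a node $u$ with $i=i(u)$ splits $D_i^{(u)}$ into two parts $L$ and $R$. Two things can go wrong relative to orderedness: the two parts may not be "value-separated" (there exist $l \in L$, $r \in R$ with $l > r$, i.e. the partition interleaves the domain), or they may not be "label-separated" (there exist $l \in L$, $r \in R$ with some outcome in $\L_i^{(u)}(l)$ strictly below some outcome in $\L_i^{(u)}(r)$, so the labels are not $\preceq$-ordered in the required direction). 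The key observation to exploit is that $\M$ is OSP, hence OSP CMON holds (Theorem \ref{thm:cmon}), hence OSP 2CMON holds; and the node $u$ being $(l,r)$-separating for every $l \in L, r \in R$ forces, via 2CMON, strong constraints on how the outcomes $f_i(l,\cdot)$ and $f_i(r,\cdot)$ compare. Concretely, if $l$ and $r$ are separated at $u$ then both edges $(l,\bi) \to (r,\bi')$ and $(r,\bi') \to (l,\bi)$ exist in $\ver$ for all compatible $\bi,\bi'$, and nonnegativity of the 2-cycle weight $l(f_i(r,\bi') - f_i(l,\bi)) + r(f_i(l,\bi) - f_i(r,\bi')) \ge 0$ pins down the sign of $f_i(l,\bi) - f_i(r,\bi')$ as a function of the sign of $l - r$. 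This is exactly the mechanism by which 2CMON already guarantees label-monotonicity between any two types separated at a common node.

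The first step is therefore to record this 2CMON consequence precisely: for every pair $l \in L$, $r \in R$, either $\L_i^{(u)}(l) \preceq \L_i^{(u)}(r)$ or $\L_i^{(u)}(r) \preceq \L_i^{(u)}(l)$ (or both, when the labels are singletons of the same value), and the direction is dictated by whether $l < r$ or $l > r$. In particular the labels at $u$, restricted to $L \cup R = D_i^{(u)}$, are totally preordered by $\preceq$ in a way consistent with the natural order on the reals. The second step is to use this to *re-partition*: instead of the given $(L,R)$ split, replace the query at $u$ by the "threshold" query that puts all types with the $\preceq$-larger labels on one side and all types with $\preceq$-smaller labels on the other, breaking ties by value so that the lower-value types go with the $\preceq$-larger-label side (this matches the definition of an ordered query: $l < r$ and $\L_i^{(u)}(r) \preceq \L_i^{(u)}(l)$). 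We must argue this re-partition is still a valid query (it partitions $D_i^{(u)}$ into two nonempty parts — nonemptiness follows because the original split was nontrivial and the new split refines/coarsens along the same preorder) and, crucially, that the resulting tree still computes the same $(f,p)$: the subtrees hanging below $u$ get re-attached to the new children according to which new part each old leaf-profile falls into, and since $f,p$ are defined per leaf this is just a relabeling. The third step is to iterate: a single ordered query may not suffice to realize the effect of the original (possibly wildly interleaved) query in one shot, so we may need to replace $u$ by a short chain of ordered queries that together induce the same refinement of $D_i^{(u)}$ that the rest of $\T$ needs; doing this top-down over $\T$ terminates because each replacement only reshuffles a single node's branching among types whose relative $\preceq$-order is already fixed.

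The main obstacle — and the place that needs the most care — is verifying that the surgery preserves OSP, i.e. that the new OSP-graph $\mathcal{O}_i^{\T'}$ has no negative cycle and, more subtly, that we have not *introduced* new separating nodes (hence new edges) that could create one. Re-partitioning at $u$ can change which pairs of types are separated at $u$: two types that were on the same side before may now be separated, adding edges to the OSP-graph. Here one leans on the 2CMON-derived total preorder again: the only new edges are between types whose labels are already $\preceq$-comparable in the "correct" direction, so the corresponding 2-cycles are nonnegative by the same computation that governed the original mechanism, and longer cycles are controlled because the canonical/ironing machinery (and Theorem \ref{thm:anatomy}) shows that a negative cycle would force a pair of *antimonotone* types with witnesses not connected by an edge — but orderedness of every query means labels are monotone at every separating node, so no antimonotone pair can arise. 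Thus one shows directly that $\M'$ satisfies OSP CMON and invokes Theorem \ref{thm:cmon} to conclude $\M'$ is OSP. Care is also needed on the boundary case where $D_i^{(u)}$ does not admit a clean threshold (e.g. a value $l$ and a value $r$ with $l<r$ but $\L_i^{(u)}(l) \preceq \L_i^{(u)}(r)$ strictly, which would *violate* orderedness if both are placed across the cut) — one must check that 2CMON actually rules this configuration out whenever $l$ and $r$ are genuinely separated somewhere in $\T$, so that the ordered re-partition is always available.
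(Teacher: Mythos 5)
Your top-level strategy matches the paper's: take a non-ordered node $u$ (binary by Theorem~\ref{thm:cmon}), replace it by a chain of ordered queries that re-attaches the same subtrees $\T_L,\T_R$, and argue this preserves both $(f,p)$ and OSP. You also correctly observe that 2CMON already forces, for any $l\in L$, $r\in R$ separated at $u$, the ``correct'' direction of the label comparison (the $(l,r)$ and $(r,l)$ edges exist, so the two-cycle forces $\L_i^{(u)}(r)\preceq\L_i^{(u)}(l)$ whenever $l<r$). The paper's construction is the concrete version of your ``chain'': it decomposes $D_i^{(u)}$ into the maximal alternating runs $L_1,R_1,L_2,R_2,\dots$ and peels them off one at a time, attaching $\T_L$ or $\T_R$ to each; your sketch never pins this down, but that part is recoverable.

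The genuine gap is in your OSP-preservation argument. You assert that ``orderedness of every query means labels are monotone at every separating node, so no antimonotone pair can arise,'' i.e., that an ordered mechanism automatically satisfies CMON. This is false, and it is false in a way the paper itself relies on: Theorem~\ref{thm:main} characterizes OSP as ``three-way greedy,'' and three-way greedy mechanisms are ordered \emph{and can still contain antimonotone types} --- OSP only requires that any pair of pivots for such types be \emph{extreme}. Orderedness constrains the label ordering between the two parts $L$ and $R$ split at a node $u$ with $i(u)=i$; it says nothing about the relative labels of two types of $i$ that sit on the same side of every $i$-query so far and whose witnessing profiles get separated at a node $v$ with $i(v)\neq i$. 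That is exactly the setup in which antimonotone types appear, and it survives the ordering surgery. So you cannot conclude OSP-CMON for $\M'$ from orderedness alone.

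What the paper does instead --- and what your proof needs --- is a \emph{retraction} argument: assume a negative cycle $C$ in $\verp$ with fewest edges; since the only new edges of $\verp$ connect types that were in the same part $P\in\{L,R\}$ at $u$ but are now separated, show each such edge can be eliminated while keeping a negative cycle and never increasing the edge count, eventually producing a negative cycle that uses only edges of $\ver$, contradicting OSP of $\M$. This elimination is precisely what the ironing lemmas (positive path expansion, Lemma~\ref{lem:bypass}; path reduction, Corollary~\ref{cor:angle_redux}; path replacement, Corollary~\ref{cor:angle_replace}) are for, with a careful case split on the sign of $f_i(\x)-f_i(\y)$ for the new edge $(\x,\y)$, and an auxiliary argument ruling out intermediate profiles in certain $L_q$/$R_q$ buckets. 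You invoke ``the ironing machinery'' by name but then route the argument through the false claim about antimonotonicity, so the proof as written does not establish that $\M'$ is OSP.
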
 

\paragraph{The Mechanism Format.}
We begin by defining the concept of pivots for a pair of types. Intuitively, $\bbb{u}$ and $\bbb{d}$ are  pivots for $\bbb{1}$ and $\bbb{2}$ at $u \in \T$ if they have been separated from these types before $u$, and they have labels that are respectively at least as large as the larger label of $\bbb{1}$ and $\bbb{2}$ at $u$, and at least as small as the smaller label of $\bbb{1}$ and $\bbb{2}$ at $u$.

\begin{definition}[Pivots]\label{def:pivot}
Given a node $u$ and a pair of types $\bbb{1}, \bbb{2} \in D_i^{(u)}$,  
we say that types $\bbb{u}$, $\bbb{d}$ are \emph{pivots} for $\bbb{1}$ and $\bbb{2}$ if 
\begin{itemize}[topsep=0pt,leftmargin=0.45cm]
	\item they are separated from $\bbb{1}, \bbb{2}$ respectively at nodes $v^{(u)}, v^{(d)} \in \T$ that are ancestors of $u$ with $i(v^{(u)})=i(v^{(d)})=i$;
	\item for $y \in \L_i^{(u)}(\bbb{1})$ and $x \in \L_i^{(u)}(\bbb{2})$ with $y>x$, there are $z \in \L_i^{(v^{(u)})}(\bbb{u})$ with $z \geq y$, and $q \in \L_i^{(v^{(d)})}(\bbb{d})$ with $q \leq x$.	
\end{itemize}
\end{definition}
For a label $x \in \L^{(u)}(t)$ of some type $t$ at node $u \in \T$, we call $\bi$ a $x$-buddy for $t$ at $u$ if $\bi \in D_{-i}^{(u)}$ and $f_i(t, \bi)=x$.  We can now define extreme pivots.

\begin{definition}[Extreme Pivots]
Given a node $u$ and a pair of types $\bbb{1}, \bbb{2} \in D_i^{(u)}$, 
we say that types $\bbb{u}$, $\bbb{d}$ are \emph{extreme} pivots if they are pivots for $\bbb{1}$ and $\bbb{2}$ and, given $z \geq y > x \geq q$ as in Definition \ref{def:pivot}, we have that
	\begin{equation}\label{eq:twopivotcycle}
		w(P_1^{(out)}) + w(	P_2^{(in)})+w(	P_2^{(out)})+w(P_1^{(in)}) \geq 0, 
	\end{equation}
	for all paths $P_1^{(out)}$, $P_2^{(in)}$, $P_2^{(out)}$, $P_1^{(in)}$ in $\ver$ defined as follows:
	\begin{align*}
		P_1^{(out)} & := (\bbb{1}, \vect{b}^{(y)}_{-i}) \rightarrow \ab{1}   \rightsquigarrow \ab{k} \rightarrow  (\bbb{d}, \vect{b}^{(q)}_{-i})  \\
		P_2^{(in)} & := (\bbb{d}, \vect{b}^{(q)}_{-i}) \rightarrow \cc{1}    \rightsquigarrow  \cc{\ell}   \rightarrow  (\bbb{2}, \vect{b}^{(x)}_{-i})  \\
		P_2^{(out)} & := (\bbb{2}, \vect{b}^{(x)}_{-i}) \rightarrow \cc{\ell+1}  \rightsquigarrow \cc{\ell+h}   \rightarrow (\bbb{u}, \vect{b}^{(z)}_{-i}) \\
		P_1^{(in)} & := (\bbb{u}, \vect{b}^{(z)}_{-i}) \rightarrow \ab{k+1}  \rightsquigarrow \ab{k+g} \rightarrow (\bbb{1}, \vect{b}^{(y)}_{-i}),
	\end{align*}
	where $\vect{b}^{(y)}_{-i}$ is a $y$-buddy for $\bbb{1}$ at $u$, $\vect{b}^{(x)}_{-i}$ is an $x$-buddy for $\bbb{2}$ at $u$, $\vect{b}^{(q)}_{-i}$ is a $z$-buddy for $\bbb{d}$ at $v^{(d)}$,  $\vect{b}^{(z)}_{-i}$ is a $q$-buddy for $\bbb{u}$ at $v^{(u)}$, $\ab{1}, \ldots, \ab{k+g}$ and $\cc{1}, \ldots, \cc{\ell +h} $ are profiles in $\ver$.
\end{definition}

We will highlight below that property \eqref{eq:twopivotcycle} essentially implies that extreme pivots either have very small types or very high types with respect to $\bbb{1}$ and $\bbb{2}$ and outcomes $z \geq y > x \geq q$.

We are now ready to provide the definition of the mechanism format that 
characterizes OSP. 
\begin{definition}[Three-way greedy mechanism]
\label{def:3way}
A mechanism $\M=(f, p, \T)$ is three-way greedy if all its queries are ordered and for all internal nodes $u \in \T$ such that $i(u) \neq i$ and $\bbb{1}$ and $\bbb{2}$ in $D_i^{(u)}$ are antimonotone, it holds that any pair of pivots $\bbb{u}$ and $\bbb{d}$ are extreme.
\end{definition}

To make sense of the notion (and the name) of three-way greedy mechanisms, we now explore few of their properties. 
These mechanisms have essentially three characteristics.
Firstly, they only make ordered queries -- this is clearly linked to Theorem \ref{thm:monowlog}.
Secondly, they highlight a 4-point structure for the cycles of the \vgraph\ we should care about (while such a property was suggested by Theorem \ref{thm:anatomy}, Definition~\ref{def:3way} clarifies the profiles we need to be concerned with).

Before we move to the third feature of three-way greedy mechanisms, we provide some simple observations that build upon the 4-point structure of cycles. 
Firstly, we observe that three-way greedy mechanisms extend the two-way greedy mechanisms defined by \citet{fpv21}.
To this aim, given $\bbb{1}$ and $\bbb{2}$, with $\bbb{1}>\bbb{2}$, we name two pivots $\bbb{u}$ and $\bbb{d}$ for $\bbb{1}$ and $\bbb{2}$, \emph{anchors} if $z = y$ and $x = q$, with $x$, $y$, $z$, and $q$ being as in Definition \ref{def:pivot}.
Definition~\ref{def:3way} implies the following property.\footnote{Omitted and sketched proofs can be found in the appendix.}
\begin{observation}[No two anchors property]
\label{obs:no_anchors}
	If a mechanism $\M=(f, p, \T)$ is three-way greedy, then for each node $u \in \T$ with $i(u) \neq i$ such that a pair of antimonotone types $\bbb{1}$ and $\bbb{2}$ belong to $D_i^{(u)}$, there are no anchors for $\bbb{1}$ and $\bbb{2}$.
\end{observation}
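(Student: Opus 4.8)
The plan is to derive a contradiction from the assumption that there exist anchors $\bbb{u}$ and $\bbb{d}$ for a pair of antimonotone types $\bbb{1}$ and $\bbb{2}$ at some node $u$ with $i(u)\neq i$. By Definition~\ref{def:3way}, since the mechanism is three-way greedy, these anchors must in fact be \emph{extreme} pivots, i.e., they satisfy inequality~\eqref{eq:twopivotcycle}. The idea is that in the anchor case we have $z=y$ and $x=q$, so the four ``connector'' edges that close the cycle in the definition of extreme pivots --- namely the edges from $(\bbb{1},\vect{b}^{(y)}_{-i})$ to $\ab{1}$-type endpoints, the edge into $(\bbb{d},\vect{b}^{(q)}_{-i})$, etc. --- become degenerate or can be chosen to be trivial: the buddies align so that one can route the paths $P_1^{(out)}, P_2^{(in)}, P_2^{(out)}, P_1^{(in)}$ directly, making \eqref{eq:twopivotcycle} collapse to a statement about a short cycle on the profiles $(\bbb{1},\vect{b}^{(y)}_{-i})$, $(\bbb{2},\vect{b}^{(x)}_{-i})$ and the buddies for the anchors, which by construction have the \emph{same} labels $y$ and $x$.

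Concretely, first I would instantiate the $x$-, $y$-, $z$-, $q$-buddies using $z=y$ and $x=q$: pick $\vect{b}^{(y)}_{-i}$ a $y$-buddy for $\bbb{1}$ at $u$, $\vect{b}^{(x)}_{-i}$ an $x$-buddy for $\bbb{2}$ at $u$, and note that because $z=y$ the $q$-buddy for $\bbb{u}$ at $v^{(u)}$ is actually a $y$-buddy (recall the swapped labels in the definition), and similarly the $z$-buddy for $\bbb{d}$ at $v^{(d)}$ is an $x$-buddy. Then I would choose the paths $P_1^{(out)}$ and $P_1^{(in)}$ to be the single edges $(\bbb{1},\vect{b}^{(y)}_{-i}) \rightarrow (\bbb{d},\vect{b}^{(q)}_{-i})$ and $(\bbb{u},\vect{b}^{(z)}_{-i}) \rightarrow (\bbb{1},\vect{b}^{(y)}_{-i})$ --- these edges exist because $\bbb{1}$ is separated from $\bbb{u}$ (resp.\ $\bbb{d}$) at a separating ancestor of $u$, and because $f_i$ on these buddies is $y$ on both endpoints of the first edge, the weight is $\bbb{1}(f_i-f_i)=0$; symmetrically $P_2^{(out)},P_2^{(in)}$ can be taken to be single zero-weight edges through $(\bbb{2},\vect{b}^{(x)}_{-i})$.

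Having made all four paths zero-weight, inequality~\eqref{eq:twopivotcycle} would then be automatically satisfied with equality, so it does \emph{not} immediately give a contradiction --- which tells me I have the logic backwards and the real argument must go the other way: anchors being extreme means the extreme-pivot inequality holds for \emph{all} admissible path choices, and I should exhibit a choice of paths for which it \emph{fails}, using the fact that $\bbb{1},\bbb{2}$ are antimonotone (so $f_i(\bb{1})>f_i(\bb{2})$ on the relevant witnesses) and that there is a negative cycle somewhere. So the corrected plan is: since $\bbb{1}$ and $\bbb{2}$ are antimonotone, Theorem~\ref{thm:anatomy} hands us witness profiles and a genuinely negative sub-path; I would splice that negative sub-path into one of the four path slots (say $P_2^{(in)}$ or $P_2^{(out)}$), while keeping the other three slots as the zero-weight single edges made possible by the anchor condition $z=y$, $x=q$. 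The total weight of the closed walk in \eqref{eq:twopivotcycle} then equals the weight of the negative sub-path, which is $<0$, contradicting the extreme-pivot inequality and hence Definition~\ref{def:3way}.

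The main obstacle I expect is verifying that the edges I want to use actually exist in $\ver$ --- i.e.\ that the relevant nodes are separating for $i$ on the right pairs of types, and that the buddies can be chosen consistently across the nodes $u$, $v^{(u)}$, $v^{(d)}$ so that the concatenation is a legitimate closed walk in $\ver$. In particular I need the anchor separation ancestors $v^{(u)},v^{(d)}$ to still witness the needed edges after the label-matching $z=y$, $x=q$ is imposed, and I need the antimonotonicity witnesses from Theorem~\ref{thm:anatomy} to be reachable within $D_{-i}^{(u)}$ so the negative sub-path lives in the same graph $\ver$. Stitching these pieces into one closed walk whose weight is exactly the negative quantity, with everything else telescoping to zero because of $z=y$ and $x=q$, is the crux; once that bookkeeping is done, the contradiction with \eqref{eq:twopivotcycle} is immediate.
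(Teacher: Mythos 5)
Your proposal starts on exactly the right track --- pick the four profiles $(\bbb{1},\vect{b}^{(y)}_{-i})$, $(\bbb{d},\vect{b}^{(q)}_{-i})$, $(\bbb{2},\vect{b}^{(x)}_{-i})$, $(\bbb{u},\vect{b}^{(z)}_{-i})$ and close the cycle with the four single edges $P_1^{(out)}, P_2^{(in)}, P_2^{(out)}, P_1^{(in)}$, all of which exist in $\ver$ by the pivot definition. But the computation that derails you is wrong: it is impossible for all four edges to have zero weight. The $i$-th outcomes at the four cycle nodes are some permutation of $\{y,q,x,z\}$ with $y>x$, and even with the anchor conditions $z=y$, $q=x$ only two of the outcomes coincide with their neighbor's, so exactly two edges telescope to zero, not four. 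Writing $o_1=y$, $o_2=q=x$, $o_3=x$, $o_4=z=y$ (the intended reading; the paper's ``$z$-buddy for $\bbb{d}$''/``$q$-buddy for $\bbb{u}$'' is a typo), the weight of the 4-cycle is
\begin{equation*}
\bbb{1}(q-y)+\bbb{d}(x-q)+\bbb{2}(z-x)+\bbb{u}(y-z) \;=\; \bbb{1}(x-y)+0+\bbb{2}(y-x)+0 \;=\;(y-x)(\bbb{2}-\bbb{1})\;<\;0,
\end{equation*}
since $y>x$ and $\bbb{1}>\bbb{2}$. This single display is the entire proof: the 4-cycle already has strictly negative weight, which directly violates~\eqref{eq:twopivotcycle} and hence contradicts Definition~\ref{def:3way}.

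Because you believed the cycle weight had collapsed to zero, you concluded you ``had the logic backwards'' and pivoted to a corrected plan that splices a negative sub-path from Theorem~\ref{thm:anatomy} into one of the slots. That plan is not sound in this context: Theorem~\ref{thm:anatomy} is stated for mechanisms that satisfy OSP 2CMON but \emph{fail} OSP CMON, and hands you negative cycles only under that hypothesis. In Observation~\ref{obs:no_anchors} you are trying to \emph{establish} a non-negativity property for a three-way greedy mechanism --- you cannot presuppose a negative cycle to splice in. The antimonotone pair $\bbb{1},\bbb{2}$ gives you $y>x$ for the witness buddies; it does not hand you a ready-made negative sub-path. Your original single-edge instantiation was the right move; you only needed to finish the arithmetic rather than abandon it.
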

\begin{proof}
Suppose that $\bbb{u}$ and $\bbb{d}$ are anchors for $\bbb{1}$ and $\bbb{2}$. 
In \eqref{eq:twopivotcycle}, consider $P_1^{(out)}=(\bbb{1}, \vect{b}^{(y)}_{-i}) \rightarrow (\bbb{d}, \vect{b}^{(q)}_{-i})$, $P_2^{(in)}=(\bbb{d}, \vect{b}^{(q)}_{-i}) \rightarrow (\bbb{2}, \vect{b}^{(x)}_{-i})$, $P_2^{(out)}=(\bbb{2}, \vect{b}^{(x)}_{-i}) \rightarrow (\bbb{u}, \vect{b}^{(z)}_{-i})$, $P_1^{(in)}=(\bbb{u}, \vect{b}^{(z)}_{-i}) \rightarrow (\bbb{1}, \vect{b}^{(y)}_{-i})$, where $q=x$, $z=y$, and $\vect{b}^{(y)}_{-i}$, $\vect{b}^{(z)}_{-i}$, $\vect{b}^{(x)}_{-i}$, and $\vect{b}^{(q)}_{-i}$ are as defined above. By definition of pivot, these edges all belong to $\ver$. 
By inspection, the cycle composed by these four edges has weight
$
(y - x) (\bbb{2} - \bbb{1}) <0,
$
thus contradicting \eqref{eq:twopivotcycle}.
\end{proof}

With only two outcomes available, the only pivots possible must have outcomes that are equal to those of the two antimonotone types, i.e., they are anchors. This leads to negative-weight cycles, as shown by Observation~\ref{obs:no_anchors}. In fact, two-way greedy mechanisms interact with each agent either in a greedy fashion (by querying about the best type that has not yet been  queried, and in case of positive answer, by including her in the solution) or in a reverse greedy fashion (by asking her whether her type is the worst that has not yet been queried, and in case of positive answer, by excluding her from the solution) as long as there are still antimonotone types. This way, the mechanism completely avoids the chance of having two anchors. 

The intuition arising from the binary allocation case, can then be extended to a more general outcome space by considering not only anchors, but generic pivots. Indeed, with more than two outcomes there may be pivots whose label is different from those of antimonotone types: however, as for the binary case,
by avoiding the introduction of 
pivots until there are antimonotone types will surely satisfy Definition~\ref{def:3way}.
How can a mechanism avoid two pivots in this setting? Clearly, the mechanism can query an agent in a greedy or reverse greedy fashion exactly as in the binary allocation case (except that now we are allowed to allocate different outcomes at each query, whilst maintaining the monotonicity with respect to types in order to not violate 2CMON). The third possibility is for the mechanism to first query an agent about whether her type is large or small (with the exact threshold defining large or small types depending on the problem at the hand), whilst ensuring that a label for a large type is never better than the label of a small type, and then in case of large types, proceeding by querying the agent greedly, whereas in the case of small types, the mechanism queries the agent in a reverse greedy fashion. These three ways of ensuring that no two pivots exist justify the name of our mechanism (see also Example \ref{example} in Appendix).

\begin{example}\label{example}
In order to clarify how these three-way greedy mechanisms can work, let us consider a very simple example for the aforementioned scheduling related machine problem 
with two jobs and two 
machines, whose types belong to 
$\{T, H, L, B\}$, with $T > 2H$, $H > 2L$, and $L > 2B$.

A three-way greedy mechanism can greedily start by asking machine 1 whether her type is $B$ and, for example, allocating both jobs in case of a positive answer. Note that in this case successive queries to machine 1 (that may be preceded by queries to machine 2) we must ask for the next smaller unqueried type (thus, $L$ first and $H$ afterwards) until a positive answer is received by allocating for each positive answer to these queries a number of jobs that is at most as large as the number of jobs allocated at the previous query (thus, for example, when querying $L$ the mechanism may offer both jobs or only $1$, but in the latter case, when querying $H$ the mechanism cannot offer both jobs to machine 1).

Alternatively, the three-way greedy mechanism can start in a reverse greedy fashion by asking to machine 1 whether her type is $T$ and, e.g., allocating no job in case of a positive answer. Similarly to the previous case, successive queries to machine 1 
must ask for the next larger unqueried type 
until a positive answer is received. For each positive answer, the mechanism must allocate a number of jobs that is no smaller than the number of jobs allocated at the previous query (thus, for example, for a query $H$ the mechanism may offer either no jobs or only $1$ job, but in the latter case, for the query $L$ machine 1 needs to get at least one job).

Finally, the three-way mechanism can start by asking whether the type of machine 1 is high, i.e., it is in $\{T, H\}$, or low, i.e., it is in $\{L, B\}$, by assuring that for high types the number of assigned jobs is no larger then the job load for low types (e.g., at most one job is assigned for high types, and at least one for low types). If machine 1 declared to have a high type, then the next query 
must ask if the type is $H$ and assign (in case of positive answer) a number of jobs that is not smaller than the load assigned in case the machine would have a larger type. Similarly, if machine 1 chose an action signalling a low type, then the next query 
must ask if the type is $L$ and assign (in case of positive answer) a number of jobs that is not larger than the load assigned in case the machine would have a smaller type.
\end{example}

The third feature that emerges from the definition of three-way greedy mechanisms is that 
pivots can indeed exist, as long as the one with small (large) label is large (small) \emph{enough}. Here, the thresholds for these pivots to be considered small/large enough depend on cycles that go through the four aforementioned points, cf. the definitions of the paths in the \vgraph. 
\begin{observation}[Size of extreme pivots]
\label{obs:extreme_pivots}
    If a mechanism $\M=(f, p, \T)$ is three-way greedy, then for each node $u \in \T$ with $i(u) \neq i$ such that a pair of antimonotone types $\bbb{1}$ and $\bbb{2}$, with $\bbb{1} > \bbb{2}$ belong to $D_i^{(u)}$, if there are pivots $\bbb{u}$ and $\bbb{d}$ for $\bbb{1}$ and $\bbb{2}$, then it must be the case that $\bbb{d} > \bbb{1}$, $\bbb{2} > \bbb{u}$ and
       $ \Lambda (\bbb{d} - \bbb{1}) + \Delta (\bbb{2} - \bbb{u}) \geq \delta (\bbb{1} - \bbb{2})$,
    where $\Delta = (z - y)$, $\Lambda = (x-q)$, $\delta = (y - x)$, and $z \geq y > x \geq q$ are as in  Definition \ref{def:pivot}.
\end{observation}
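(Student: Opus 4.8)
The plan is to extract the claimed inequalities directly from the defining inequality \eqref{eq:twopivotcycle} of extreme pivots, by instantiating the four paths $P_1^{(out)}, P_2^{(in)}, P_2^{(out)}, P_1^{(in)}$ with the shortest possible witnessing paths, namely single edges. First I would check that these single edges actually exist in $\ver$: the edges $(\bbb{1}, \vect{b}^{(y)}_{-i}) \to (\bbb{d}, \vect{b}^{(q)}_{-i})$ and $(\bbb{u}, \vect{b}^{(z)}_{-i}) \to (\bbb{1}, \vect{b}^{(y)}_{-i})$ exist because $\bbb{d}$ and $\bbb{1}$ are separated at $v^{(d)}$ (respectively $\bbb{u}$ and $\bbb{1}$ at $v^{(u)}$), so those ancestor nodes are separating for the relevant profiles; similarly $(\bbb{d}, \vect{b}^{(q)}_{-i}) \to (\bbb{2}, \vect{b}^{(x)}_{-i})$ and $(\bbb{2}, \vect{b}^{(x)}_{-i}) \to (\bbb{u}, \vect{b}^{(z)}_{-i})$ exist since $\bbb{d}, \bbb{2}$ are separated at $v^{(d)}$ and $\bbb{u}, \bbb{2}$ at $v^{(u)}$. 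Computing the four edge weights using $w(\a,\b) = a_i(f_i(\b) - f_i(\a))$ with the buddy definitions ($f_i$ equals $y$, $x$, $z$, $q$ at the four anchor profiles) gives a closed-form cycle weight; matching it against \eqref{eq:twopivotcycle} $\geq 0$ should yield exactly $\Lambda(\bbb{d}-\bbb{1}) + \Delta(\bbb{2}-\bbb{u}) \geq \delta(\bbb{1}-\bbb{2})$ after rearranging, using $\Delta = z-y$, $\Lambda = x-q$, $\delta = y-x$.

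The sign claims $\bbb{d} > \bbb{1}$ and $\bbb{2} > \bbb{u}$ then follow by a separate argument. I would first note that, since the query at the node separating $\bbb{1}$ from $\bbb{d}$ (and $\bbb{2}$ from $\bbb{d}$) is ordered, and $\bbb{d}$ has a label value $q \leq x < y$ that is no larger than the smallest label of $\bbb{1}$ and $\bbb{2}$, orderedness forces $\bbb{d}$ to lie on the side of larger types — hence $\bbb{d} > \bbb{1} > \bbb{2}$. Dually, $\bbb{u}$ has a label value $z \geq y > x$, at least the largest label of $\bbb{1}$ and $\bbb{2}$, so orderedness puts $\bbb{u}$ on the side of smaller types, giving $\bbb{2} > \bbb{u}$. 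I should be slightly careful here: I may instead need to derive the sign of $\bbb{d} - \bbb{1}$ and $\bbb{2} - \bbb{u}$ from the inequality itself together with the no-two-anchors behaviour — if $\bbb{d} \leq \bbb{1}$ or $\bbb{u} \geq \bbb{2}$ one can build a negative-weight cycle much as in Observation~\ref{obs:no_anchors}, contradicting three-way greediness. Either route works; I would present whichever is cleaner given the orderedness hypothesis baked into Definition~\ref{def:3way}.

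The main obstacle I anticipate is purely bookkeeping: getting the buddy/label indexing exactly right. The statement of extreme pivots has a notational quirk ($\vect{b}^{(q)}_{-i}$ is declared a ``$z$-buddy for $\bbb{d}$'' and $\vect{b}^{(z)}_{-i}$ a ``$q$-buddy for $\bbb{u}$'', which looks transposed relative to the subscripts), so I need to pin down which outcome value $f_i$ actually takes at each of the four corner profiles before computing weights, otherwise the final inequality will come out with the wrong coefficients. Once the four corner values are fixed as $y$ (at $(\bbb{1}, \vect{b}^{(y)}_{-i})$), $q$ (at $(\bbb{d}, \cdot)$), $x$ (at $(\bbb{2}, \vect{b}^{(x)}_{-i})$), $z$ (at $(\bbb{u}, \cdot)$), the algebra is a one-line rearrangement. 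I would therefore structure the proof as: (1) existence of the four edges; (2) evaluation of the cycle weight as $\bbb{1}(q - y) + \bbb{d}(x - q) + \bbb{2}(z - x) + \bbb{u}(y - z)$; (3) substitute $\Delta, \Lambda, \delta$ and rearrange the $\geq 0$ condition into the claimed bound; (4) deduce the two sign inequalities from orderedness (with the cycle-construction fallback if needed).
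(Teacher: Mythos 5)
Your proposal is correct and follows essentially the same route as the paper's proof: instantiate \eqref{eq:twopivotcycle} with the four single-edge paths, observe the edges exist by the pivot/buddy definitions, evaluate the cycle weight as $\bbb{1}(q-y)+\bbb{d}(x-q)+\bbb{2}(z-x)+\bbb{u}(y-z)\geq 0$, and rearrange to get the stated bound. For the sign claims the paper appeals to 2CMON (the mechanism never separates antimonotone types, so the separated pairs with $y>q$ and $z>x$ force $\bbb{d}>\bbb{1}$ and $\bbb{u}<\bbb{2}$), which is equivalent in substance to your orderedness argument; you also correctly flagged and resolved the transposed buddy labels in the definition of extreme pivots.
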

\begin{proof}
    The ranking of types simply follows since, by definition of pivot, $\bbb{d}$ is separated from $\bbb{1}$, $\bbb{2}$ is separated from $\bbb{u}$, there is a $q$-buddy $\vect{b}^{(q)}_{-i}$ for $\bbb{d}$, a $y$-buddy $\vect{b}^{(y)}_{-i}$ for $\bbb{1}$, a $z$-buddy $\vect{b}^{(z)}_{-i}$ for $\bbb{u}$, a $x$-buddy $\vect{b}^{(x)}_{-i}$ for $\bbb{2}$ at the respective divergence nodes, and the mechanism $\M$ satisfies OSP 2CMON, hence it never separates antimonotone types.
    
    In \eqref{eq:twopivotcycle}, let us consider $P_1^{(out)}=(\bbb{1}, \vect{b}^{(y)}_{-i}) \rightarrow (\bbb{d}, \vect{b}^{(q)}_{-i})$, $P_2^{(in)}=(\bbb{d}, \vect{b}^{(q)}_{-i}) \rightarrow (\bbb{2}, \vect{b}^{(x)}_{-i})$, $P_2^{(out)}=(\bbb{2}, \vect{b}^{(x)}_{-i}) \rightarrow (\bbb{u}, \vect{b}^{(z)}_{-i})$, $P_1^{(in)}=(\bbb{u}, \vect{b}^{(z)}_{-i}) \rightarrow (\bbb{1}, \vect{b}^{(y)}_{-i})$. By definition of pivot, these edges all belong to $\ver$.
Hence, we have that  
$
\bbb{1}(q - y) + \bbb{d} (x - q) + \bbb{2} (z - x) + \bbb{u} (y - z) \geq 0,
$
from which the claim 
immediately follows.
\end{proof}
In order to appreciate how demanding the condition above 
is, let us focus on some simple cases. Suppose first that $\Lambda = 0$, i.e., $\bbb{d}$ is an anchor. Then we have that
$
 \bbb{u} \leq \bbb{2} - \frac{\delta}{\Delta} (\bbb{1} - \bbb{2}).
$ 
Similarly, if $\Delta = 0$, i.e., $\bbb{u}$ is an anchor, then we have that
$
 \bbb{d} \geq \bbb{1} + \frac{\delta}{\Lambda} (\bbb{1} - \bbb{2}). 
$ 
This already highlights that either the difference between the outcomes of pivots and the outcome of antimonotone types is large (i.e., $\Delta$ or $\Lambda$ are large with respect to $\delta$) or the difference between pivot types and antimonotone types needs to be large enough (at least $\frac{\delta}{X} (\bbb{1} - \bbb{2})$, $X$ being either $\Delta$ or $\Lambda$). To some extent, these pivots are \emph{clearly separable} from antimonotone types in the sense that either their outcomes or their values are much different from the ones of antimonotone types. 
We also observe that the gap $\frac{\delta}{X} (\bbb{1} - \bbb{2})$ that emerges above can be 
a very loose bound. Indeed, while in Observation~\ref{obs:extreme_pivots} we considered $P_1^{(out)}$, $P_2^{(in)}$, $P_2^{(out)}$, and $P_1^{(in)}$ consisting of a single edge, there are nodes in the implementation trees belonging to paths among the corresponding pairs of profiles whose weight can in principle be much smaller than the weight of this single edge. Hence, by replacing our choice of $P_1^{(out)}$, $P_2^{(in)}$, $P_2^{(out)}$, and $P_1^{(in)}$ with these lighter paths we are increasing the required gap between the pivots and the antimonotone types.

\paragraph{The Proof.}
We begin by understanding the anatomy of antimonotone types and characterize the shape of the implementation tree that leads to such a pair. Let $u$ denote the deepest node in the implementation tree of a 2CMON mechanism $\M =(f, p, \T)$ such that two antimonotone types $\bbb{1}$ and $\bbb{2}$ exist whose witnesses $\bb{1}, \bb{2}$ belong to $D^{(u)}$. This means that $\bb{1}$ and $\bb{2}$ will be separated at $u$. We call $u$ the \emph{divergence node} of $\bb{1}$ and $\bb{2}$.

By 2CMON, we can observe that $i(u)=j \neq i$ and $\bbbb{1}{j} \in L$, $\bbbb{2}{j} \in R$, where $L$ and $R$ are the parts defined by the query at $u$. We call all the profiles that can be reached via $L$ ($R$, respectively) the $\bbb{1}$ ($\bbb{2}$, respectively) peers. 
By definition of \vgraph, it is not hard to see the following property.

\begin{observation}[No crossing property]\label{obs:nocross}
	Let $\bbb{1},\bbb{2}$ be two antimonotone types witnessed by $\bb{1}$ and $\bb{2}$.	In $\ver$, there is no edge between $\bbb{1}$ peers and $\bbb{2}$ peers.
\end{observation}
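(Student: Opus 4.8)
The plan is to prove Observation~\ref{obs:nocross} directly from the construction of the OSP-graph $\ver$ together with the defining property of the divergence node $u$. Recall that $u$ is the deepest node at which witnesses $\bb{1}, \bb{2}$ of a pair of antimonotone types $\bbb{1}, \bbb{2}$ are still both compatible; at $u$ the divergent agent is $j = i(u) \neq i$, and the query at $u$ splits $D_j^{(u)}$ into parts $L \ni \bbbb{1}{j}$ and $R \ni \bbbb{2}{j}$. The $\bbb{1}$-peers are exactly the profiles reachable through the $L$-child, and the $\bbb{2}$-peers those reachable through the $R$-child; by construction these two sets are disjoint and every profile compatible with $u$ lies in one of them (since $\bb{1},\bb{2}$ are separated at $u$, both witnesses are compatible with $u$, so such a node $u$ governs them).

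First I would recall that an edge $(\a,\b)$ in $\ver$ exists only if $\T$ has a node $v$ that is $(\a,\b)$-separating for agent $i = i(v)$; in particular $\a$ and $\b$ must both be compatible with $v$, they must be separated by the query at $v$, and crucially $i(v) = i$ — the agent whose graph we are in. So suppose for contradiction there is an edge between a $\bbb{1}$-peer $\a$ and a $\bbb{2}$-peer $\b$ in $\ver$. Let $v$ be the separating node witnessing this edge. Since $\a$ is a $\bbb{1}$-peer it is compatible with the $L$-child of $u$, and since $\b$ is a $\bbb{2}$-peer it is compatible with the $R$-child of $u$; in particular both are compatible with $u$ itself. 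The node $v$ must be an ancestor of both $\a$'s leaf and $\b$'s leaf and must separate them, so $v$ lies on the common root-to-$\a$ and root-to-$\b$ path; because $\a$ goes through the $L$-child of $u$ and $\b$ through the $R$-child, the two root-paths diverge precisely at $u$, hence $v$ must be $u$ itself or a strict ancestor of $u$. If $v$ were a strict ancestor of $u$, then $\a$ and $\b$ would already be in different subdomains above $u$, contradicting that both are compatible with $u$. Therefore $v = u$. But $i(v) = i(u) = j \neq i$, contradicting the requirement $i(v) = i$ for an edge in $\ver$.

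Thus no such edge can exist, which is exactly the claim. I would phrase the argument compactly: any edge between a $\bbb{1}$-peer and a $\bbb{2}$-peer in $\ver$ would have to be witnessed by a separating node on the path where their histories diverge, namely $u$; but $u$ queries agent $j \neq i$, so it cannot separate a pair of profiles in $\ver$. The only subtlety to handle carefully is the case analysis on where the witnessing node $v$ sits relative to $u$ — one must rule out $v$ being a strict ancestor of $u$ (handled by compatibility with $u$) and $v$ being below $u$ (impossible since $\a,\b$ already lie in different $L,R$ branches, so any common ancestor separating them is at or above $u$). I expect this case analysis on the position of $v$ to be the only real step requiring care; everything else is immediate from the definitions of the OSP-graph and of compatibility.
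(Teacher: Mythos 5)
Your proof is correct and matches what the paper intends: the paper states Observation~\ref{obs:nocross} without proof, remarking only that it follows ``by definition of OSP-graph,'' and your argument is exactly that definitional argument spelled out. The key chain — any edge $(\a,\b)$ in $\ver$ with $\a$ a $\bbb{1}$-peer and $\b$ a $\bbb{2}$-peer requires an $(\a,\b)$-separating node $v$ with $i(v)=i$; since both profiles are compatible with $u$ but go to different children of $u$, their root-to-leaf paths diverge precisely at $u$, forcing $v=u$; but $i(u)=j\neq i$ — is the intended reasoning, and your case analysis on the position of $v$ relative to $u$ is sound.
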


\begin{theorem}\label{thm:main}
	The mechanism $\M=(f, p, \T)$ is  OSP for single-parameter domains if and only if  $\M$ is three-way greedy.
\end{theorem}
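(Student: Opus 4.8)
The proof of Theorem~\ref{thm:main} splits into the two directions of the biconditional, and the plan is to handle them separately, reusing the structural machinery already developed.

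\textbf{Necessity (OSP $\Rightarrow$ three-way greedy).} Suppose $\M=(f,p,\T)$ is OSP. By Theorem~\ref{thm:monowlog} we may assume without loss of generality that all queries of $\T$ are ordered, so the first requirement of Definition~\ref{def:3way} is met for free. It remains to show that at every internal node $u$ with $i(u)\neq i$ at which a pair of antimonotone types $\bbb{1}>\bbb{2}$ lives in $D_i^{(u)}$, every pair of pivots $\bbb{u},\bbb{d}$ for $\bbb{1},\bbb{2}$ is \emph{extreme}, i.e.\ satisfies \eqref{eq:twopivotcycle}. Here the key point is that all four paths $P_1^{(out)}, P_2^{(in)}, P_2^{(out)}, P_1^{(in)}$ entering \eqref{eq:twopivotcycle} live inside the OSP-graph $\ver$ and, concatenated, form a \emph{closed cycle} in $\ver$: the buddies $\vect b^{(y)}_{-i},\vect b^{(x)}_{-i},\vect b^{(z)}_{-i},\vect b^{(q)}_{-i}$ are chosen (using the definition of pivot and of buddy) precisely so that the endpoints match up, and the edges $(\bbb{1},\vect b^{(y)}_{-i})\to\ab{1}$ etc.\ exist because the relevant separations happen at the ancestor nodes $v^{(u)},v^{(d)}$ and at/below $u$. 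Since $\M$ is OSP, Theorem~\ref{thm:cmon} says $\ver$ has no negative-weight cycle, so this cycle has weight $\geq 0$, which is exactly \eqref{eq:twopivotcycle}. Thus $\M$ is three-way greedy.

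\textbf{Sufficiency (three-way greedy $\Rightarrow$ OSP).} This is the harder direction and is where the ironing argument does its work. Let $\M$ be three-way greedy. By Theorem~\ref{thm:cmon} it suffices to show OSP CMON, i.e.\ no negative-weight cycle in any $\ver$. Since all queries are ordered, 2CMON holds: an edge $\a\to\b$ of negative weight would require $a_i<b_i$ with $f_i(\b)<f_i(\a)$, forcing $\bbb{}$ and $\bbb{}$ to be separated while antimonotone with respect to their labels, contradicting orderedness (this is essentially Observation~\ref{obs:nocross} applied locally). Now suppose for contradiction some $\ver$ contained a negative-weight cycle $C$; pick one. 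By Theorem~\ref{thm:anatomy}, $C$ contains four profiles $\bb0,\bb1,\bb2,\bb3$ with $\bbb0>\bbb1>\bbb2>\bbb3$, $f_i(\bb1)>f_i(\bb2)$, edges $\bb1\leftrightarrow\bb0$ and $\bb2\leftrightarrow\bb3$ present but no edge $\bb1\leftrightarrow\bb2$; so $\bbb1,\bbb2$ are antimonotone, witnessed by $\bb1,\bb2$. Let $u$ be their divergence node; since $\M$ never separates antimonotone types it follows $i(u)=j\neq i$, and $\bbbb1{j}\in L$, $\bbbb2{j}\in R$. The plan is then: identify $\bbb0$ and $\bbb3$ (more precisely the types of $i$ at the endpoints of the other sides of $C$) as giving rise to \emph{pivots} $\bbb{u},\bbb{d}$ for $\bbb1,\bbb2$ at $u$ --- using Observation~\ref{obs:nocross} to see that $C$, to cross from a $\bbb2$-peer back to a $\bbb1$-peer, must route through profiles whose $i$-coordinate was separated from $\bbb1,\bbb2$ at ancestors of $u$, with labels straddling the labels $y\in\L_i^{(u)}(\bbb1)$, $x\in\L_i^{(u)}(\bbb2)$. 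Since $\M$ is three-way greedy, these pivots are extreme, so \eqref{eq:twopivotcycle} holds for the four sub-paths of $C$ between the corresponding pivot/antimonotone profiles. But $C$ decomposes exactly into these four sub-paths (going $\bbb1$-buddy $\rightsquigarrow$ $\bbb{d}$ $\rightsquigarrow$ $\bbb2$-buddy $\rightsquigarrow$ $\bbb{u}$ $\rightsquigarrow$ back), so $w(C)=w(P_1^{(out)})+w(P_2^{(in)})+w(P_2^{(out)})+w(P_1^{(in)})\geq 0$, contradicting $w(C)<0$.

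\textbf{Main obstacle.} The delicate step is the sufficiency direction's claim that every negative-weight cycle can be \emph{canonically decomposed} into the four pivot-to-antimonotone sub-paths of \eqref{eq:twopivotcycle}, with the endpoints being genuine buddies of genuine pivots at the right ancestor nodes. This requires (a) the ironing steps alluded to in the introduction to reduce a general cycle to one of canonical shape with exactly the right $z\geq y>x\geq q$ structure, (b) a careful argument that the profiles at which $C$ ``turns around'' have $i$-coordinates that were separated from $\{\bbb1,\bbb2\}$ strictly above $u$ with suitably large/small labels --- this is where the no-crossing property and the minimality in the choice of the divergence node $u$ are used --- and (c) checking that the buddies can be chosen to make the concatenation literally closed. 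This is the heavy technical content deferred to Section~\ref{sec:ordered}; the rest is bookkeeping around Theorems~\ref{thm:cmon}, \ref{thm:anatomy}, \ref{thm:monowlog} and Observations~\ref{obs:no_anchors}--\ref{obs:nocross}.
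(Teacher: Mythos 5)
Your proposal follows essentially the same approach as the paper: necessity is handled by observing that the four paths in \eqref{eq:twopivotcycle} concatenate into a cycle of $\ver$ whose non-negativity is guaranteed by OSP CMON (with Theorem~\ref{thm:monowlog} supplying the ordered-queries requirement), and sufficiency combines 2CMON from orderedness, Theorem~\ref{thm:anatomy} to extract antimonotone witnesses from a hypothetical negative cycle, the no-crossing property to force the cycle through pivots separated above the divergence node, and the extreme-pivot condition to contradict $w(C)<0$. The "main obstacle" you flag — that $C$ decomposes into the four pivot-to-antimonotone sub-paths with the right buddy endpoints — is precisely the step the paper's proof also leaves terse, relying implicitly on the universal quantification over paths in the definition of extreme pivots together with Observation~\ref{obs:nocross}, so your concern is well-placed but does not indicate a deviation from the paper's argument.
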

\begin{proof}
	Let us start from the only if part. We here simply need to observe that in \eqref{eq:twopivotcycle} we are bounding from below the weight of a cycle of $\ver$; therefore, the cycle must be non-negative as requested since the mechanism is OSP.
	
	As for the sufficiency, we firstly observe that any three-way greedy mechanism is OSP 2CMON by construction. Assume by contradiction that the mechanism is not OSP, and let $C$ be the negative cycle in $\ver$. By Theorem \ref{thm:anatomy} there is at least one pair of antimonotone types $\bbb{1}$ and $\bbb{2}$ with two witnessing profiles $\bb{1}$ and $\bb{2}$ in $C$. 
	By the no crossing property (Observation \ref{obs:nocross}) the only way to go from $\bb{1}$ to $\bb{2}$ and back is through (at least) two different profiles (w.l.o.g., $C$ is a simple cycle) that have been separated by $i$ above the divergence node of $\bb{1}$ and $\bb{2}$. Since by definition the mechanism uses ordered queries, the outcomes for these nodes are either at most $\max\{f_i(\bb{1}), f_i(\bb{2})\}$ or at least $\min\{f_i(\bb{1}), f_i(\bb{2})\}$. But then these two profiles are pivots and since $w(C)<0$, this cycle would contradict \eqref{eq:twopivotcycle}. 
\end{proof}

\section{Proving that Ordered Queries are Necessary}\label{sec:ordered}
In this section, we will prove Theorem \ref{thm:monowlog}. We first give a host of novel tools to bound path weights. 
In all the following lemmas, we consider a mechanism $\M$ with implementation tree $\T$ that satisfies 2CMOM but not CMON, that is, for some player $i$, all two-cycles are positive but a negative longer cycle exists in $\ver$. We consider different ways in which this cycle can be modified.

For a sequence of profiles $(\x^1, \ldots, \x^\ell)$ and an agent $i$ we 
define 
$w_i(\x^1, \ldots, \x^\ell) = \sum_{j=1}^{\ell-1} x_i^j (f_i(\x^{j+1}) - f_i(\x^j)).$ 
Observe that if the sequence of profiles $(\x_1, \ldots, \x_\ell)$ is a path in $\ver$ (that is, $(\x_i, \x_{i+1})$ is an edge of $\ver$), then $w_i(\x_1, \ldots, \x_\ell)$ is exactly the cost of the path.


\begin{figure}
     \centering
     \begin{subfigure}[b]{0.32\textwidth}
         \centering
         \includegraphics[scale=0.6]{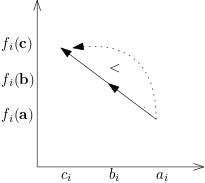}
         \caption{Expansion: Two-edge paths going leftwards and upwards in the space weigh less than the direct edge between the endpoints.}
         \label{fig:exp}
     \end{subfigure}
     \hfill
     \begin{subfigure}[b]{0.32\textwidth}
         \centering
         \includegraphics[scale=0.6]{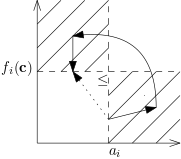}
         \caption{Reduction: Paths using profiles in dashed areas of the space weigh more than the direct edge between the endpoints.}
         \label{fig:red}
     \end{subfigure}
     \hfill
    \begin{subfigure}[b]{0.32\textwidth}
         \centering
         \includegraphics[scale=0.6]{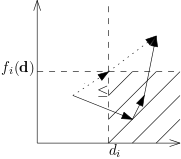}
         \caption{Replacement: Paths using profiles in dashed areas of the space weigh more than the 2-edge path between the endpoints.}
         \label{fig:repl}
     \end{subfigure}    
        \caption{Examples of our ironing techniques in a bidimensional space where the $x$-axis contains possible types and the $y$-axis contains outcomes computed by the social choice function.}
        \label{fig:ironing}
\end{figure}

We give three different operations to ``iron'' a path in a \vgraph\ that satisfies 2CMON but not CMON. Figure \ref{fig:ironing} gives a visual representation of these operations. Our forms of ironing abstract the notion of path and actually work on sequences of profiles and our definition of $w_i$, 
meaning that the (ironed) path may not in general belong to the \vgraph\ (for all but one operation, we make no assumption on the existence of the edges of the ironed path). Clearly, these properties and bounds will be applied to the \vgraph\ only when edges exist and their application is useful for our purposes.

The first ironing operation shows under which conditions we can add edges to a path without changing the sign of the weight of a negative cycle. 

\begin{lemma}[Positive path expansion]
\label{lem:bypass}
  Suppose there is a negative cycle $C$ in a \vgraph\ $\ver$, but no negative 2-cycles. If $C$ uses an edge $(\a,\c)$ such that $f_i(\a) \leq f_i(\c)$, and there is a profile $\b$ such that edges $(\a,\b)$ and $(\b,\c)$ exist in $\ver$ with $a_i \geq b_i \geq c_i$, then the cycle obtained by replacing edge $(\a,\c)$ with the path $\a \rightarrow \b \rightarrow \c$ is still negative.
\end{lemma}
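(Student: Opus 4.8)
The plan is to establish the stronger, purely local, statement that the two-edge path $\a \rightarrow \b \rightarrow \c$ weighs no more than the single edge $(\a,\c)$, i.e.\ $w(\a,\b)+w(\b,\c) \le w(\a,\c)$. Since replacing the edge $(\a,\c)$ in $C$ by this path changes the total weight of the cycle by exactly $w(\a,\b)+w(\b,\c)-w(\a,\c)$, and since the resulting closed walk is genuinely a cycle of $\ver$ (the edges $(\a,\b)$ and $(\b,\c)$ exist by hypothesis), this local inequality immediately gives that the new cycle has weight at most $w(C) < 0$, as required.

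First I would record two facts about the hypotheses. Because a node that is $(\x,\y)$-separating is also $(\y,\x)$-separating, the \vgraph\ edge relation is symmetric as far as the existence of edges goes, so the reverse edges $(\b,\a)$ and $(\c,\b)$ are also present in $\ver$. Moreover an edge of $\ver$ can exist only between profiles whose $i$-th coordinates differ (a subdomain partition cannot place a single type in two different children), so the standing assumption $a_i \ge b_i \ge c_i$ together with the existence of $(\a,\b)$ and $(\b,\c)$ in fact forces the strict chain $a_i > b_i > c_i$.

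Next I would invoke 2CMON on the two two-cycles $\a \rightarrow \b \rightarrow \a$ and $\b \rightarrow \c \rightarrow \b$. Unfolding the definition of the edge weights, these two-cycles have weights $(a_i - b_i)(f_i(\b) - f_i(\a))$ and $(b_i - c_i)(f_i(\c) - f_i(\b))$ respectively; non-negativity of both (no negative two-cycles), combined with $a_i > b_i > c_i$, yields the monotonicity $f_i(\a) \le f_i(\b) \le f_i(\c)$ (consistent with the standing assumption $f_i(\a) \le f_i(\c)$). A direct expansion of the three edge weights then produces the identity
\[
w(\a,\b) + w(\b,\c) - w(\a,\c) = (a_i - b_i)\bigl(f_i(\b) - f_i(\c)\bigr),
\]
whose right-hand side is $\le 0$ since $a_i \ge b_i$ and $f_i(\b) \le f_i(\c)$. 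This is precisely the local inequality above, which concludes the argument.

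There is essentially no obstacle here: the proof reduces to a one-line algebraic identity plus two applications of 2CMON. The only point that requires care is making the monotonicity $f_i(\b) \le f_i(\c)$ available — this is exactly where the symmetry of separating nodes (to obtain the reverse edges) and the strictness $b_i > c_i$ (so that the relevant two-cycle constraint is not vacuous) enter.
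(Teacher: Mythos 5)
Your proof is correct and takes essentially the same approach as the paper's: both reduce to the algebraic identity $w(\a,\b)+w(\b,\c)-w(\a,\c)=(a_i-b_i)(f_i(\b)-f_i(\c))$ and then invoke 2CMON on the two-cycle through $\b$ and $\c$ to pin down the sign. Your treatment is marginally more explicit in deriving the strict inequalities $a_i>b_i>c_i$ from edge existence (the paper asserts $a_i>b_i$ ``by hypothesis'' without remark), but this is a presentational refinement rather than a different argument, and the extra bound $f_i(\a)\le f_i(\b)$ you derive is never actually used.
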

\begin{proof}
 Let $C$ be the original cycle, and let $C'$ be the cycle obtained by replacing the edge $(\a,\c)$ in $C$ with the path $\a \rightarrow \b \rightarrow \c$. We next evaluate the difference between the cost of $C$ and the cost of $C'$. To this aim
let $\partial C$ be the set of all edges in $C$ different from $(\a,\c)$.
Since $w(C) = a_i(f_i(\c) - f_i(\a)) + \sum_{(\x,\y) \in \partial C} w(\x,\y)$ and $w(C') = a_i(f_i(\b) - f_i(\a)) + b_i(f_i(\c) - f_i(\b)) + \sum_{(\x,\y) \in \partial C} w(\x,\y)$, then
$$
 w(C) - w(C') = (f_i(\c)-f_i(\b)) (a_i - b_i) > 0,
$$
where the inequality follows since, by hypothesis, $a_i > b_i$ and $f_i(\c) \geq f_i(\b)$ (otherwise, 2CMON would be violated since the edge $(\b,\c)$ exists and $c_i < b_i$). We can then conclude that $w(C') < w(C) < 0$, as desired.
\end{proof}

We next show when we can substitute a path with a direct edge between its endpoints. This form of ironing comes in three different versions, of increasing strength. 

\begin{lemma}[Weak path reduction]
\label{lem:angle_redux_step1}
 Let $(\a, \b^1, \ldots, \b^\ell, \c)$ be a sequence of profiles.
 If $a_i \leq b^1_i \leq \cdots \leq b^\ell_i$ and $f_i(\c) \geq \max_{j \in \ell} f_i(\b^j)$,
 then $w_i(\a, \b^1, \ldots, \b^\ell, \c) \geq w_i(\a,\c)$.
 
 Similarly, if $a_i \geq b^1_i \geq \cdots \geq b^\ell_i$ and $f_i(\c) \leq \min_{j \in \ell} f_i(\b^j)$, then $w_i(\a, \b^1, \ldots, \b^\ell, \c) \geq w_i(\a,\c)$.
 \end{lemma}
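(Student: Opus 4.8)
The plan is to unfold the definition of $w_i$ and telescope. Recall that for a sequence of profiles $(\x^1,\ldots,\x^m)$ we have $w_i(\x^1,\ldots,\x^m)=\sum_{j=1}^{m-1} x_i^j\bigl(f_i(\x^{j+1})-f_i(\x^j)\bigr)$. For the first case, write $\b^0:=\a$ and $\b^{\ell+1}:=\c$, so that
\begin{equation*}
w_i(\a,\b^1,\ldots,\b^\ell,\c)=\sum_{j=0}^{\ell} b_i^j\bigl(f_i(\b^{j+1})-f_i(\b^j)\bigr),
\end{equation*}
while $w_i(\a,\c)=a_i\bigl(f_i(\c)-f_i(\a)\bigr)=b_i^0\bigl(f_i(\b^{\ell+1})-f_i(\b^0)\bigr)$. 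The difference is therefore $\sum_{j=0}^{\ell}(b_i^j-b_i^0)\bigl(f_i(\b^{j+1})-f_i(\b^j)\bigr)$, and I want to show this is nonnegative. This is not quite a term-by-term comparison, because the increments $f_i(\b^{j+1})-f_i(\b^j)$ need not all have the same sign; so the right move is a second telescoping, this time on the coefficients.

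The key step is to rewrite the difference by summing the increments in a different order. Using $f_i(\b^{j+1})-f_i(\b^j)=\bigl(f_i(\c)-f_i(\b^j)\bigr)-\bigl(f_i(\c)-f_i(\b^{j+1})\bigr)$ and reindexing — or equivalently applying Abel summation to $\sum_{j=0}^{\ell}(b_i^j-a_i)\bigl(f_i(\b^{j+1})-f_i(\b^j)\bigr)$ — one obtains an expression of the form $\sum_{j=1}^{\ell}(b_i^j-b_i^{j-1})\bigl(f_i(\c)-f_i(\b^j)\bigr)$ (the boundary terms vanish because $b_i^0-a_i=0$ and because the coefficient multiplying $f_i(\c)-f_i(\c)=0$ drops out). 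Now each factor is nonnegative: $b_i^j-b_i^{j-1}\geq 0$ by the monotonicity hypothesis $a_i\le b_i^1\le\cdots\le b_i^\ell$, and $f_i(\c)-f_i(\b^j)\ge 0$ since $f_i(\c)\ge\max_j f_i(\b^j)$. Hence the difference is a sum of nonnegative terms, giving $w_i(\a,\b^1,\ldots,\b^\ell,\c)\ge w_i(\a,\c)$.

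The second case is entirely symmetric: with $a_i\ge b_i^1\ge\cdots\ge b_i^\ell$ the coefficient differences $b_i^j-b_i^{j-1}$ are now nonpositive, while $f_i(\c)\le\min_j f_i(\b^j)$ makes $f_i(\c)-f_i(\b^j)\le 0$, so the products are again nonnegative and the same inequality follows; one can also just apply the first case after a sign flip. The main thing to get right is the Abel-summation bookkeeping — making sure the boundary terms really cancel and that the surviving sum has exactly the claimed sign pattern — but this is a routine manipulation once the reindexing is set up correctly; no appeal to the structure of the implementation tree or to 2CMON is needed for this particular lemma (the standing assumptions in the section are simply the ambient context, not used here).
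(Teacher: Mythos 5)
Your proof is correct and follows essentially the same route as the paper's: unfold $w_i$, apply summation by parts, and observe the resulting expression is nonnegative under the stated hypotheses. The only cosmetic difference is that you keep the Abel-summed expression in the form $\sum_{j=1}^{\ell}(b_i^j-b_i^{j-1})(f_i(\c)-f_i(\b^j))$ and note each summand is nonnegative termwise, whereas the paper first bounds each $f_i(\b^j)$ by $f^+$ (resp.\ $f^-$) and then telescopes to $(f_i(\c)-f^+)(b_i^\ell - a_i)$; your presentation is marginally tighter and avoids the (unnecessary) appeal to $f_i \geq 0$, but the underlying manipulation is identical.
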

\begin{proof}
Observe that
$$w_i(\a, \b^1, \ldots, \b^\ell, \c) = a_i(f_i(\b^1) - f_i(\a)) + \sum_{j=1}^{\ell-1} b^j_i(f_i(\b^{j+1}) - f_i(\b^j)) + b^\ell_i(f_i(\c) - f_i(\b^\ell))$$
and $w_i(\a,\c) = a_i(f_i(\c) - f_i(\a))$. Then
$$
 w_i(\a, \b^1, \ldots, \b^\ell, \c) - w_i(\a, \c) = f_i(\c) (b^\ell_i - a_i) - \sum_{j=2}^\ell f_i(\b^j) (b_i^j - b_i^{j-1}) - f_i(\b^1) (b_i^1 - a_i).
$$

Suppose first that $a_i \geq b^1_i \geq \cdots \geq b^\ell_i$ and $f_i(\c) \leq f^- = \min_{j \in \ell} f_i(\b^j)$.
Then, it follows that 
\begin{align*}
 w_i(\a, \b^1, \ldots, \b^\ell, \c) - w_i(\a, \c) & = f_i(\b^1) (a_i - b_i^1) + \sum_{j=2}^\ell f_i(\b^j) (b_i^{j-1} - b_i^j) - f_i(\c) (a_i - b^\ell_i)\\
 & \geq \left(f^- - f_i(\c)\right) (a_i - b^\ell_i) \geq 0,
\end{align*}
where the first inequality follows since $f_i(\x) \geq 0$ for every outcome $\x$, and all the differences among types are non-negative by hypothesis, and the second inequality follows since, by hypothesis, $f^- > f_i(\c)$ and $a_i \geq b^\ell_i$.

Consider now the case that $a_i \leq b^1_i \leq \cdots \leq b^\ell_i$ and $f_i(\c) \geq f^+ = \max_{j \in \ell} f_i(\b^j)$.
Then, it follows that 
\begin{align*}
 w_i(\a, \b^1, \ldots, \b^\ell, \c) - w_i(\a, \c) & \geq f_i(\c) (b^\ell_i - a_i) - \sum_{j=2}^\ell f^+ (b_i^j - b_i^{j-1}) - f^+(\b^1) (b_i^1 - a_i)\\
 & = \left(f_i(\c) - f^+\right) (b^\ell_i - a_i) \geq 0,
\end{align*}
where the first inequality follows since $f_i(\x) \geq 0$ for every outcome $\x$, and all the differences among types are non-negative by hypothesis, and the second inequality follows since $f_i(\c) \geq f^+$ and $b^\ell_i \geq a_i$,   by hypothesis.
\end{proof}

\begin{lemma}[Path reduction]
\label{lem:angle_redux_step2}
 Let $(\a, \b^1, \ldots, \b^{\ell}, \c)$ be a sequence of profiles.
 If $a_i \leq b_i^j$ and $f_i(\c) \geq f_i(\b^j)$ for every $j \in [\ell]$,
 then $w_i(\a, \b^1, \ldots, \b^{\ell}, \c) \geq w_i(\a,\c)$.
 
 Similarly, if $a_i \geq b_i^j$ and $f_i(\c) \leq f_i(\b^j)$ for every $j \in [\ell]$,
 then $w_i(\a, \b^1, \ldots, \b^{\ell}, \c) \geq w_i(\a,\c)$.
 \end{lemma}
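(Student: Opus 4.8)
I would prove this by summation by parts, with no reordering of the profiles. The plan is to set $S_0:=0$ and $S_k:=b_i^k-a_i$ for $k\in[\ell]$, and $\epsilon_k:=f_i(\c)-f_i(\b^k)$. The hypotheses $a_i\le b_i^j$ and $f_i(\c)\ge f_i(\b^j)$ say exactly that $S_k\ge 0$ and $\epsilon_k\ge 0$ for every $k$. Expanding the definition of $w_i$ and telescoping it against $w_i(\a,\c)=a_i(f_i(\c)-f_i(\a))$ gives the identity
\[
 w_i(\a,\b^1,\dots,\b^\ell,\c)-w_i(\a,\c)=\sum_{k=1}^{\ell}(S_k-S_{k-1})\,\epsilon_k ,
\]
and Abel summation (using $S_0=0$) turns the right-hand side into
\[
 S_\ell\,\epsilon_\ell+\sum_{k=1}^{\ell-1}S_k\,(\epsilon_k-\epsilon_{k+1}).
\]
Since $S_\ell\ge 0$ and $\epsilon_\ell\ge 0$ the first term is non-negative, so the inequality follows as soon as $\epsilon_k\ge\epsilon_{k+1}$, i.e.\ $f_i(\b^k)\le f_i(\b^{k+1})$, for each $k$: then every summand is a product of two non-negative numbers. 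The symmetric statement follows from the very same computation with all inequalities reversed ($S_k\le 0$, $\epsilon_k\le 0$, $\epsilon_k\le\epsilon_{k+1}$), the Abel expansion then being a sum of products of two non-positive numbers.

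The step I expect to be the real obstacle is certifying the monotonicity $f_i(\b^k)\le f_i(\b^{k+1})$ of the intermediate outcomes along the listed order: this does not follow from the bare bounds $a_i\le b_i^j$ and $f_i(\c)\ge f_i(\b^j)$ alone, so I would take it as part of the hypothesis (equivalently, derive it from the way the sequence sits inside $\ver$: in every use we make of this lemma the $\b^1,\dots,\b^\ell$ occur in the order in which they appear along the relevant sub-path of the negative cycle, and along that sub-path the values $f_i$ move monotonically toward $f_i(\c)$). Granting that, the arithmetic above is routine; the only non-trivial ingredient is this ordering of the $\b^j$.

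An alternative inductive presentation builds literally on Lemma~\ref{lem:angle_redux_step1}: induct on $\ell$; if the types $b_i^1,\dots,b_i^\ell$ are already monotone, invoke Lemma~\ref{lem:angle_redux_step1} directly; otherwise locate the first descent, say $a_i\le b_i^1\le\cdots\le b_i^{m}$ with $b_i^{m}>b_i^{m+1}$, and delete $\b^{m}$ from the sequence. A one-line computation gives
\[
 w_i(\a,\b^1,\dots,\b^\ell,\c)-w_i(\text{sequence with }\b^{m}\text{ deleted})=(b_i^{m-1}-b_i^{m})\bigl(f_i(\b^{m})-f_i(\b^{m+1})\bigr),
\]
whose first factor is $\le 0$ by the choice of $m$ and whose second factor is $\le 0$ once the outcome ordering above is in hand, so the deletion does not increase $w_i$; the shortened sequence still satisfies the hypotheses with $\ell-1$ intermediate profiles, and the inductive hypothesis closes the argument. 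Either route isolates the same crux --- the ordering of the $f_i(\b^j)$ --- while the rest is bookkeeping.
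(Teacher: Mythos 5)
Your Abel identity is exactly right and gives a cleaner derivation than the paper's. Your diagnosis of the obstacle is also right, and in fact the situation is sharper than you suggest: the lemma is false as stated. Take $\ell=2$, $a_i=0$, $b_i^1=1$, $b_i^2=0$, $f_i(\a)=0$, $f_i(\b^1)=f_i(\c)=1$, $f_i(\b^2)=0$. Every stated hypothesis holds, yet $w_i(\a,\b^1,\b^2,\c)=-1<0=w_i(\a,\c)$. The paper's own proof (induction on $\ell$, deleting a profile $\b^j$ that lies ``in between'' its neighbours and invoking Lemma~\ref{lem:angle_redux_step1}) has precisely the gap you flagged: knowing that $b_i^j$ lies between $b_i^{j-1}$ and $b_i^{j+1}$ does not supply the ordering between $f_i(\b^j)$ and $f_i(\b^{j+1})$ that Lemma~\ref{lem:angle_redux_step1} needs, so the triple $(\b^{j-1},\b^j,\b^{j+1})$ need not satisfy its hypotheses. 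Your inductive ``delete the first descent'' variant is the paper's argument with that missing ordering made explicit; your Abel form is the cleaner presentation.

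One caveat on the repair you sketch. The extra hypothesis $f_i(\b^1)\le\cdots\le f_i(\b^\ell)$ closes the algebra, but it does not describe the sequences on which the paper actually invokes the lemma. In the proof that $\M'$ is ordered OSP the lemma is applied (via Lemma~\ref{lem:angle_redux_step3} and Corollary~\ref{cor:angle_redux}) to a path inside a 2CMON graph $\verp$, and along such a path every consecutive edge gives the \emph{anti}-monotone alignment $b_i^j>b_i^{j+1}\Leftrightarrow f_i(\b^j)\le f_i(\b^{j+1})$, not monotonicity of the $f_i(\b^j)$ towards $f_i(\c)$. It is exactly this 2CMON constraint, not outcome-monotonicity, that kills the counterexample above: there $b_i^1>b_i^2$ together with $f_i(\b^1)>f_i(\b^2)$ would already violate 2CMON on the pair $(\b^1,\b^2)$. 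So the hypothesis that matches the paper's use is that consecutive intermediate profiles satisfy 2CMON -- in your notation, that $(S_k)$ and $(\epsilon_k)$ are comonotone -- and the real content of the lemma becomes the claim that this comonotonicity, together with $S_k,\epsilon_k\ge 0$ and $\epsilon_{\ell+1}=0$, forces your Abel sum $S_\ell\epsilon_\ell+\sum_{k<\ell}S_k(\epsilon_k-\epsilon_{k+1})$ to be non-negative. That verification is not supplied by the paper either, and is the piece of work that actually needs to be written out here.
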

\begin{proof}
Let us first consider the case that $a_i \leq b_i^j$ and $f_i(\c) \geq f_i(\b^j)$ for every $j \in [\ell]$.

The proof is by induction on $\ell$. If $\ell = 1$, then the sequence $(\a, \b^1, \c)$ satisfies the conditions of Lemma~\ref{lem:angle_redux_step1} and the claim follows.
Suppose now, by inductive hypothesis, that the claim holds for every sequence of profiles with $\ell + 2$ nodes that satisfy the conditions that $a_i \leq b_i^j$ and $f_i(\c) \geq f_i(\b^j)$ for every $j \in [\ell]$. Consider a sequence of profiles $(\a, \b^1, \ldots, \b^{\ell+1}, \c)$. If $b_i^1 \leq \cdots \leq b_i^{\ell+1}$, then the conditions of Lemma~\ref{lem:angle_redux_step1} are satisfied and the claim follows. Otherwise there is $j \in \{2, \ldots, \ell-1\}$ such that $b_i^j$ is in between $b_i^{j-1}$ and $b_i^{j+1}$. However, the sequence of profiles $(\b^{j-1}, \b^j, \b^{j+1})$ satisfies the conditions of Lemma~\ref{lem:angle_redux_step1}, and by applying the lemma, we obtain that $w_i(\b^{j-1}, \b^j, \b^{j+1}) \geq w_i(\b^{j-1}, \b^{j+1})$.
Hence, it immediately follows that $w_i(\a, \b^1, \ldots, \b^{\ell+1}, \c) \geq w_i(\a, \b^1, \ldots, \b^{j-1}, \b^{j+1}, \ldots, \b^{\ell+1}, \c) \geq w_i(\a,\c)$, where the last inequality follows from the inductive hypothesis, since $(\a, \b^1, \ldots, \b^{j-1}, \b^{j+1}, \ldots, \b^{\ell+1}, \c)$ has length $\ell + 2$ and satisfies the necessary conditions.

The proof for the remaining case, i.e., when $a_i \geq b_i^j$ and $f_i(\c) \leq f_i(\b^j)$ for every $j \in [\ell]$, is very similar and hence omitted. 
%
\end{proof}

We can strengthen the path reduction lemma even further. Towards this end, let $(\a, \b^1, \ldots, \b^{\ell}, \c)$ be a sequence of profiles. We say that $(\a, \b^1, \ldots, \b^{\ell}, \c)$ is \emph{good} if for every $j \in [\ell]$ either 
 \emph{(i)} $a_i \leq b_i^j$ and $f_i(\c) \geq f_i(\b^j)$; or, \emph{(ii)} 
 $a_i \geq b_i^j$ and $f_i(\c) \leq f_i(\b^j)$. \label{cond:2}

\begin{lemma}[Strong path reduction]
\label{lem:angle_redux_step3}
 If $(\a, \b^1, \ldots, \b^{\ell}, \c)$ is a good sequence of profiles, then $$w_i(\a, \b^1, \ldots, \b^{\ell}, \c) \geq w_i(\a,\c).$$
 \end{lemma}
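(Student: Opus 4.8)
The plan is to reduce the Strong path reduction lemma to the ordinary Path reduction lemma (Lemma~\ref{lem:angle_redux_step2}) by splitting a good sequence into maximal blocks of profiles that all satisfy the \emph{same} alternative, condition (i) or condition (ii), and iron each block separately. Concretely, I would partition the index set $[\ell]$ into maximal consecutive runs $I_1, I_2, \ldots, I_m$ such that all profiles $\b^j$ with $j$ in a fixed run $I_s$ satisfy the same one of the two conditions; a profile satisfying both conditions (which forces $f_i(\b^j)=f_i(\c)$ or $a_i=b^j_i$) can be assigned to either adjacent run. Within a single run all the $\b^j$ satisfy, say, condition (i): $a_i \leq b^j_i$ and $f_i(\c) \geq f_i(\b^j)$ for every $j$ in the run (or symmetrically condition (ii)). So Lemma~\ref{lem:angle_redux_step2} applies \emph{to that subpath}, replacing $(\a, \b^{j_1}, \ldots, \b^{j_k}, \c)$ by the direct edge $(\a,\c)$ with no increase of weight.

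The issue with that naive approach is that the anchor of each sub-application has to be $\a$ itself for Lemma~\ref{lem:angle_redux_step2} to apply in the stated form, but the blocks in the middle of the sequence do not start at $\a$. So instead I would process the sequence from the left, maintaining the invariant that we have ironed the prefix down to $(\a, \b^{j}, \ldots)$ where $\b^{j}$ begins the current block. The cleanest formulation: let $j$ be the largest index such that $\b^1, \ldots, \b^{j}$ all satisfy the \emph{same} condition as $\b^1$. If that condition is (i), then $(\a, \b^1, \ldots, \b^j)$ has $a_i \leq b^t_i$ for all $t \le j$, and $f_i(\b^{j+1}) \ge ?$ — here I need the target outcome of the sub-reduction to dominate all the $f_i(\b^t)$. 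Since $f_i(\c) \ge f_i(\b^t)$ for all $t \le j$, I can first apply Lemma~\ref{lem:angle_redux_step2} with target $\c$ to the whole good sequence restricted to the indices in the first block plus the tail, but this tail is not constant. The robust way around this is to use a two-level induction: induct on the number $m$ of blocks, and inside each step peel the last block using the ``similarly'' (reverse) half of Lemma~\ref{lem:angle_redux_step2} with anchor $\a$.

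So here is the plan I would actually write up. Induct on $\ell$ (or equivalently on the number of blocks). If all of $\b^1, \ldots, \b^\ell$ satisfy the same alternative, Lemma~\ref{lem:angle_redux_step2} gives the claim directly. Otherwise there is an index $j\in\{2,\ldots,\ell\}$ where the satisfied alternative switches: $\b^{j-1}$ satisfies (say) (i) and $\b^{j}$ satisfies (ii) (the other direction is symmetric). From (i) for $\b^{j-1}$ we have $f_i(\b^{j-1})\le f_i(\c)$ and from (ii) for $\b^{j}$ we have $f_i(\b^{j})\ge f_i(\c)$, so $f_i(\b^{j-1})\le f_i(\b^{j})$; also $b^{j-1}_i\ge a_i$ and $b^{j}_i\le a_i$, so $b^{j-1}_i\ge b^{j}_i$ is \emph{not} forced — but I can instead observe that the triple $(\b^{j-1},\c',\b^{j})$ with $\c'$ chosen as $\c$ does \emph{not} quite fit Lemma~\ref{lem:angle_redux_step1}, so the better move is to split the original sum at $\c$ directly: write $w_i(\a,\b^1,\ldots,\b^\ell,\c)$ and insert an artificial intermediate copy of $\c$ is not legal since $\c$ need not equal any $\b^j$. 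The actual mechanism is to reduce the prefix $(\a,\b^1,\ldots,\b^{j-1})$ ending into $\b^{j}$: since $\b^1,\ldots,\b^{j-1}$ all satisfy (i) \emph{relative to $\c$}, and since $f_i(\b^{j})\ge f_i(\c)\ge f_i(\b^t)$ for $t<j$, Lemma~\ref{lem:angle_redux_step2} (first half, with target outcome $f_i(\b^{j})$ in place of $f_i(\c)$, which still dominates all $f_i(\b^t)$, $t<j$, and with $a_i\le b^t_i$) yields $w_i(\a,\b^1,\ldots,\b^{j-1},\b^{j})\ge w_i(\a,\b^{j})$. Hence $w_i(\a,\b^1,\ldots,\b^\ell,\c)\ge w_i(\a,\b^{j},\b^{j+1},\ldots,\b^\ell,\c)$, and the latter is a good sequence with anchor $\a$ and strictly fewer than $\ell$ intermediate profiles, so the induction hypothesis finishes the proof. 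The symmetric case (switch from (ii) to (i)) is identical with inequalities reversed.

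The main obstacle, and the place I would be most careful, is exactly the bookkeeping in the inductive step: making sure that when I iron the prefix $(\a,\b^1,\ldots,\b^{j-1})$ into the single profile $\b^{j}$, the hypotheses of Lemma~\ref{lem:angle_redux_step2} are genuinely met with \emph{target outcome} $f_i(\b^j)$ rather than $f_i(\c)$ — this works only because the switch of alternatives at index $j$ forces $f_i(\b^j)\ge f_i(\c)\ge f_i(\b^t)$ for every $t$ in the first block (in the (i)$\to$(ii) case), and the mirror inequality in the (ii)$\to$(i) case. Once that monotone sandwich is spelled out the rest is a routine induction, and I would present it compactly rather than writing out both symmetric cases in full.
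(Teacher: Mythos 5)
Your final argument is correct and essentially the same as the paper's: both peel off the first maximal block of intermediate profiles satisfying one alternative (ending at the profile where the switch occurs) via Lemma~\ref{lem:angle_redux_step2} applied with the switch profile as the target, then recurse — the paper inducts on the number of ``swing indices'' while you induct on $\ell$, an inessential difference. Just state explicitly that $j$ is the \emph{first} switch index (your claim that $\b^1,\ldots,\b^{j-1}$ all satisfy (i) depends on it), and delete the incorrect aside that $b^{j-1}_i \geq b^j_i$ is ``not forced''; it does follow, since $b^{j-1}_i \geq a_i \geq b^j_i$.
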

\begin{proof}
Consider a good sequence of profiles $(\a, \b^1, \ldots, \b^{\ell}, \c)$. For $j \in [\ell]$, we say that $j$ is a \emph{swing} index if condition~\emph{(i)} 
 holds for $j$ but not for $j+1$, or vice versa. 
The proof is by induction on the number $k$ of swing indices in $(\a, \b^1, \ldots, \b^{\ell}, \c)$. If $k = 0$, then this sequence of profiles satisfies the conditions of Lemma~\ref{lem:angle_redux_step2} since it is good and the claim follows.

Suppose now, by inductive hypothesis, that the claim holds for every good sequence of profiles with $k$ swing indices, and consider a good sequence of profiles $(\a, \b^1, \ldots, \b^{\ell}, \c)$ with $k+1$ swing indices. Let $j$ be the first swing vertex in $(\a, \b^1, \ldots, \b^{\ell}, \c)$. Then we have that $(\a, \b^1, \ldots, \b^{j}, \b^{j+1})$ satisfies the conditions of Lemma~\ref{lem:angle_redux_step2}, and thus $w_i(\a, \b^1, \ldots, \b^{j}, \b^{j+1}) \geq w_i(\a,\b^{j+1})$. Hence it follows that $w_i(\a, \b^1, \ldots, \b^{\ell}, \c) \geq w_i(\a, \b^{j+1}, \ldots, \b^{\ell}, \c) \geq w_i(\a,\c)$, where the last inequality follows from the inductive hypothesis, since $(\a, \b^{j+1}, \ldots, \b^{\ell}, \c)$ is a good sequence of profiles with $k$ swing indices.
\end{proof}

\begin{corollary}[Path reduction in the \vgraph]\label{cor:angle_redux}
 If in $\ver$ there is a negative cycle that uses a path $P:=\a \rightarrow \b^1  \rightsquigarrow \b^\ell \rightarrow \c$ that is a good sequence of profiles, and the edge $(\a,\c)$ exists in $\ver$, then the cycle obtained by replacing the path $P$ with the edge $(\a, \c)$ is still negative.
 \end{corollary}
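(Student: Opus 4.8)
The plan is to derive Corollary~\ref{cor:angle_redux} directly from the Strong path reduction lemma (Lemma~\ref{lem:angle_redux_step3}), treating it as essentially a bookkeeping statement that transfers the inequality on the $w_i$ functional to an inequality on cycle weights in the \vgraph. First I would fix notation: let $C$ be the given negative cycle that uses the path $P := \a \rightarrow \b^1 \rightsquigarrow \b^\ell \rightarrow \c$, and let $C'$ be the cycle obtained from $C$ by deleting the edges of $P$ and inserting the single edge $(\a,\c)$, which exists in $\ver$ by hypothesis. Note $C'$ is genuinely a cycle (we have only short-circuited a subpath), and its edge $(\a,\c)$ is present in the graph, so $C'$ is a legitimate object whose weight is defined.

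Next I would write the weight decomposition. Let $\partial C$ denote the set of edges of $C$ not on $P$; these are exactly the edges of $C'$ other than $(\a,\c)$. Since weights are additive along a path/cycle, $w(C) = w_i(\a, \b^1, \ldots, \b^\ell, \c) + \sum_{(\x,\y)\in\partial C} w(\x,\y)$ and $w(C') = w_i(\a, \c) + \sum_{(\x,\y)\in\partial C} w(\x,\y)$, where I use the observation from the paper that for a path in $\ver$ the value of $w_i$ coincides with the sum of the edge weights $w$. Subtracting,
\[
w(C) - w(C') = w_i(\a, \b^1, \ldots, \b^\ell, \c) - w_i(\a,\c) \geq 0,
\]
where the inequality is exactly Lemma~\ref{lem:angle_redux_step3}, applicable because by hypothesis $(\a, \b^1, \ldots, \b^\ell, \c)$ is a good sequence of profiles. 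Hence $w(C') \leq w(C) < 0$, so $C'$ is still a negative cycle, which is the claim.

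I do not expect any real obstacle here: the only things to be careful about are (a) making sure $C'$ is well-formed as a cycle in $\ver$ — this is where the hypothesis ``the edge $(\a,\c)$ exists in $\ver$'' is used, and it is the sole edge-existence assumption needed, since the inner profiles $\b^1,\ldots,\b^\ell$ are simply discarded; and (b) correctly identifying that the $w_i$-functional equals the path weight so that the telescoping cancellation of the $\partial C$ part is valid. If $C$ were not a simple cycle or $P$ overlapped itself, one would want to note that $w_i$ is defined purely combinatorially on the sequence of profiles, so the argument is unaffected; but in the intended application $C$ is simple and this is a non-issue. Everything else is the immediate quotation of Lemma~\ref{lem:angle_redux_step3}.
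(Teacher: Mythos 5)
Your proof is correct and is exactly the intended argument: the paper leaves this corollary as an immediate consequence of Lemma~\ref{lem:angle_redux_step3}, and your decomposition $w(C) - w(C') = w_i(\a, \b^1, \ldots, \b^\ell, \c) - w_i(\a,\c) \geq 0$ is the standard way to transfer the $w_i$-inequality to cycle weights. The bookkeeping about $C'$ being a valid cycle via the existence of $(\a,\c)$ is also the right (and only) thing to check.
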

 
The third ironing operation allows to replace a path with two edges; we call this path replacement. 
 
\begin{lemma}[Weak path replacement]
\label{lem:angle_replace_step1}
Let $(\a, \b^1, \ldots, \b^\ell, \c)$ be a sequence of profiles and $\d$ be a profile different from $\a$ and $\c$.
If $a_i \leq d_i \leq b^1_i \leq \cdots \leq b^\ell_1$, $f_i(\c) \geq f_i(\d) \geq \max_{j \in \ell} f_i(\b^j)$,
then $w_i(\a, \b^1, \ldots, \b^\ell, \c) \geq w_i(\a,\d,\c)$. 
Similarly, if $a_i \geq d_i \geq b^1_i \geq \cdots \geq b^\ell_1$, $f_i(\c) \leq f_i(\d) \leq \min_{j \in \ell} f_i(\b^j)$,
then $w_i(\a, \b^1, \ldots, \b^\ell, \c) \geq w_i(\a,\d,\c)$.
\end{lemma}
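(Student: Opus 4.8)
The plan is to reduce the replacement statement to the already-proven weak path reduction (Lemma~\ref{lem:angle_redux_step1}) by first isolating the profile $\d$ as a separate ``hop'' and then bounding the tail. Concretely, in the first case (where $a_i \leq d_i \leq b^1_i \leq \cdots \leq b^\ell_i$ and $f_i(\c) \geq f_i(\d) \geq \max_j f_i(\b^j)$), I would write
\[
 w_i(\a, \b^1, \ldots, \b^\ell, \c) = w_i(\a, \d) + \big[w_i(\d, \b^1, \ldots, \b^\ell, \c) - a_i(f_i(\d) - f_i(\a))\big] + a_i(f_i(\d) - f_i(\a)),
\]
so it suffices to compare $w_i(\d, \b^1, \ldots, \b^\ell, \c)$ against $w_i(\d, \c)$ and then observe that $w_i(\a,\d,\c) = w_i(\a,\d) + w_i(\d,\c)$. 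The point is that the sequence $(\d, \b^1, \ldots, \b^\ell, \c)$ satisfies \emph{exactly} the hypotheses of Lemma~\ref{lem:angle_redux_step1} with $\d$ playing the role of the ``$\a$'' there: $d_i \leq b^1_i \leq \cdots \leq b^\ell_i$ and $f_i(\c) \geq \max_j f_i(\b^j)$. Hence Lemma~\ref{lem:angle_redux_step1} gives $w_i(\d, \b^1, \ldots, \b^\ell, \c) \geq w_i(\d, \c)$, and adding $w_i(\a,\d)$ to both sides yields the claim.

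The second case is entirely symmetric: when $a_i \geq d_i \geq b^1_i \geq \cdots \geq b^\ell_i$ and $f_i(\c) \leq f_i(\d) \leq \min_j f_i(\b^j)$, the sequence $(\d, \b^1, \ldots, \b^\ell, \c)$ meets the second half of Lemma~\ref{lem:angle_redux_step1}'s hypotheses (decreasing types from $\d$, and $f_i(\c) \leq \min_j f_i(\b^j)$), so $w_i(\d, \b^1, \ldots, \b^\ell, \c) \geq w_i(\d,\c)$ and again we add $w_i(\a,\d)$. In both cases the only identity one really needs is the telescoping additivity of $w_i$ along a concatenation of sequences that share an endpoint, namely $w_i(\s, \ldots, \t, \ldots, \z) = w_i(\s,\ldots,\t) + w_i(\t,\ldots,\z)$, which is immediate from the definition $w_i(\x^1, \ldots, \x^\ell) = \sum_{j} x_i^j(f_i(\x^{j+1}) - f_i(\x^j))$.

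I do not expect any genuine obstacle here; the statement is essentially Lemma~\ref{lem:angle_redux_step1} with a harmless extra prefix edge $(\a,\d)$ prepended. The only thing to be careful about is checking that the hypotheses on $\d$ (namely $a_i \leq d_i$ and $f_i(\c) \geq f_i(\d)$, resp. their reverses) are not actually used in the core reduction argument but only serve to guarantee that the two-edge path $\a \to \d \to \c$ is the ``right'' path to compare against — in fact, for the weight inequality itself one does not even need the relation between $a_i$ and $d_i$, since after splitting off $w_i(\a,\d)$ the remaining comparison is purely between the $\d$-rooted sequence and the edge $(\d,\c)$. (Those extra conditions on $\d$ become relevant only in the corollary/application, where one needs the edges $(\a,\d)$ and $(\d,\c)$ to actually exist in $\ver$ and to be consistent with 2CMON.) So the proof is a two-line invocation of the previous lemma plus the additivity identity, done once for each monotonicity direction.
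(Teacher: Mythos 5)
Your plan contains a genuine gap, and in fact the place where you claim a harmless identity is precisely where the extra hypotheses on $\d$ are needed. The displayed ``identity''
\[
 w_i(\a, \b^1, \ldots, \b^\ell, \c) \;=\; w_i(\a, \d) + \big[w_i(\d, \b^1, \ldots, \b^\ell, \c) - a_i(f_i(\d) - f_i(\a))\big] + a_i(f_i(\d) - f_i(\a))
\]
is false: the right-hand side collapses to $w_i(\a,\d) + w_i(\d, \b^1, \ldots, \b^\ell, \c) = w_i(\a, \d, \b^1, \ldots, \b^\ell, \c)$, which differs from the left-hand side by exactly $(a_i - d_i)\bigl(f_i(\d) - f_i(\b^1)\bigr)$. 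The telescoping additivity you invoke, $w_i(\s, \ldots, \t, \ldots, \z) = w_i(\s,\ldots,\t) + w_i(\t,\ldots,\z)$, is valid only when $\t$ is literally an element of the sequence; here $\d$ is not in $(\a, \b^1, \ldots, \b^\ell, \c)$, so you may not split there. Consequently your closing remark — that the relation between $a_i$ and $d_i$ (and between $f_i(\d)$ and the $f_i(\b^j)$'s) is not used in the weight inequality — is exactly backwards: those hypotheses are precisely what make the missing step true.

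The reduction to Lemma~\ref{lem:angle_redux_step1} is salvageable, but you need one additional, nontrivial step: show that \emph{inserting} $\d$ between $\a$ and $\b^1$ cannot increase the weight, i.e.\ $w_i(\a, \b^1, \ldots, \b^\ell, \c) \geq w_i(\a, \d, \b^1, \ldots, \b^\ell, \c)$. Expanding both sides, this is equivalent to $(a_i - d_i)\bigl(f_i(\b^1) - f_i(\d)\bigr) \geq 0$, which holds in the first case because $a_i \leq d_i$ and $f_i(\d) \geq f_i(\b^1)$, and in the second because both inequalities reverse. Having established that insertion only decreases the weight, you may then split at $\d$, apply Lemma~\ref{lem:angle_redux_step1} to $(\d, \b^1, \ldots, \b^\ell, \c)$, and add $w_i(\a,\d)$, obtaining $w_i(\a, \b^1, \ldots, \b^\ell, \c) \geq w_i(\a, \d, \b^1, \ldots, \b^\ell, \c) \geq w_i(\a,\d,\c)$. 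With this fix your route is genuinely different from — and cleaner than — the paper's, which computes $\Delta w_i = w_i(\a, \b^1, \ldots, \b^\ell, \c) - w_i(\a, \d, \c)$ directly by telescoping and regrouping and then lower-bounds the result term by term; your version factors the argument through the previously proven reduction lemma and an elementary one-step insertion inequality, at the cost of needing to notice (rather than sweep under the rug) that the insertion step is where the hypotheses on $\d$ do their work.
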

\begin{proof}
Observe that
$$w_i(\a, \b^1, \ldots, \b^\ell, \c) = a_i(f_i(\b^1) - f_i(\a)) + \sum_{j=1}^{\ell-1} b^j_i(f_i(\b^{j+1}) - f_i(\b^j)) + b^\ell_i(f_i(\c) - f_i(\b^\ell))$$
and 
$w_i(\a,\d,\c) = a_i(f_i(\d) - f_i(\a)) + d_i(f_i(\c) - f_i(\d)).$ 
Let $\Delta w_i = w_i(\a, \b^1, \ldots, \b^\ell, \c) - w_i(\a, \d, \c)$. Then we have that
$$
 \Delta w_i = f_i(\c)(b_i^\ell - d_i) - f_i(\b^1)(b_i^1 - a_i)- \sum_{j=2}^\ell f_i(\b^j)(b_i^j - b_i^{j-1}) + f_i(\d)(d_i - a_i).
$$

Consider first the case that $a_i \geq d_i \geq b^1_i \geq \cdots \geq b^\ell_i$ and $f_i(\c) \leq f_i(\d) \leq f^-$.
Then, we get:
\begin{align*}
 \Delta w_i & = f_i(\b^1)(a_i - b_i^1) + \sum_{j=2}^\ell f_i(\b^j)(b_i^{j-1} - b_i^{j}) - f_i(\d)(a_i - d_i) - f_i(\c)(d_i - b_i^\ell)\\
 & \geq f^- (a_i - b_i^\ell) - f_i(\d)(a_i - d_i) - f_i(\c)(d_i - b_i^\ell)\\
 & = (f^- - f_i(\d)) (a_i - b_i^\ell) + (f_i(\d)-f_i(\c))(d_i - b_i^\ell) \geq 0,
\end{align*}
where the first inequality follows since $f_i(\x) \geq 0$ for every outcome $\x$, and all the differences among types are non-negative by hypothesis, and the second inequality follows since $f^- \geq f_i(\d) \geq f_i(\c)$ and $a_i \geq d_i \geq b_i^\ell$ by hypothesis.

Consider now the case that $a_i \leq d_i \leq b^1_i \leq \cdots \leq b^\ell_i$ and $f_i(\c) \geq f_i(\d) \geq f^+ = \max_{j \in \ell} f_i(\b^j)$:
\begin{align*}
 \Delta w_i & \geq f_i(\c) (b^\ell_i - d_i) - f^+(\b^1) (b_i^1 - a_i) - \sum_{j=2}^\ell f^+ (b_i^j - b_i^{j-1}) + f_i(\d)(d_i - a_i)\\
 & = f_i(\c) (b^\ell_i - d_i) - f^+ (b_i^\ell - a_i) + f_i(\d)(d_i - a_i)\\
 & = (f_i(\c)-f_i(\d))(b_i^\ell - d_i) + (f_i(\d) - f^+) (b_i^\ell - a_i) \geq 0,
\end{align*}
where the first inequality follows since $f_i(\x) \geq 0$ for every outcome $\x$, and all the differences among types are non-negative by hypothesis, and the second inequality follows since $f_i(\c) \geq f_i(\d) \geq f^+$ and $a_i \leq d_i \leq b_i^\ell$ by hypothesis.
\end{proof}

It is easy to see that next lemma immediately follows by applying the proofs of Lemma~\ref{lem:angle_redux_step2} and Lemma~\ref{lem:angle_redux_step3}, respectively, and simply invoking Lemma~\ref{lem:angle_replace_step1} in place of Lemma~\ref{lem:angle_redux_step1}.

\begin{lemma}[Path replacement]
\label{lem:angle_replace_step2}
 Let $(\a, \b^1, \ldots, \b^{\ell}, \c)$ be a sequence of profiles and $\d$ be a profile different from $\a$ and $\c$.
 If $a_i \leq d_i \leq b_i^j$ and $f_i(\c) \geq f_i(\d) \geq f_i(\b^j)$ for every $j \in [\ell]$,
 then $w_i(\a, \b^1, \ldots, \b^{\ell}, \c) \geq w_i(\a, \d, \c)$. 
 Similarly, if $a_i \geq d_i \geq b_i^j$ and $f_i(\c) \leq f_i(\d) \leq f_i(\b^j)$ for every $j \in [\ell]$,
 then $w_i(\a, \b^1, \ldots, \b^{\ell}, \c) \geq w_i(\a, \d, \c)$.
 \end{lemma}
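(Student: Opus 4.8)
The plan is to re-run, essentially verbatim, the inductive argument behind Lemma~\ref{lem:angle_redux_step2}, with the only change that each reduction of a sequence to \emph{a single edge} gets replaced by a reduction to \emph{the two-edge path through $\d$}, i.e.\ Lemma~\ref{lem:angle_replace_step1} takes over the role played by Lemma~\ref{lem:angle_redux_step1}. First I would dispose of the symmetry: reversing simultaneously every type inequality and every outcome inequality leaves all the $w_i(\cdot)$ expressions untouched, so it suffices to prove the first statement, i.e.\ the case $a_i \le d_i \le b^j_i$ and $f_i(\c) \ge f_i(\d) \ge f_i(\b^j)$ for all $j \in [\ell]$; the ``Similarly'' statement is its mirror image.

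I would then induct on $\ell$. For $\ell = 1$ the triple $(\a,\b^1,\c)$ with pivot $\d$ satisfies exactly the hypotheses of Lemma~\ref{lem:angle_replace_step1} (the chain $a_i \le d_i \le b^1_i$ together with $f_i(\c) \ge f_i(\d) \ge f_i(\b^1)$), giving $w_i(\a,\b^1,\c) \ge w_i(\a,\d,\c)$. For the step, take a sequence $(\a,\b^1,\dots,\b^{\ell+1},\c)$ meeting the hypotheses. If the type coordinates happen to be sorted, $b^1_i \le \cdots \le b^{\ell+1}_i$, then — using also $a_i \le d_i \le b^1_i$ and $f_i(\c) \ge f_i(\d) \ge \max_j f_i(\b^j)$, which follows from the per-index conditions — Lemma~\ref{lem:angle_replace_step1} applies directly and we are done. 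Otherwise some interior index $j$ has $b^j_i$ lying weakly between $b^{j-1}_i$ and $b^{j+1}_i$; then, exactly as in the proof of Lemma~\ref{lem:angle_redux_step2}, the triple $(\b^{j-1},\b^j,\b^{j+1})$ meets the hypotheses of Lemma~\ref{lem:angle_redux_step1}, so $w_i(\b^{j-1},\b^j,\b^{j+1}) \ge w_i(\b^{j-1},\b^{j+1})$ and hence $w_i(\a,\b^1,\dots,\b^{\ell+1},\c) \ge w_i(\a,\b^1,\dots,\b^{j-1},\b^{j+1},\dots,\b^{\ell+1},\c)$. The shortened sequence, with the same pivot $\d$, still satisfies all the hypotheses (they quantify over the surviving indices only), so the inductive hypothesis bounds it below by $w_i(\a,\d,\c)$, and chaining the two inequalities closes the induction.

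The only delicate point I anticipate is purely bookkeeping: verifying that the shortened sequence still meets the ``for every index'' hypotheses (immediate) and that the chosen triple really falls into one of the two branches of Lemma~\ref{lem:angle_redux_step1} — the step imported from the proof of Lemma~\ref{lem:angle_redux_step2}. If one prefers to sidestep the induction entirely, there is a more direct route: insert $\d$ right after $\a$; by the definition of $w_i$, $w_i(\a,\d,\b^1,\dots,\b^\ell,\c) - w_i(\a,\b^1,\dots,\b^\ell,\c) = (a_i - d_i)(f_i(\d) - f_i(\b^1)) \le 0$ since $a_i \le d_i$ and $f_i(\d) \ge f_i(\b^1)$, and then, splitting at $\d$ and using additivity of $w_i$, $w_i(\a,\d,\b^1,\dots,\b^\ell,\c) = w_i(\a,\d) + w_i(\d,\b^1,\dots,\b^\ell,\c) \ge w_i(\a,\d) + w_i(\d,\c) = w_i(\a,\d,\c)$, where the middle inequality is Lemma~\ref{lem:angle_redux_step2} applied to the tail $(\d,\b^1,\dots,\b^\ell,\c)$, whose hypotheses $d_i \le b^j_i$ and $f_i(\c) \ge f_i(\b^j)$ are part of what we assumed. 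Either way the reversed statement follows by the same argument with all orientations flipped.
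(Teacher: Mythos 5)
Both routes you sketch are correct; they are, however, of quite different character relative to the paper.

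Your first route is exactly what the paper does: the authors dispose of the lemma with a one-line remark that one should replay the induction of Lemma~\ref{lem:angle_redux_step2}, substituting Lemma~\ref{lem:angle_replace_step1} for Lemma~\ref{lem:angle_redux_step1} in the base step. Your reconstruction of that induction — base case $\ell=1$ via Lemma~\ref{lem:angle_replace_step1}, then remove a local peak/valley $\b^j$ by Lemma~\ref{lem:angle_redux_step1} and apply the inductive hypothesis to the shortened sequence carrying the same pivot $\d$ — is faithful to what the paper intends and handles the bookkeeping the paper waves at.

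Your second route is genuinely different and, I think, preferable. Inserting $\d$ immediately after $\a$ yields the exact identity
$w_i(\a,\d,\b^1,\dots,\b^\ell,\c) - w_i(\a,\b^1,\dots,\b^\ell,\c) = (a_i - d_i)\bigl(f_i(\d) - f_i(\b^1)\bigr) \le 0$,
and then the additive split $w_i(\a,\d,\b^1,\dots,\b^\ell,\c) = w_i(\a,\d) + w_i(\d,\b^1,\dots,\b^\ell,\c)$ reduces the whole statement to one application of Lemma~\ref{lem:angle_redux_step2} to the tail $(\d,\b^1,\dots,\b^\ell,\c)$, whose hypotheses are literally weaker than the ones assumed. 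Chaining the two displayed facts gives $w_i(\a,\b^1,\dots,\b^\ell,\c) \ge w_i(\a,\d,\b^1,\dots,\b^\ell,\c) \ge w_i(\a,\d,\c)$, and the mirrored case follows by the same two-line computation with both orientations reversed (the single product $(a_i-d_i)(f_i(\d)-f_i(\b^1))$ stays $\le 0$). What this buys you: there is no induction to manage, and — more notably — Lemma~\ref{lem:angle_replace_step1} (Weak path replacement) is not used at all, so that auxiliary lemma becomes unnecessary as a stepping stone to Lemma~\ref{lem:angle_replace_step2} and Corollary~\ref{cor:angle_replace}. The paper's route is structurally parallel to its reduction lemmas, which has some expository appeal, but your direct argument is shorter and isolates the single algebraic fact (monotonicity under prepending $\d$) that actually makes path replacement work.
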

\begin{corollary}[Path replacement in the \vgraph]
\label{cor:angle_replace}
 Suppose that in $\ver$ there is a negative cycle that uses a path $P:=\a \rightarrow \b^1  \rightsquigarrow \b^\ell \rightarrow \c$ and there is a profile $\d$ different from $\a$ and $\c$, such that (i) $a_i \leq d_i \leq b_i^j$ and $f_i(\c) \geq f_i(\d) \geq f_i(\b^j)$ for every $j \in [\ell]$; (ii) edges $(\a,\d)$ and $(\d,\c)$ exists in $\ver$; then the cycle obtained by replacing $P$ with $\a \rightarrow \d \rightarrow \c$ is still negative. 
 Similarly, if in $\ver$ there is a negative cycle that uses a path $P:=\a \rightarrow \b^1  \rightsquigarrow \b^\ell \rightarrow \c$, and there is a profile $\d$ different from $\a$ and $\c$, such that (i) $a_i \geq d_i \geq b_i^j$ and $f_i(\c) \leq f_i(\d) \leq f_i(\b^j)$ for every $j \in [\ell]$; (ii) edges $(\a,\d)$ and $(\d,\c)$ exists in $\ver$; then the cycle obtained by replacing $P$ with $\a \rightarrow \d \rightarrow \c$ is still negative.
 \end{corollary}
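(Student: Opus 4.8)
The plan is to derive the corollary directly from the Path replacement lemma (Lemma~\ref{lem:angle_replace_step2}), using the observation recorded just after the definition of $w_i$ that, whenever a sequence of profiles is actually a path in $\ver$, its $w_i$-value coincides with the cost of that path. First I would fix the negative cycle $C$ and decompose it as the concatenation of the path $P := \a \rightarrow \b^1 \rightsquigarrow \b^\ell \rightarrow \c$ and the complementary path $Q$ running from $\c$ back to $\a$ along $C$, so that $w(C) = w_i(\a, \b^1, \ldots, \b^\ell, \c) + w(Q)$. Hypothesis (ii) guarantees that $(\a,\d)$ and $(\d,\c)$ are edges of $\ver$, so replacing $P$ by $\a \rightarrow \d \rightarrow \c$ produces a closed walk $C'$ in $\ver$ consisting of $\a \rightarrow \d \rightarrow \c$ followed by the untouched $Q$, with $w(C') = w_i(\a,\d,\c) + w(Q)$.

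Next I would invoke the relevant half of Lemma~\ref{lem:angle_replace_step2}. Under hypothesis (i), i.e. $a_i \leq d_i \leq b_i^j$ and $f_i(\c) \geq f_i(\d) \geq f_i(\b^j)$ for every $j \in [\ell]$ (and $\d \neq \a, \c$, which is part of the hypothesis), the lemma yields $w_i(\a, \b^1, \ldots, \b^\ell, \c) \geq w_i(\a,\d,\c)$; subtracting, $w(C') = w(C) - \bigl(w_i(P) - w_i(\a,\d,\c)\bigr) \leq w(C) < 0$. The second statement of the corollary is obtained by the identical argument, applying the ``decreasing'' half of Lemma~\ref{lem:angle_replace_step2} instead.

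There is no genuine obstacle here; the only two points needing a word of care are: (a) that $Q$ is reused verbatim, so the two cycle weights differ by exactly $w_i(P) - w_i(\a,\d,\c)\ge 0$; and (b) that $C'$ need not be a \emph{simple} cycle, since $\d$ or the new edges might already occur in $Q$. This last point is harmless, because a closed walk of negative total weight in $\ver$ always contains a simple cycle of negative weight, which is all that the downstream arguments about CMON require.
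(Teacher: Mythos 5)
Your argument is correct and matches the route the paper takes implicitly: the corollary is stated immediately after Lemma~\ref{lem:angle_replace_step2} with no proof, precisely because it follows by the decomposition $w(C)=w_i(P)+w(Q)$, the existence of the edges $(\a,\d)$ and $(\d,\c)$, and the inequality $w_i(\a,\b^1,\ldots,\b^\ell,\c)\geq w_i(\a,\d,\c)$, exactly as you write. Your parenthetical remark that the resulting object may be a closed walk rather than a simple cycle (and that a negative closed walk always contains a negative simple cycle, which is what CMON cares about) is a small but legitimate clarification the paper leaves tacit.
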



We want to prove that there is an OSP mechanism $\M=(f,p, \T)$ 
if and only if there is an ordered OSP mechanism $\M'=(f,p, \T')$. That is, we can morph $\T$ into $\T'$ where each query is ordered. 

One direction is trivial. Hence, we next prove that if $\M$ is a non-ordered OSP mechanism, then there must be a mechanism $\M'$ which returns the same outcomes, is OSP and is ordered. 

Specifically, let $u$ be a node of the implementation tree $\T$ of $\M$ where the mechanism performs a non-ordered query, i.e., $\max \{t \in L\} > \min \{t \in R\}$. We will denote with $\T_L$ and $\T_R$ the subtrees corresponding to the action signalling types in $L$ and $R$, respectively. By convention, we name $L$ the side containing $\min \{t \in D_i^{(u)}\}$.

Let $\T'$ be the implementation tree $\T$ of $\M$ obtained by replacing the query at such a node $u$ with a sequence of queries defined as follows. Let us partition $D_i(u)$ in:
$L_1 = \left\{t \in L \mid t < \min \{t \in R\}\right\}$; moreover for every $j \geq 1$,
$$R_j = \left\{t \in R \setminus \bigcup_{k = 1}^{j-1} R_k \mid t < \min \left\{t \in L \setminus \bigcup_{k = 1}^j L_k\right\}\right\},$$
and for every $j \geq 2$
$$L_j = \left\{t \in L \setminus \bigcup_{k = 1}^{j-1} L_k \mid t < \min \left\{t \in R \setminus \bigcup_{k = 1}^{j-1} R_k\right\}\right\}.$$
The mechanism $\M'$ then first separates $L_1$ from the rest of domain, then $R_1$ from the rest, and so on, by attaching, for every $j$, $\T_R$ and $\T_L$ to the action corresponding to $R_j$ and $L_j$, respectively. An example of this construction is shown in Figure \ref{fig:localmod}. 

\begin{figure*}[t!]
		\centering
		$  \displaystyle
		D_i^{(u)} = \{\underbrace{l_1<\cdots<l_\alpha}_{L_1} < \underbrace{r_1<\cdots<r_\beta}_{R_1} < \underbrace{l_{\alpha+1}<\cdots<l_\gamma}_{L_2} < \cdots < \underbrace{r_\delta<\cdots<r_\rho}_{R_3} \}
		$
		
		\begin{subfigure}[t]{0.41\linewidth}
			\centering
			\begin{tikzpicture}[shorten >=1pt,node distance=1.5cm,inner sep=1pt]
				\tikzstyle{place}=[circle,draw]
				\tikzstyle{placew}=[rectangle]
				
				\node[place] (x)      {$u$};
				
				\node[placew] (a)   [below left of=x]  {$\mathcal{T}_L$};
				\node[placew] (b)   [below right of=x]  {$\mathcal{T}_R$};
				
				\path[-] (x) edge node[left] {$L=\{l_1, \ldots, l_\lambda\}\;$} (a);
				\path[-] (x) edge node [right] {$\;R=\{r_1, \ldots, r_\rho\}$} (b);
			\end{tikzpicture}
		\end{subfigure}
		\qquad $\Longrightarrow$
		\begin{subfigure}[t!]{0.49\linewidth}
			\centering
			\begin{tikzpicture}[shorten >=1pt,node distance=1.5cm,inner sep=1pt]
				\tikzstyle{place}=[circle,draw]
				\tikzstyle{placew}=[rectangle]
				\node[place] (x)  		     {$v_1$};
				\node[place] (y)   [below left of=x]  {$v_2$};
				\node[place] (z) [below left of=y]  {$v_3$};
				\node[place] (w) [below left of=z]  {$v_4$};
				\node[place] (s) [below left of=w]  {$v_5$};
				
				\node[placew] (a)   [below right of=x]  {$\mathcal{T}_L$};
				\node[placew] (b)   [below right of=y]  {$\mathcal{T}_R$};
				\node[placew] (d)   [below right of=z]  {$\mathcal{T}_L$};
				\node[placew] (c)   [below right of=w]  {$\mathcal{T}_R$};
				\node[placew] (e)   [below right of=s]  {$\mathcal{T}_L$};
				\node[placew] (f)   [below left of=s]  {$\mathcal{T}_R$};
				
				\path[-] (x) edge node[left] {$D_i^{(v_1)}\setminus L_1\;\;$} (y);
				\path[-] (x) edge node [right] {$\;L_1$} (a);
				\path[-] (y) edge node [left] {$D_i^{(v_2)}\setminus R_1\;\;$} (z);
				\path[-] (y) edge node [right] {$\;R_1$} (b);
				\path[-] (z) edge node [left] {$D_i^{(v_3)}\setminus L_2\;\;$} (w);
				\path[-] (z) edge node [right] {$\;L_2$} (d);
				\path[-] (w) edge node [right] {$\;R_2$} (c);
				\path[-] (w) edge node [left] {$D_i^{(v_4)}\setminus R_2\;\;$} (s);
				\path[-] (s) edge node [right] {$\;L_3$} (e);
				\path[-] (s) edge node [left] {$R_3\;\;$} (f);
			\end{tikzpicture}
		\end{subfigure}
	\caption{An example of the way we order queries in Theorem \ref{thm:monowlog}; the original tree $\T$ is on the left, whilst the modified tree $\T'$ is on the right. The domain of $i=i(u)$ at $u \in \T$ is at the top.}
	\label{fig:localmod}
	\end{figure*}
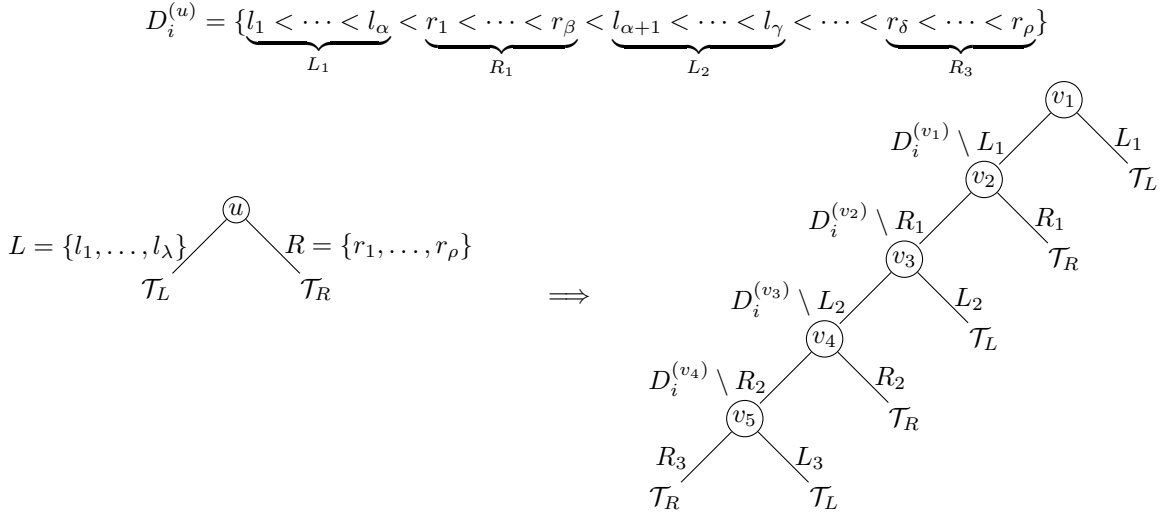

We next prove the following lemma, that completes the proof of Theorem~\ref{thm:monowlog}.
\begin{lemma}
The mechanism $\M'=(f, p, \T')$ is an ordered OSP.
\end{lemma}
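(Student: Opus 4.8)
The plan is to establish three things in turn: $\M'$ returns the same outcomes as $\M$; every query at one of the new nodes $v_1,v_2,\dots$ is ordered; and $\M'$ is OSP. For the first, I would observe that the chain $v_1,v_2,\dots$ queries only agent $i=i(u)$, so at the root of each attached copy of $\T_L$ (resp.\ $\T_R$) the domains of the agents other than $i$ still equal $D_{-i}^{(u)}$; hence a profile $\b$ with $b_i$ in a block $L_j$ (resp.\ $R_j$) follows in $\T'$, after the $v_m$‑chain, exactly the path it follows in $\T$ after going left (resp.\ right) at $u$, so $\M'(\b)=\M(\b)$ for all $\b$. This makes $(f,p)$ genuinely the pair of $\M'$, and it immediately implies that for every agent $j\neq i$ the graph $\mathcal{O}_j^{\T'}$ is a subgraph of $\mathcal{O}_j^{\T}$ (its edges come from copies, over equal or smaller domains, of the nodes of $\T$ querying $j$), hence has no negative cycle; so it will remain only to show $\verp$ has none. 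The same observation yields \emph{label preservation}: since $D_{-i}^{(v_m)}=D_{-i}^{(u)}$, we have $\L_i^{(v_m)}(t)=\L_i^{(u)}(t)$ for every $t\in D_i^{(v_m)}$.

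For orderedness of the new queries, note that at $v_m$ the separated part is the smallest block still present, so the type‑order requirement holds by construction; the work is the label requirement $\L_i^{(u)}(t)\preceq\L_i^{(u)}(l)$ for $l$ in the separated block and $t$ outside it (so $l<t$). I would split on whether $l$ and $t$ lie on the same side of the original query at $u$. If they lie on opposite sides, they are separated at $u$ in $\T$, so the two‑cycle through $u$ — using that $\M$ is OSP, hence OSP 2CMON, and $l<t$ — forces $\L_i^{(u)}(t)\preceq\L_i^{(u)}(l)$. If they lie on the same side, the interleaved definition of the blocks produces a type $s$ on the opposite side with $l<s<t$, and the opposite‑side case applied twice gives $\L_i^{(u)}(t)\preceq\L_i^{(u)}(s)\preceq\L_i^{(u)}(l)$.

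For OSP of $\M'$, I would classify the edges of $\verp$: each is induced by a node querying $i$ — a node above $u$, a node inside a copy of $\T_L$ or $\T_R$, or some $v_m$ — and edges of the first two kinds already lie in $\ver$ (the same node of $\T$ separates the same $i$‑types over an equal‑or‑larger $D_{-i}$), as does an edge from a $v_m$ whose endpoints have $i$‑coordinates on opposite original sides. Call the remaining edges \emph{extra}: an extra edge $(\x,\y)$ comes from a $v_m$ with $x_i$ and $y_i$ on the same original side but in different blocks; let $s$ be an opposite‑side type strictly between $x_i$ and $y_i$, which exists by the block structure. Then $\x\rightarrow(s,\x_{-i})$ and $(s,\x_{-i})\rightarrow\y$ are both separated at $u$ in $\T$, hence lie in $\ver$, and a short computation gives
\[
w\bigl(\x,(s,\x_{-i})\bigr)+w\bigl((s,\x_{-i}),\y\bigr)-w(\x,\y)=(s-x_i)\bigl(f_i(\y)-f_i(s,\x_{-i})\bigr)\le 0,
\]
the two factors having opposite signs (using that $s$ lies strictly between $x_i$ and $y_i$ on the opposite side and that $\M$ is OSP 2CMON). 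Replacing every extra edge of a hypothetical negative cycle of $\verp$ by the corresponding two‑edge path therefore yields a closed walk in $\ver$ of no larger, hence still negative, weight, which contains a negative cycle of $\ver$ — contradicting the OSP‑ness of $\M$. So $\verp$ has no negative cycle and, by Theorem~\ref{thm:cmon}, $\M'$ is OSP.

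A single application of the construction only orders the query at $u$; iterating it, processing the non‑ordered queries bottom‑up, orders the whole tree, since a restriction of an ordered query is still ordered (by label preservation) and there are only finitely many non‑ordered queries to process. I expect the hardest step to be the OSP argument above: pinning down exactly which new edges of $\verp$ may be missing from $\ver$, and showing each of them can be rerouted through the original decision node $u$ without increasing the weight of the cycle it belongs to — this is precisely where the orderedness of the $v_m$‑queries and OSP 2CMON of $\M$ are used.
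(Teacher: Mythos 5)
Your proposal is correct, and the OSP part takes a genuinely different (and simpler) route than the paper. For the orderedness/2CMON part you argue the same way as the paper: when $\M'$ newly separates two same-side types, pick an intermediate opposite-side type and apply OSP 2CMON at $u$ twice. For the OSP part, however, the paper splits on the sign of $f_i(\x)-f_i(\y)$. In the case $f_i(\x)\le f_i(\y)$ it does exactly your two-edge rerouting via Lemma~\ref{lem:bypass}; but for $f_i(\x)>f_i(\y)$ it instead fixes a negative cycle of \emph{fewest} edges, tracks the maximal sub-path from $\x$ until the cycle returns to a block $R_p$ with $p\le j$, shows this sub-path cannot visit $\bigcup_{q\le j+1}L_q$, and then collapses it using the ``good sequence'' machinery and Corollaries~\ref{cor:angle_redux} and~\ref{cor:angle_replace}, contradicting minimality. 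Your key observation is that the single two-edge reroute through $(s,\x_{-i})$ works uniformly in both directions: the identity $w(\x,(s,\x_{-i}))+w((s,\x_{-i}),\y)-w(\x,\y)=(s-x_i)\bigl(f_i(\y)-f_i(s,\x_{-i})\bigr)$ holds regardless of whether $x_i<y_i$ or $x_i>y_i$, and 2CMON applied to the (genuine) edge $\bigl((s,\x_{-i}),\y\bigr)$ of $\ver$ forces the two factors to have opposite signs, so the product is $\le 0$ in either case. Rerouting every extra edge at once thus produces a closed walk in $\ver$ of weight $\le w(C)<0$, and since a negative closed walk decomposes into simple cycles at least one of which is negative, this contradicts $\M$ being OSP directly — no minimality assumption, no Lemma~\ref{lem:angle_redux_step1}--\ref{lem:angle_replace_step2} ironing needed. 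In effect you are using a two-sided version of Lemma~\ref{lem:bypass} (whose stated hypothesis $f_i(\a)\le f_i(\c)$ is not actually used in its proof), which is what lets you dispense with the paper's more elaborate Case~2. You also correctly flag that a single application only orders the query at $u$ and that the construction must be iterated — the paper's lemma statement somewhat overclaims here — and your observation that restrictions of ordered queries stay ordered (because $D_{-i}$ is preserved for the copies of $\T_L,\T_R$, so labels do not change) is what makes bottom-up iteration terminate.
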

\begin{proof}
 It is clear from the way we modified $\T$ that $\T'$ uses queries that separate smaller types from larger types. To prove that the mechanism $\M'$ is ordered it suffices to show that $\M'$ satisfies 2CMON. The only new edges that may in principle violate 2CMON are those between types $p < p'$ that in $\M$ were in the same part $P$ at divergence at $u\in \T$ that are now separated by $\T'$ because of our modification (e.g., $l_\alpha$ and $l_\gamma$ in Figure \ref{fig:localmod}). If $p$ and $p'$ are separated in $\M'$ but not in $\M$ it means that there exists $p''$ such that $p < p'' < p'$ and $p'' \not \in P$. But then the original \vgraph\ $\ver$ had an edge between $(p, \vect{p}_{-i})$ and $(p'', \vect{p}''_{-i})$ and an edge between $(p'', \vect{p}''_{-i})$ and $(p', \vect{p}'_{-i})$ for all $ \vect{p}_{-i}, \vect{p}'_{-i}, \vect{p}''_{-i} \in D_{-i}^{(u)}$. Since $\M$ is 2CMON then the labels for $p$ at $u$ are larger than the labels for $p''$ at $u$ that are larger than the labels for $p'$ at $u$, thus implying that $\M'$ is 2CMON too. 
 
 Suppose now that $\M'$ is not OSP. Then there is a negative cycle in some graph $\verp$. Let us consider the negative cycle $C$ in $\verp$ with fewer edges. Since this negative cycle does not exist in \vgraph\ $\ver$ of the original mechanism, then $C$ must use at least one edge $(\x,\y)$ that exists in $\verp$ but not in $\ver$ (this is only possible if both $x_i$ and $y_i$ were in $P$, for $P \in \{L, R\}$). We will show that there exists another negative cycle that instead of $(\x,\y)$ uses edges already existing in $\ver$. By a repeated application of this argument we finish with a negative cycle that only uses edges defined by the queries of the original mechanism -- a contradiction.
 
 In the following, we will suppose without loss of generality that $x_i, y_i \in R$ and let $j$ be the index such that $x_i \in R_j$.
 We distinguish two cases. 
 \begin{description}[topsep=0pt,style=unboxed,leftmargin=2ex]
  \item[Case $f_i(\x) \leq f_i(\y)$.] Observe that, since $\M'$ satisfies 2CMON, then it must be the case that $x_i > y_i$ and thus $y_i \in R_k$ for some $k < j$.
  
  Let $\z$ be any profile compatible with $u$ such that $z_i \in L_{j-1}$. Observe that $x_i > z_i > y_i$. Moreover, edges $(\x, \z)$ and $(\z,\y)$ belong to both $\ver$ and $\verp$. Then the claim follows since, by Lemma~\ref{lem:bypass}, we can replace edge $(\x,\y)$ with $\x \rightarrow \z \rightarrow \y$.
  
  \item[Case $f_i(\x) > f_i(\y)$.] Observe that, since $\M'$ satisfies 2CMON, then it must be the case that $x_i < y_i$ and thus $y_i \in R_k$ for some $k > j$.
  
  Since $C$ is a cycle, it must contain a path $\x \rightarrow \y^1 \rightarrow \ldots \rightarrow \y^\ell \rightarrow \x'$ for some $\ell \geq 1$ such that $\y^1 = \y$ (i.e., we are considering the path that starts with the edge $(\x, \y)$ on which we are focusing), $\x'$ is such that $x'_i \in R_p$ for some $p \leq j$ (note that we allow that $x'_i = x_i$ and $\x' = \x$), and $y_i^s \notin \bigcup_{q \leq j} R_q$ for every $s \in [\ell]$ (i.e., $x'_i$ is the first profile along the path that belongs to $R_j$ or a subset of $R$ containing smaller types).
  
  We first prove that $y_i^s$ does not belong to $\bigcup_{q \leq j+1} L_{q}$ for every $s \in [\ell]$. Suppose instead that this is not the case and consider the first $s \in [\ell]$ such that $y_i^s \in L_{q}$ for some $q \leq j+1$. Consider the path $\x \rightarrow \y^1 \rightarrow \ldots \rightarrow \y^{s-1} \rightarrow \y^{s}$. Note that $s > 1$, since by definition $\y^1 = \y$ and thus $y^1_i \notin L$. Now, observe that for every $r \in [s-1]$ either $y_i^r \in R_t$ for some $t > j$ or $y_i^r \in L_t$ for some $t > j+1$, and thus $y_i^t$ does not belong to either $L_q$ or $R_j$. Hence, we have that  for every $r \in [s-1]$ the edge $(\y^s,\y^r)$ must exist in $\verp$, and, hence, since $y_i^s < y_i^r$, 2CMON implies that $f_i(\y^s) \geq f_i(\y^r)$.
  Moreover, we must have that $x_i < y_i^{r}$ for every $r \in [s-1]$. Hence $(\x, \y^1, \ldots, \y^{s-1}, \y^{s})$ is a good sequence of profiles. Furthermore, since the edge $(\x,\y^s)$ exists, it follows from Corollary~\ref{cor:angle_redux} that the cycle $C'$ obtained from $C$ by replacing the path $\x \rightarrow \y_1 \rightarrow \ldots \rightarrow \y^s$ with the edge $(\x,\y^s)$ exists in $\verp$ and is negative. However this is in contradiction with our choice of $C$ as the negative cycle with fewer edges.
  
  Now, let $\z$ be any profile compatible with $u$ such that $z_i \in L_{j+1}$. Observe that $x_i < z_i$ and $x'_i < z_i$. Moreover, for every $q \in [\ell]$ we have that $y_i^q \in R_k$ for some $k > j$ or $y_i^q \in L_k$ for some $k > j+1$, and thus $y_i^q$ does not belong to either $L_{j+1}$ or $R_j$. Hence, it follows that $x_i < z_i < y_i^q$ for every $q \in [\ell]$. Moreover, edge $(\z,\y^q)$ exists in $\verp$ for every $q \in [\ell]$, and hence 2CMON yields $f_i(\z) \geq f_i(\y^q)$, since $z_i < y_i^q$. Similarly, edge $(\z, \x')$ exists in $\verp$, and, therefore, since $z_i > x'_i$, 2CMON implies that $f_i(\x') \geq f_i(\z)$. Finally, observe that both $(\x,\z)$ and $(\z, \x')$ exist in $\verp$. Then, we can conclude that the sequence of profiles $(\x, \y^1, \ldots, \y^\ell, \x')$ satisfies the conditions of Corollary~\ref{cor:angle_replace}, from which we achieve that the cycle $C'$ obtained from $C$ by replacing the path $\x \rightarrow \y_1 \rightarrow \ldots \rightarrow \y^\ell \rightarrow \x'$ with the path $\x \rightarrow \z \rightarrow \x'$ exists in $\verp$ and is negative. If $\ell \geq 1$ this contradicts our choice of $C$. Otherwise the claim follows since both edges $(\x,\z)$ and $(\z,\x')$ exist also in $\ver$.\qedhere
 \end{description}
\end{proof}

\section{Machine Scheduling}
In this section, we will give applications of our characterization to close the gap on the approximation that OSP mechanisms can guarantee for scheduling related machines. Moreover, these applications show that cycle of length four are often sufficient to prove the limits of OSP mechanisms.

\paragraph{Five (or more) types in the domain.}
We begin with the following result.
\begin{theorem}\label{thm:lbn}
	There is no OSP mechanism with approximation guarantee better than $n$ for the scheduling related machines problem. This is true even if all the $n$ agents have type from the same domain of size five.
\end{theorem}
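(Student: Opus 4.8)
The plan is to combine the characterization of OSP mechanisms (Theorem~\ref{thm:main}) with a carefully tuned family of scheduling instances over a $5$-element domain, and to show that beating approximation $n$ would force a pair of antimonotone types whose only possible pivots fail to be extreme, i.e.\ a negative-weight $4$-cycle in some $\ver$, which contradicts OSP by Theorem~\ref{thm:cmon}.

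\emph{The instance.} I would fix a common domain $D=\{d_1<d_2<d_3<d_4<d_5\}$ with geometrically increasing values $d_{k+1}=\beta\, d_k$ for a large parameter $\beta$, take $n$ machines all with domain $D$, and choose the number $m$ of jobs so that $m$ is not a multiple of $n$ and $\beta$ is large enough (as a function of $n$ and the target ratio) that, whenever one machine is a single ``step'' faster than every other machine, any schedule not piling all $m$ jobs on that machine has makespan more than $n$ times the optimum. Concretely $\beta>\rho m$ with, say, $m=n+1$ suffices; nailing this arithmetic down is routine. With such values, a mechanism with approximation $\rho<n$ is forced, at every stage, to assign all surviving jobs to the fastest surviving machine once it is revealed.

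\emph{Main argument.} Let $\M=(f,p,\T)$ be OSP with approximation $\rho<n$. By Theorem~\ref{thm:main}, $\M$ is three-way greedy; in particular all its queries are ordered and it is OSP 2CMON. Since a mechanism that resolves the agents one at a time (a pure ascending/descending scheme) only attains ratio $n$, $\M$ must \emph{interleave} queries across agents while some agent $i$ still has at least three surviving types --- this is where the $m\not\equiv 0\pmod n$ ``lumpiness'' is used, to make good load-balancing require partial information about several machines before committing. I would then follow a root-to-leaf branch, locate the node $u$ at which an agent $j\ne i$ first diverges while $i$ still has surviving types, and argue from the approximation requirement that among $i$'s surviving types at $u$ there is a pair $\bbb{1}>\bbb{2}$ that is antimonotone --- witnessed by two profiles differing only in the slower machines below $u$, on one of which $i$ must get a large load and on the other a small one, with the larger of $\bbb{1},\bbb{2}$ receiving the larger load. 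The only candidate pivots for $\bbb{1},\bbb{2}$ are the types of $i$ already split off above $u$: at most two of them, one below and one above. Here the size-$5$ hypothesis is decisive. With four types the mechanism has just enough room to place the antimonotone pair strictly between one pivot below it and one pivot above it and to make the outcome gaps $\Delta,\Lambda$ large (which is exactly why the four-type bound is only $n/2+1$); with five types the approximation constraint forces the antimonotone pair and the surviving pivot candidates into positions where the extreme-pivot inequality of Observation~\ref{obs:extreme_pivots}, namely $\Lambda(\bbb{d}-\bbb{1})+\Delta(\bbb{2}-\bbb{u})\ge\delta(\bbb{1}-\bbb{2})$, cannot hold --- either a pivot violates the required ordering $\bbb{u}<\bbb{2}<\bbb{1}<\bbb{d}$, or the enforced outcome gaps are too small relative to $\delta(\bbb{1}-\bbb{2})$. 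In either case the pivots are not extreme, so \eqref{eq:twopivotcycle} is violated, producing a negative $4$-cycle in $\ver$ and contradicting Theorem~\ref{thm:cmon}. The degenerate subcase in which the candidate pivots would be anchors is ruled out directly by Observation~\ref{obs:no_anchors}.

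\emph{Main obstacle.} The delicate step is the middle one: proving that the approximation requirement genuinely forces a \emph{specific} antimonotone pair together with pivot candidates pinned to positions violating extremeness, and doing so uniformly over the (few) shapes a three-way greedy tree can take on five ordered types, without leaving a loophole. Choosing $\beta$ and $m$ so that ``one step faster $\Rightarrow$ all jobs there'' holds at every relevant node while the identified pair stays genuinely antimonotone is where the real work lies; once the four profiles are in hand, the negative-cycle computation is immediate from Observation~\ref{obs:extreme_pivots}.
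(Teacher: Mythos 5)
Your high-level plan is broadly aligned with the paper's strategy: exploit ordered queries and OSP 2CMON, force an antimonotone pair whose only possible pivots create a negative $4$-cycle, and use a geometrically separated $5$-type domain. But the proposal has a genuine gap exactly where you flag it yourself, and that gap is the entire content of the theorem's proof, not a routine detail.

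The paper does not locate ``the node at which a second agent first diverges'' and then analyze the pivot positions abstractly. It tracks a \emph{specific} root-to-leaf path $P$, namely the one compatible with every machine having the middle type $H$. Along $P$ it introduces the last node $u^*$ at which some agent $i^*$ removes the smallest type $B$, the set $Z$ of ``excluded'' agents (those who separate $L$ from $H$ before $u^*$, and so must get load $0$), and the complementary set $\overline Z$. It then proves a chain of load-forcing observations (removing $B$ is mandatory; excluded agents get $0$; $|Z|\ge 1$; agents in $\overline Z$ must separate $L$ from $T$; full load to $B$ at removal; zero load to $T$ at separation), and the core of the theorem is Lemma~\ref{lem:4cycle}: for every $i\in\overline Z$ the node $u^*$ precedes $\ell_i$. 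The proof of that lemma builds four \emph{explicit} profiles $\s,\t,\x,\y$ pinned by the observations, shows $\s,\t$ are anchors for the antimonotone pair witnessed by $\x,\y$, and computes the $4$-cycle weight $(L-H)m<0$ directly. Your proposal instead asserts that ``the approximation constraint forces the antimonotone pair and the surviving pivot candidates into positions where the extreme-pivot inequality cannot hold,'' but does not exhibit the profiles, does not pin down the types or outcomes, and does not produce the cycle weight. You explicitly say this is ``where the real work lies'' --- the paper's proof \emph{is} that work.

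Two further concrete issues. First, you motivate interleaving by claiming ``a pure ascending/descending scheme only attains ratio $n$''; this is asserted, not argued, and the paper's proof never needs or uses such a dichotomy --- it derives the contradiction directly from the structure of $P$ regardless of whether the mechanism is ``pure.'' Second, your parameters differ: you pick $m=n+1$ with $m\not\equiv 0\pmod n$ and $\beta>\rho m$, claiming lumpiness is needed; the paper uses $m=n$ and gaps of the form $n^2$ between consecutive types, and the choice is load-bearing in several observations (e.g., the optimum distributing one job per machine). Your choices might be made to work, but as written they are unverified, and the ``one step faster $\Rightarrow$ all jobs there'' claim must actually hold at \emph{each} of the several points in the argument where it is invoked, which requires the concrete arithmetic the paper carries out.

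In short: right template (anchors giving a negative $4$-cycle over a $5$-type geometric domain), but the proposal stops before the step that constitutes the proof, and the supporting claims about forced interleaving and parameter choices are unjustified.
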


To prove the theorem, consider the following setting: $m=n$, and each agent has the following five types in the domain: $B, L > n^2B, M > n^2L, H > n^2M, T > n^2H$.
Suppose that there is an OSP mechanism $\M$ that 
returns a solution with approximation $\rho < n$, and let us consider the path $P$ of the implementation tree $\T$ of $\M$ compatible with the type of all agents being $H$.

Note that since the mechanism is OSP, then it satisfies 2CMON. We next prove some properties about the structure of $\M$ along $P$ based only on its approximation ratio and the 2CMON property. Subsequently, we show how we can provide an even stronger constraint on the structure of the mechanism by considering longer cycles. This structural property will imply the desired result.

We say that an agent is \emph{asked to remove $B$ from her domain} if she receives one of three queries: either (i) a greedy query about $B$ (to which the agent answers negatively along $P$), or (ii) a query asking to split the domain around $M$, i.e., $\{B,L\}$ and $\{H,T\}$ with $M$ in either part (to which the agent declares that her type is in the part including $H$ along $P$), or (iii) a reverse greedy query about $H$ (to which the agent answers positively along $P$). 
We first provide a simple observation, that follows from the necessity that, by approximation guarantee, the mechanism along $P$ must return different outcomes when an agent has type $B$ or $H$.
\begin{observation}[Removing $B$]
	\label{obs:notB}
	Along $P$, all agents must be asked to remove $B$ from their domain.
\end{observation}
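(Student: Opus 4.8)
The plan is to get the conclusion in two moves: first use the approximation ratio to force the mechanism's output to distinguish the case ``agent $i$ has type $B$'' from ``agent $i$ has type $H$'' all the way down $P$, which shows that $B$ must be split off from $H$ for $i$ somewhere along $P$; then, invoking Theorem~\ref{thm:monowlog} to assume $\M$ is ordered, read off from the shape of that split that it is one of the three declared queries. I would take $\M$ ordered at the outset (this is legitimate for a lower bound, since the ordered mechanism produced by Theorem~\ref{thm:monowlog} has the \emph{same} $f$ and hence the same approximation ratio), so that all queries are prefix/suffix cuts of the current domain.

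For the first move, fix an agent $i$ and let $\ell$ be the leaf of $P$. Since $P$ is compatible with every agent declaring $H$, we have $H\in D_j^{(\ell)}$ for all $j$. Suppose, for contradiction, that $B$ is never separated from $H$ for $i$ along $P$, so $B\in D_i^{(\ell)}$ as well. Then both the all-$H$ profile $\b$ (i.e. $b_j=H$ for all $j$) and the profile $\b'$ with $b'_i=B$ and $b'_j=H$ for $j\neq i$ are compatible with $\ell$, hence $f(\b)=f(\b')=X(\ell)$ and in particular $f_i(\b)=f_i(\b')$. This is impossible for a $\rho$-approximate $f$ with $\rho<n$: on $\b$ the optimum puts one of the $m=n$ jobs on each machine, so $MS(f^*(\b),\b)=H$ and therefore $f_i(\b)\cdot H\le MS(f(\b),\b)\le \rho H<nH$, giving $f_i(\b)\le\rho<n$; on $\b'$, loading all $n$ jobs on the fast machine $i$ has makespan $nB$, while any schedule assigning a job to some $j\neq i$ has makespan at least $H$, and $H>L>n^2B>\rho\,nB$ by the domain gaps, so every $\rho$-approximate schedule gives all jobs to $i$ and $f_i(\b')=n$. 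Thus $f_i(\b)\neq f_i(\b')$, a contradiction, and $B$ is separated from $H$ for $i$ somewhere on $P$.

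For the second move, let $u$ be the first node on $P$ with $i(u)=i$ whose query separates $B$ from $H$, and write $D_i^{(u)}=L\sqcup R$ with $L$ the low part, so $B\in L$, $H\in R$, and $P$ enters $R$. Any earlier query to $i$ on $P$ must keep $B$ and $H$ in the same child; since the split is a prefix/suffix cut and $B$ is the minimum type, the only way this can happen is to discard a suffix of types above $H$, i.e. to drop $T$, and this can occur at most once. Hence $D_i^{(u)}$ is $\{B,L,M,H,T\}$ or $\{B,L,M,H\}$, and $L\in\{\{B\},\{B,L\},\{B,L,M\}\}$. If $L=\{B\}$ this is a greedy query about $B$ answered ``no'' along $P$ (case (i)); if $R=\{H\}$ — which forces $L=\{B,L,M\}$ and $D_i^{(u)}=\{B,L,M,H\}$ — it is a reverse greedy query about the currently worst type $H$, answered ``yes'' along $P$ (case (iii)); otherwise the cut lies between $L$ and $M$ or between $M$ and $H$, i.e. it is a split of the domain around $M$ with $H$ (and $T$ when still present) on the side $P$ follows (case (ii)). In every case $i$ is asked to remove $B$, as claimed.

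The main obstacle is making the makespan bookkeeping of the first move airtight: one must verify that the optimum on $\b'$ really is $nB$ (all jobs on $i$, everything else far worse), and that the gap $H>n^2B$ — itself a consequence of $L>n^2B$ and $H>L$ — dominates $\rho\,nB$ for every $\rho<n$, so that a $\rho$-approximate algorithm is genuinely forced to load machine $i$ completely and hence cannot collapse the two profiles. Once that is settled, the rest is the short case analysis above, together with the observation that before $u$ only the type $T$ can have been peeled off, which is what pins the domain $D_i^{(u)}$ down to one of the two listed forms.
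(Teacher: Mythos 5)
Your argument is correct and essentially the paper's: you compare the all-$H$ profile with the one where machine $i$ has type $B$, deduce from $\rho<n$ that $f_i$ must differ on them, and conclude $B$ has to be separated from $H$ along $P$, exactly as the paper does with its profiles $\x$ and $\y$. The only addition is that you explicitly unwind, via Theorem~\ref{thm:monowlog}, the case analysis showing any ordered $(B,H)$-separating query must be one of the three listed forms, a step the paper leaves implicit when it asserts that ``not asked to remove $B$'' implies ``does not separate $B$ from $H$''.
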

\begin{proof}
	Suppose for the sake of contradiction that there is an agent $i^*$ who is not asked to remove $B$ from her domain. Let $\x$ be the profile such that $x_i = H$ for every machine $i$, and  $\y$ be the profile wherein $y_{i^*} = B$ and $y_i = H$ for all $i \neq i^*$. Since along the path $P$ leading to $\x$, agent $i^*$ is never asked to remove $B$ then she does not separate $B$ from $H$. Thus, it must be the case that the outcome that the mechanism assigns to $i^*$ is the same in both $\x$ and $\y$. By approximation, the mechanism must assign all jobs to $i^*$. 
	However, this means that the mechanism has makespan  $mH = nH$ for the type profile $\x$ 
	whereas the optimum 
	would assign each job to a different machine with makespan $H$, from which it follows that the approximation of the mechanism must be $n$, a contradiction.
\end{proof}

We say that an agent $i$ is \emph{asked to separate $L$ and $H$} if either 
	(i) $i$ receives a reverse greedy query about type $H$ or a query asking to split the domain around $M$; or, (ii) $i$ receives a greedy query about type $L$. \label{item:low}
Note that an agent $i$ who separates $L$ and $H$ at some node $u$ is either separating also $B$ and $H$ at the same node (as it is the case when condition (i) occurs), or, since without loss of generality the queries are ordered, she has separated $B$ and $H$ at some node preceding $u$ along $P$.

Let now $u^*$ be the last node in $\T$ along $P$  where agent $i^*=i(u^*)$ is asked to remove $B$ from her domain. 
This means that for every node $u \in \T$ preceding $u^*$ in $P$, it holds that $B \in D_{i^*}^{(u)}$. We say that an agent $i$ is \emph{excluded} if she separates $L$ and $H$ at a node $u_i$ preceding $u^*$ in $P$. We also denote the child $v_i$ of $u_i$ along $P$ (i.e., the one child of $u_i$ corresponding to $i$ answering the query in a way that is compatible with her type being $H$) as the \emph{excluding} node of $i$.

Let $Z$ be the set of excluded agents. We next prove that along the path $P$, the mechanism $\M$ cannot assign a positive job load to any agent in $Z$, since at the time they separate $L$ from $H$ it is still possible that $i^*$ has type $B \ll L$, and thus, by approximation, $i^*$ must receive all the jobs.
\begin{observation}[No job to agents separating $L$ and $H$]
	\label{obs:excluded}
	For every $i \in Z$, the mechanism $\M$ must assign outcome zero to $i$ in each profile compatible with its excluding node $v_i$.
\end{observation}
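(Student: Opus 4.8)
The plan is to rule out a positive load for $i$ at every profile compatible with $v_i$ by combining the $\rho$-approximation guarantee ($\rho<n$) with obvious strategyproofness at the node $u_i$ where $i$ splits $L$ from $H$; the lever is that at $v_i$ the machine $i^*$ may still be declaring the minuscule type $B$, which collapses the optimum. The subtle point, flagged below, is that one has to leave the subtree $\T_{v_i}$: inside $\T_{v_i}$ one cannot force the optimum small, but the \emph{sibling} subtree does carry a profile on which $i$'s load is pinned to $0$, and OSP transfers this.

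\textbf{Structure around $u_i$.} Recall (Theorem~\ref{thm:monowlog}) that the query at $u_i$ may be assumed ordered. Since $i$ separates $L$ from $H$ there, the smaller part of the query — leading to the sibling $v_i'$ of $v_i$ — contains $L$, while the larger part — leading to $v_i$, which lies on $P$ — contains $H$. Hence $\tau':=\max D_i^{(v_i')}\ge L>n^2B$, whereas every $b_i\in D_i^{(v_i)}$ satisfies $b_i>\tau'$ (the two parts are ordered and disjoint); in particular $B\notin D_i^{(w)}$ for every $w$ on $P$ from $v_i$ onwards. Since $i^*$ still has $B$ in her domain at $u^*$, this forces $i\ne i^*$; and since $v_i$ is the successor of $u_i$ on $P$ while $u_i\prec u^*$, the node $v_i$ precedes or coincides with $u^*$, so $B\in D_{i^*}^{(v_i)}=D_{i^*}^{(v_i')}$ (the query at $u_i$ alters only $i$'s domain).

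\textbf{Approximation pins the copy through $v_i'$.} Suppose for contradiction that $f_i(\b)=k\ge 1$ for some $\b$ compatible with $v_i$. Let $\b^\dagger$ be the profile agreeing with $\b$ except that $b^\dagger_i=\tau'$ and $b^\dagger_{i^*}=B$; by the previous paragraph $\b^\dagger$ is compatible with $v_i'$ (hence with $u_i$), and $\b,\b^\dagger$ diverge at $u_i$. In $\b^\dagger$ the machine $i^*$ has the least possible type $B$, so scheduling all $m=n$ jobs on $i^*$ gives makespan $nB$; thus the optimal makespan on $\b^\dagger$ is at most $nB$, and the mechanism's makespan on $\b^\dagger$ is below $\rho\, nB<n^2B$. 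Since $b^\dagger_i=\tau'\ge L>n^2B$, machine $i$ cannot receive any job in $\b^\dagger$, i.e.\ $f_i(\b^\dagger)=0$.

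\textbf{Double application of OSP at $u_i$.} Now use obvious strategyproofness at $u_i$ (where $i=i(u_i)$) twice. Taking $i$'s true type to be $b_i$, and comparing her payoff at the leaf of $\b$ (played truthfully, others playing $\b_{-i}$) with her payoff when she deviates at $u_i$ by declaring $\tau'$ while the others play $\b^\dagger_{-i}$ (so the leaf of $\b^\dagger$ is reached), OSP yields $p_i(\b)-b_i k\ge p_i(\b^\dagger)-b_i f_i(\b^\dagger)=p_i(\b^\dagger)$. Taking $i$'s true type to be $\tau'$, and comparing her payoff at the leaf of $\b^\dagger$ with her payoff when she deviates by declaring $b_i$ while the others play $\b_{-i}$ (reaching the leaf of $\b$), OSP yields $p_i(\b^\dagger)=p_i(\b^\dagger)-\tau' f_i(\b^\dagger)\ge p_i(\b)-\tau' k$. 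Chaining the two inequalities gives $p_i(\b)-\tau' k\le p_i(\b^\dagger)\le p_i(\b)-b_i k$, hence $\tau' k\ge b_i k$ and, as $k\ge 1$, $\tau'\ge b_i$ — contradicting $\tau'<b_i$. Therefore $f_i(\b)=0$ for every $\b$ compatible with $v_i$. The main obstacle, as anticipated, is exactly this routing: one must (i) pass to the sibling profile $\b^\dagger$ where the optimum is tiny, because for $\b$ with all machines other than $i$ slow the approximation bound alone permits $i$ a positive load, and (ii) invoke OSP in both directions so that the unknown payments $p_i(\b),p_i(\b^\dagger)$ cancel; the rest is the routine chain of inequalities $B<L<M<H<T$.
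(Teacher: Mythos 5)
Your proof is correct and follows essentially the same route as the paper: pick a profile on the $v_i$ side with a positive load, move to the sibling side with $i^*$ declaring $B$ and $i$ declaring a type $\ge L$, pin that sibling profile's load to zero via the approximation bound, and then contradict monotonicity across $u_i$. The only cosmetic difference is that the paper invokes OSP 2CMON directly (getting $f_i(\y)\ge f_i(\x)>0$ for a fixed $\y$ with $y_i=L$ and then breaking the approximation on $\y$), whereas you re-derive the same two-cycle inequality from the raw OSP utility comparisons so the payments cancel — the same 2CMON step, just unpacked.
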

\begin{proof}
	Suppose by contradiction that there is one profile $\x$ compatible with $v_i$ such that $f_i(\x) > 0$. Note that the compatibility with $v_i$ implies that $x_i \geq H$.
	Let $\y$ be the type profile such that $y_i = L$, $y_{i^*} = B$ and $y_j = H$ for every $j \neq i, i^*$. Since $u_i$ precedes $u^*$, both $\x$ and $\y$ are compatible with node $u_i$.
	Since $\M$ satisfies 2CMON and as $\x$ and $\y$ are compatible with $u_i$ and separated at $u_i$, it must be the case that $0 < f_{i}(\x) \leq f_{i}(\y)$. Hence, $\M$, when the type profile is $\y$, must assign at least one job to $i \neq i^*$. Since $L > n^2 B$, then the makespan of this solution is at least $L$, while the optimal outcome assigns all jobs to $i^*$ for a makespan of $nB$. Hence, $\M$ has approximation larger than $n$, which contradicts our hypothesis.
\end{proof}
Observe that $i^*$ cannot be excluded, since, by construction she cannot separate $L$ and $H$ before separating $B$ and $H$. Hence $|Z| < n$. We next show that the set of excluded agents is non-empty, for otherwise all nodes must receive the same outcome when their type is $L$ and $H$, and this can break the approximation of the algorithm.
\begin{observation}[Existence of $L$ and $H$ separations]
	\label{obs:zeta}
	It holds that  $|Z| \geq 1$.
\end{observation}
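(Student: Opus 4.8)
The plan is to argue by contradiction: assume $Z=\emptyset$ and deduce a violation of either 2CMON or the approximation guarantee. First I would pin down the shape of $\T$ along $P$ up to $u^*$. By Observation~\ref{obs:notB} every agent is asked to remove $B$ along $P$, and since $u^*$ is by construction the last such node and $i(u^*)=i^*$, every agent $i\ne i^*$ is asked to remove $B$ at a node $u_i$ \emph{strictly} preceding $u^*$ on $P$. Now, two of the three ways of being asked to remove $B$ — a query splitting the domain around $M$, and a reverse greedy query about $H$ — are \emph{by definition} also queries that ask to separate $L$ and $H$; since $Z=\emptyset$ forbids the latter before $u^*$, each $u_i$ must be a greedy query about $B$, so $L$, $M$ and $H$ stay together in $D_i^{(u_i)}$. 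Consequently, along the root-to-$u^*$ path no agent has separated $L$ from $H$ (and hence none has separated $M$ from $H$, since any cut separating $M$ from $H$ while $L$ is present separates $L$ from $H$ too), and the current domain of every agent at $u^*$ still contains $\{L,M,H\}$ (with $i^*$'s also containing $B$).

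Next I would use the approximation guarantee. Let $\x=(H,\dots,H)$, the profile labelling the leaf reached by $P$. Since $MS(f^*(\x),\x)=H$ (one job per machine) and $\rho<n$, we get $MS(f(\x),\x)<nH$, so no machine receives all $n$ jobs under $f(\x)$; as $n$ jobs are split among $n$ machines each with at most $n-1$ jobs, at least two machines receive a strictly positive load. At most one of them is $i^*$, so I can fix a loaded machine $j\ne i^*$. Let $u_j$ be the first node on $P$ at which agent $j$ separates $H$ from a strictly smaller type; by the previous paragraph $u_j$ lies strictly after $u^*$ (and $j$ has $\{L,M,H\}\subseteq D_j^{(u_j)}$).

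I would then build a ``fooling'' profile $\y$ that follows $P$ exactly up to $u_j$ and diverges from $\x$ there by giving $j$ the type $y_j=M$ when the cut at $u_j$ separates $\{L,M\}$ from the larger types, while every other machine keeps type $H$ except one helper $k_0\ne j$, set to $y_{k_0}=L$, that still has $L$ in its domain at $u_j$. Such a $k_0$ exists: choosing $j$ among the (at least two) loaded machines so that $j$ is \emph{not} the last agent to separate $L$ from $H$ along $P$, the last such agent $\ell\ne j$ has its separating node strictly after $u_j$, hence $L\in D_\ell^{(u_j)}$; take $k_0=\ell$. By the first paragraph $\y$ really first diverges from $P$ at $u_j$, so $u_j$ is $(\x,\y)$-separating for $j$ and both edges $\x\to\y$ and $\y\to\x$ lie in $\mathcal O_j^{\T}$; since $x_j=H>M=y_j$, OSP 2CMON forces $f_j(\y)\ge f_j(\x)\ge 1$. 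On the other hand, because $y_{k_0}=L$ while every other machine is at least $M$-slow and consecutive types differ by more than $n^2$, we have $MS(f^*(\y),\y)=nL$ (all jobs on $k_0$), so $MS(f(\y),\y)<\rho\,nL<n^2L<M$, which forces $f_j(\y)=0$ — the desired contradiction.

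The remaining case — and the step I expect to be the main obstacle — is when the cut at $u_j$ \emph{isolates} $\{L\}$ (a greedy query about $L$): then $y_j$ would have to be $L$ and $j$ itself becomes one of the fastest machines of $\y$, so the optimum need no longer avoid $j$. Here I would instead look at the next node on $P$ at which $j$ separates $M$ from $H$, set $y_j=M$ there, and repeat the construction with a helper of type $L$ (if one is still available along $P$ at that node), or, when no strictly faster helper survives, fall back on a negative length-four cycle through the four profiles of Theorem~\ref{thm:anatomy}. The delicate point throughout is bookkeeping which types remain available as one descends $\T$ — the fastest surviving type only shrinks — so that one can always place a helper strictly faster than $j$'s divergence type; the large multiplicative gaps between consecutive types are precisely what make such a helper swamp the optimum and pin $f_j$ to $0$.
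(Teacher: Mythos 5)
Your proposal and the paper's proof share the same high-level template (derive from $Z=\emptyset$ a 2CMON constraint that cannot coexist with $\rho<n$), but they instantiate it at very different places in the tree, and yours has genuine gaps. The paper stays at $u'$, the \emph{first} node along $P$ where any agent removes $B$: since $Z=\emptyset$, the query at $u'$ must be a greedy query about $B$ (a split or reverse-greedy query would put $i(u')$ in $Z$), so the two profiles $\x=(B,\dots,B)$ and $\y$ with $y_{i'}=L$ and $y_i=H$ for $i\ne i'$ are separated \emph{exactly at} $u'$, and a single 2CMON inequality $f_{i'}(\x)\ge f_{i'}(\y)$ immediately yields a $\rho\ge n$ contradiction in both sub-cases ($f_{i'}(\x)=m$ or $f_{i'}(\x)<m$). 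You instead dive \emph{below} $u^*$, to the node $u_j$ where a loaded machine $j\ne i^*$ first separates $H$ from a smaller type, and build a fooling profile with $y_j=M$ plus a ``helper'' $k_0$ at type $L$; this is considerably more machinery for the same conclusion.

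The gaps in your version are concrete. (i) You never argue that $u_j$ exists; if $j$ never separates $H$ from $L$ or $M$ along $P$, there is no such node (this can be patched, but you would need a separate short argument that a positively-loaded machine with $L$ still available at the leaf already breaks the approximation). (ii) Your existence argument for $k_0$ — ``choose $j$ not to be the last agent to separate $L$ from $H$'' — may conflict with the constraint $j\ne i^*$ when the loaded machines are exactly $\{i^*,j\}$, every agent separates $L$ from $H$ along $P$, and $j$ is the last to do so; in that configuration no $k_0$ with $L\in D_{k_0}^{(u_j)}$ is available, and the $M$-only profile ($y_j=M$, all others $H$) gives no contradiction because the optimum just dumps all jobs on $j$. (iii) The fallback for the case where the cut at $u_j$ isolates $\{L\}$ is only a sketch; it is not clear that a helper of type $L$ survives to the $M/H$ split, and invoking Theorem~\ref{thm:anatomy} there is a non-trivial additional step that you have not carried out. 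The paper's choice to work at the \emph{first} $B$-removal node rather than at later separations is precisely what makes its profiles automatically compatible and separated, sidestepping all of (i)--(iii).
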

\begin{proof}
	Suppose instead that $|Z| = 0$, and thus each agent $i \neq i^*$ along $P$ separates $L$ and $H$ at a node $u_i$ that follows $u^*$ in $P$. Let $u'$ be the first node at which an agent is asked to remove $B$ from her domain along $P$, and let $i'$ be the agent queried at this node. By definition, $u'$ is an ancestor of $u^*$ and then $D_i^{(u')} \supseteq \{B, H\}$ for each agent $i$. Moreover, since $u^*$ precedes $u_{i}$, it must be the case that $D_i^{(u')} \supseteq \{B, L, M, H\}$ for each $i$.
	
	Let $\x$ be the profile such that $x_i = B$ for all $i$, and let $\y$ be such that $y_{i'} = L$ and $y_i = H$ for all $i \neq i'$. Note that both $\x$ and $\y$ are compatible with node $u'$. Then, by 2CMON, $f_{i'}(\x) \geq f_{i'}(\y)$.
	
	If $f_{i'}(\x) = m$ (i.e., $i'$ receives all jobs), then the outcome returned by $\M$ has makespan $nB$, while the optimal outcome, which assigns each job to a different agent, has makespan $B$. Hence, $\M$ has approximation $n$, which contradicts our hypothesis.
	
	If $f_{i'}(\x) < m$, then $f_{i'}(\y) < m$, and thus there is an agent $i \neq i'$ that is assigned at least one job by $\M$ when the type profile is $\y$. Since $H > n^2 L$, then the makespan of this solution is at least $H$, while the optimal outcome assigns all the jobs to $i^*$ for a makespan of $nL$. Hence, $\M$ has approximation larger than $n$, which contradicts our hypothesis.
\end{proof}

We say that an agent $i$ is \emph{asked to separate $L$ from $T$} if $i$ either receives a reverse greedy query about $T$, or a query in which she is asked to split the domain around $M$, 
or a greedy query about $L$. Let $\overline{Z} = [n] \setminus Z$ be the set of all agents that have not separated $L$ and $H$ before node $u^*$.
As for $L$ and $H$ separations, we can prove that all agents in $\overline{Z}$ must separate $L$ from $T$ along $P$.
\begin{observation}[Existence of $L$ and $T$ separations]
	\label{obs:LvsT}
	Along $P$, all agents in $\overline{Z}$ must be asked to separate $L$ from $T$.
\end{observation}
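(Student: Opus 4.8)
The plan is to argue by contradiction: suppose some agent $i \in \overline{Z}$ is never asked to separate $L$ from $T$ along $P$, and derive that $\M$ cannot have approximation $\rho < n$. Let $\ell$ be the leaf at the end of $P$. The first step is structural: since all queries are ordered (Theorem~\ref{thm:monowlog}), every subdomain along $P$ is an interval of $\{B < L < M < H < T\}$, and any ordered query that puts $L$ and $T$ into different children is one of the query types in the definition of ``asked to separate $L$ from $T$'' --- a reverse greedy query about $T$ (when $T$ is split off by itself), a split around $M$ (when the threshold lies strictly between $L$ and $H$), or a greedy query about $L$ (when $L$ is split off by itself, once $B$ has already been removed). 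Hence, under our assumption, $L$ and $T$ are never separated in $i$'s domain along $P$, so $\{L,T\} \subseteq D_i^{(\ell)}$; moreover $H \in D_j^{(\ell)}$ for every agent $j$ since $\ell$ ends $P$.

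Next, let $\y$ be the profile with $y_i = L$ and $y_j = H$ for $j \neq i$, and $\z$ the profile with $z_i = T$ and $z_j = H$ for $j \neq i$. As $D^{(\ell)} = D_i^{(\ell)} \times D_{-i}^{(\ell)}$, both $\y$ and $\z$ are compatible with $\ell$, hence resolved at the same leaf, so $\M$ assigns $i$ the same load $k := f_i(\ell)$ under $\y$ and under $\z$. I then distinguish two cases. If $k \geq 1$, I use $\z$: the makespan of $\M$ is at least $z_i f_i(\z) = kT \geq T > n^2 H$, whereas assigning all $n$ jobs to the $n-1$ machines of speed $1/H$ gives makespan at most $2H$ (loading the slow machine $i$ costs at least $T \gg 2H$), so the approximation ratio exceeds $n^2/2 \geq n$, a contradiction. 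If $k = 0$, I use $\y$: agent $i$ receives no job and, by Observation~\ref{obs:excluded}, every agent in $Z$ also receives no job (each excluding node precedes $\ell$ along $P$), so all $n$ jobs are placed on machines in $\overline{Z} \setminus \{i\}$, which therefore is nonempty, and each such machine has speed $1/H$ under $\y$; hence the makespan is at least $H$, while the optimum puts all jobs on the fast machine $i$ for a makespan of $nL < H$, so the approximation ratio exceeds $H/(nL) > n$, again a contradiction.

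The main obstacle is the first step: verifying that every ordered query along $P$ which would separate $L$ from $T$ is indeed captured by one of the three named query types, which requires a short case analysis on which of $B$ and $T$ still lie in $i$'s current domain at that node. The remaining arguments are then elementary, relying only on the factor-$n^2$ multiplicative gaps between consecutive types; the one point worth an explicit remark is that $\overline{Z}\setminus\{i\}$ is nonempty in the case $k=0$, which is immediate because the $n$ jobs must go somewhere while $Z \cup \{i\}$ gets nothing.
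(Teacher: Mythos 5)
Your proof is correct and follows the same essential approach as the paper's: assume some $j\in\overline Z$ never separates $L$ from $T$ along $P$, use orderedness to conclude $\{L,T\}$ remain together in $j$'s domain, then derive a contradiction from the approximation guarantee by comparing the $(\cdot,H,\ldots,H)$-profile with $j$'s type set to $L$ and to $T$. The main difference is tactical: the paper first uses the approximation bound on the $y_j=L$ profile to conclude that $j$ must receive all $m=n$ jobs (anything else already blows up to approximation $>n$), and then only the $x_j=T$ profile is needed to finish. You instead branch on whether the shared load $k$ is $\geq 1$ or $=0$; the $k\geq 1$ branch is the same $T$-profile computation, while your $k=0$ branch calls on Observation~\ref{obs:excluded} to argue that agents in $Z$ also receive zero load, forcing jobs onto $\overline Z\setminus\{j\}$. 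This extra machinery is not needed: once you note that the approximation ratio forces $k=m$ under the $L$-profile (any job placed elsewhere yields makespan $\geq H > n^2 L > n \cdot nL$), the case split disappears and Observation~\ref{obs:excluded} is not required. Your argument is nonetheless sound; the nonemptiness of $\overline Z\setminus\{j\}$ follows as you say because the $m$ jobs must be allocated, and compatibility of your profile $\y$ with each excluding node $v_j$ is inherited from compatibility with the leaf $\ell$ (domains shrink down the path).
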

\begin{proof}
	Suppose instead that there is an agent $j \in \overline{Z}$ who is not asked to separate $L$ from $T$. Since $\M$ only makes ordered queries, it is not possible to separate $H$ from $L$ without separating $L$ from $T$. Hence, it must be the case that $D_j^{(u)} \supseteq \{L, M, H, T\}$ for every node $u \in P$.
	
	Consider then the profile $\x$ such that $x_j = T$ and $x_i = H$ for every $i \neq j$, and the profile $\y$ such that $y_{j} = L$ and $y_i = H$ for all $i \neq j$. Since $D_j^{(u)} \supseteq \{L, M, H, T\}$ for every node $u \in P$, it must be the case that the mechanism allocates the same outomce to $j$ in both type profiles $\x$ and $\y$. Since, by approximation guarantee, the mechanism assigns all jobs to $j$ in $\y$ 
	then  $\M$ returns a solution for the $\x$ that has makespan $nT$, whereas 
	the optimum 
	would assign all the jobs to machines different from $j$, with a makespan at most $nH$. It follows that the approximation of the mechanism must be larger than $n$, a contradiction.
\end{proof}
For each $i \in \overline{Z}$, let $\ell_i$ be the node of $\T$ in $P$ at which $i$ is asked to separate $L$ from $T$. By Observation~\ref{obs:LvsT}, $\ell_i$ is well defined for every $i \in \overline{Z}$.
We will prove a relationship between $u^*$ and $\ell_i$  along $P$. To this aim, we first need the following three observations characterizing the output of the mechanism for certain profiles reachable in $P$.
\begin{observation}[Full load to $B$]
	\label{obs:all}
	Let $u$ be a node along $P$ at which an agent $i$ is asked to remove $B$ from her domain. Then in every profile $\x$ compatible with $u$ such that $x_i = B$, $f_i(\x) = m$.
\end{observation}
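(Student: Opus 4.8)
The plan is to argue by contradiction and reduce the claim, via two-cycle monotonicity (which holds since $\M$ is OSP), to the existence of a single ``witness'' profile on which $i$ receives all the jobs. Suppose some profile $\x$ compatible with $u$ has $x_i = B$ but $f_i(\x) < m$. Because $i$ is asked to remove $B$ at $u$, the query at $u$ puts $B$ and $H$ in different children of $u$; hence, for every profile $\z$ compatible with $u$ with $z_i = B$ and every profile $\w$ compatible with $u$ with $w_i = H$, the node $u$ is $(\z,\w)$-separating for $i$, so both $\z\to\w$ and $\w\to\z$ belong to $\ver$. Then 2CMON forces this two-cycle to be non-negative,
\[
  B\bigl(f_i(\w)-f_i(\z)\bigr)+H\bigl(f_i(\z)-f_i(\w)\bigr)=(H-B)\bigl(f_i(\z)-f_i(\w)\bigr)\ \ge\ 0,
\]
and since $H>B$ this yields $f_i(\z)\ge f_i(\w)$. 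So it suffices to exhibit one profile $\w$ compatible with $u$ with $w_i = H$ and $f_i(\w)=m$: taking $\z = \x$ then gives $f_i(\x)\ge f_i(\w)=m$, i.e.\ $f_i(\x)=m$, contradicting the assumption.

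To build such a $\w$ I would make machine $i$ overwhelmingly the fastest. First note that $H$ is never discarded from any agent's domain along $P$: a node of $P$ is reached only when every divergent agent answers consistently with type $H$, so an ancestor that removed $H$ from some agent would make $P$ unreachable; hence $H\in D_j^{(u)}$ for all $j$. If some type smaller than $H$ still lies on the $H$-side of $u$ — e.g.\ when $i$'s query at $u$ is a greedy query about $B$ with $L$ still present, or the split $\{B,L\}\mid\{M,H,T\}$ with $M$ still present — I take $w_i$ to be that type and $w_j = H$ for every $j\ne i$; then $\mathrm{OPT}(\w) = n\,w_i$, whereas any schedule not assigning all $m=n$ jobs to $i$ has makespan at least $H > n^2 w_i = n\cdot\mathrm{OPT}(\w)$, impossible for $\rho < n$, so $f_i(\w)=m$. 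Otherwise $w_i$ is forced to equal $H$ — this is what happens for a reverse-greedy query about $H$ or the split $\{B,L,M\}\mid\{H,T\}$ — and I instead take $w_j$ to be the slowest type still available to $j$ and argue it is at least $T$, so that $\mathrm{OPT}(\w)=nH$ while any schedule not putting all jobs on $i$ has makespan at least $T > n^2 H = n\cdot\mathrm{OPT}(\w)$, again impossible.

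The main obstacle is exactly this last point: proving that at node $u$ the agents $j\ne i$ still have access to type $T$. This calls for a careful bookkeeping of which types have been discarded along $P$, leaning on the observations already established (that along $P$ every agent is asked to remove $B$, to separate $L$ from $H$, and, in the appropriate part, to separate $L$ from $T$), organized as a short case analysis on the form of $i$'s query at $u$. I expect this to be where the real work lies; the 2CMON reduction in the first paragraph is immediate.
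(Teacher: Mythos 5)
Your framework is right — reduce to a single witness profile via OSP 2CMON, then force its outcome by the approximation guarantee — and your ``first branch'' (taking $w_i\in\{L,M\}$ with $w_j=H$ for all $j\neq i$) is essentially the paper's argument. But the obstacle you flag in the last paragraph is real and you do not overcome it, and it is created by an unnecessary restriction you impose on yourself: you insist on choosing $w_i$ from the \emph{$H$-side} of the split at $u$, so that $u$ itself is the $(\z,\w)$-separating node. That forces $w_i=H$ whenever the low side of the $B$-removal query already contains all of $B,L$ (or $B,L,M$), i.e. precisely in the split and reverse-greedy cases, and then you need $T$ to be available to every other agent at $u$ — something that simply may fail along $P$ (an agent may already have answered ``not $T$'' to a reverse-greedy query before $u$). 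The paper avoids this entirely.

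The missing observation is that compatibility with $u$ does \emph{not} require lying on the $H$-side: any $w_i\in D_i^{(u)}$ is compatible with $u$, including types in the low part. In particular, since $B\in D_i^{(u)}$ and the mechanism only makes ordered queries, $i$ cannot have separated $L$ from $H$ before separating $B$ from $H$, so $L\in D_i^{(u)}$ as well. The paper therefore takes $\y=(L,H,\dots,H)$, which is compatible with $u$ in all three cases (i)--(iii). By 2CMON, $f_i(\x)\ge f_i(\y)$; if $f_i(\x)<m$ then $f_i(\y)<m$, so some machine $j\neq i$ with type $H$ receives a job under $\y$, giving makespan at least $H$ while the optimum (all jobs to $i$) is $nL$, and $H>n^2L$ yields approximation $>n$, a contradiction. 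The paper also derives the contradiction from $f_i(\y)<m$ rather than trying to establish $f_i(\w)=m$ directly, but that is a cosmetic difference; the substantive gap in your attempt is that you never need $T$ anywhere once you realize $L$ is guaranteed to still be in $i$'s domain at $u$.
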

\begin{proof}
	Suppose that there is one profile $\x$ compatible with $u$ such that $x_{i} = B$ but $f_i(\x) < m$. Let $\y$ be such that $y_{i} = L$ and $y_j = H$ for all $j \neq i$.  By ordered queries, agent $i$ cannot separate $L$ from $H$ before separating $B$ from $H$ along $P$. Thus, since $B$ is still in the domain, $L$ belongs to the current domain as well and then $\y$ is compatible with node $u$.
	
	Since $\M$ satisfies 2CMON, it must be the case that $m > f_{i}(\x) \geq f_{i}(\y)$. Hence, $\M$, when the type profile is $\y$, must assign at least one job to a machine $j \neq i$. Since $H > n^2 L$, then the makespan of this solution is at least $H$, while the optimal outcome, that assigns all the jobs to $i$, has makespan $nL$. Hence, $\M$ has approximation guarantee larger than $n$, which contradicts our hypothesis.
\end{proof}

\begin{observation}[No job to agents separating $L$ and $T$/1]
	\label{obs:zero}
	Let $u$ be a node along $P$ at which an agent $i$ is asked to separate $L$ from $T$, and suppose that there is $i^*$ such that $B \in D_{i^*}^{(u)}$. Then in every profile $\x$ compatible with $u$ such that $x_i = T$, $f_i(\x) = 0$.
\end{observation}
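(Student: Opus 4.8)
The plan is to argue by contradiction along exactly the same lines as Observations~\ref{obs:excluded} and~\ref{obs:all}: I would assume that $f_i(\x)>0$ for some profile $\x$ compatible with $u$ with $x_i=T$, then exhibit a nearby profile on which 2CMON forces the mechanism to keep a job on the (very slow) machine $i$, while the optimum can pile every job onto the (very fast) machine $i^*$; the resulting ratio exceeds $n$, contradicting $\rho<n$.

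First I would record the domain facts available at $u$. Since $i$ is asked to separate $L$ from $T$ at $u$, both $L$ and $T$ lie in $D_i^{(u)}$ and are routed to distinct children of $u$; since $u$ lies on $P$, every agent $j$ has $H\in D_j^{(u)}$; and by hypothesis $B\in D_{i^*}^{(u)}$, with $i^*\neq i$, so the assignments below are consistent. This lets me define the profile $\y$ by $y_i=L$, $y_{i^*}=B$, and $y_j=H$ for all remaining $j$, and conclude that $\y$ is compatible with $u$. Moreover $\x$ and $\y$ diverge at $u$, because $x_i=T$ and $y_i=L$ fall in different children, so the edge $(\x,\y)$ is present in $\ver$. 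Applying 2CMON to the two-cycle on $\{\x,\y\}$ and using $y_i=L<T=x_i$ yields $(T-L)(f_i(\y)-f_i(\x))\geq 0$, hence $f_i(\y)\geq f_i(\x)>0$; since job loads are integral, $\M$ puts at least one job on $i$ in the profile $\y$.

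Then I would compare the two makespans for $\y$. On one side, $MS(\M(\y),\y)\geq y_i f_i(\y)=L\cdot f_i(\y)\geq L$. On the other side, the schedule assigning all $m=n$ jobs to $i^*$ has makespan $nB$, and it is optimal: every machine other than $i^*$ has type at least $L>n^2B>nB$, so moving even one job off $i^*$ already costs more than $nB$; hence $MS(f^*(\y),\y)=nB$. Combining, $\M$ has approximation at least $L/(nB)>n^2B/(nB)=n$, the desired contradiction, so in fact $f_i(\x)=0$.

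I do not expect a real obstacle here: this is the ``$T$-analogue'' of the $L$-vs-$H$ argument in Observation~\ref{obs:excluded}, and the only step that needs care is the bookkeeping that certifies $\y\in D^{(u)}$ — one must invoke the right reason for each coordinate ($L$ survives for $i$ because it is exactly what the query at $u$ splits off from $T$, $B$ survives for $i^*$ by assumption, $H$ survives for everyone because $u\in P$), and note that these prescriptions do not conflict precisely because $i^*\neq i$. Everything else is the same makespan arithmetic used throughout this section, exploiting only the factor-$n^2$ gaps between consecutive types in the domain.
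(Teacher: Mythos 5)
Your proof is correct and follows essentially the same route as the paper's: the same auxiliary profile $\y$ (with $y_i=L$, $y_{i^*}=B$, and $H$ elsewhere), the same 2CMON two-cycle argument giving $f_i(\y)\geq f_i(\x)>0$, and the same makespan comparison ($L$ versus $nB$) that breaks $\rho<n$. Your extra bookkeeping about why $\y$ is compatible with $u$ and why $i^*\neq i$ is sound and matches what the paper leaves implicit.
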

\begin{proof}
	Suppose instead that there is one profile $\x$ compatible with $u$ such that $x_i = T$ and $f_i(\x) > 0$. Let $\y$ be the type profile such that $y_i = L$, $y_{i^*} = B$ and $y_j = H$ for every $j \neq i, i^*$. Observe that $\y$ also is compatible with node $u$ since $u$ is an $(L,T)$-separating node.
	
	Since $\M$ satisfies 2CMON, it must be the case that $0 < f_{i}(\x) \leq f_{i}(\y)$. Hence, $\M$   must assign at least one job to $i$ when the type profile is $\y$. Since $L > n^2 B$, then the makespan of this solution is at least $L$, while the optimal outcome, which assigns all the jobs to $i^*$, has makespan $nB$. Hence, $\M$ has approximation guarantee larger than $n$, contradicting our hypothesis.
\end{proof}

\begin{observation}[No job to agents separating $L$ and $T$/2]
	\label{obs:zero2}
	Let $u$ be a node along $P$ at which an agent $i$ is asked a reverse greedy query about $T$, and suppose that there is $i^*$ such that $L \in D_{i^*}^{(u)}$. Then in every profile $\x$ compatible with $u$ such that $x_i = T$, $f_i(\x) = 0$.
\end{observation}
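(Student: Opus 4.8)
The plan is to argue by contradiction, mirroring the proof of Observation~\ref{obs:zero}, but exploiting the fact that a reverse greedy query about $T$ at $u$ places type $H$ (not merely type $L$) on the branch that an agent reporting $T$ does not take. So suppose, for the sake of contradiction, that there is a profile $\x$ compatible with $u$ with $x_i = T$ and $f_i(\x) > 0$. I would then build the witness profile $\y$ by keeping type $H$ almost everywhere: set $y_i = H$, set $y_{i^*} = L$ (using the hypothesis $L \in D_{i^*}^{(u)}$), and $y_j = H$ for every remaining agent $j$ (as in the preceding observations, we take $i^* \ne i$).

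First I would verify that $\y$ is compatible with $u$ and that $\x$ and $\y$ diverge at $u$. Compatibility is immediate: $u$ lies on $P$, which is compatible with the all-$H$ profile, so $H \in D_i^{(u)}$ and $H \in D_j^{(u)}$ for all $j$, while $L \in D_{i^*}^{(u)}$ is exactly the hypothesis. Divergence is the step that actually uses the precise form of the query: since the query at $u$ is a reverse greedy query about $T$, it partitions $D_i^{(u)}$ into $\{T\}$ and the rest, so $\x$ (with $x_i = T$) and $\y$ (with $y_i = H$) descend into two different children of $u$; since $i(u)=i$, the node $u$ is therefore $(\x,\y)$-separating for $i$, and both arcs $\x \to \y$ and $\y \to \x$ belong to $\ver$. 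This is precisely where the stronger hypothesis here (a reverse greedy query about $T$, rather than an arbitrary $(L,T)$-separating query as in Observation~\ref{obs:zero}) is needed: a query splitting the domain around $M$ would keep $H$ on the same side as $T$, and divergence would fail; I expect this compatibility/divergence check to be the only non-mechanical part of the argument.

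Applying OSP 2CMON to the $2$-cycle $\x \to \y \to \x$ and using $x_i = T > H = y_i$ then gives $f_i(\y) \ge f_i(\x) > 0$, so $i$ receives at least one job in $\y$. Consequently the makespan of $\M$ on $\y$ is at least $y_i \, f_i(\y) = H \cdot f_i(\y) \ge H$, whereas the optimal schedule for $\y$ assigns all $m = n$ jobs to $i^*$, whose type is $L$, for a makespan of $nL$. Since $H > n^2 M > n^2 L$, the approximation ratio of $\M$ would exceed $n$, contradicting $\rho < n$ and completing the proof. Everything beyond the divergence check is a routine adaptation of Observation~\ref{obs:zero}, with the role of $L$ there played by $H$ here and the role of $B$ there played by $L$ here, so that the gap $H > n^2 L$ does the work previously done by $L > n^2 B$.
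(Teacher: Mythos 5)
Your proof is correct and takes essentially the same approach as the paper's: construct $\y$ with $y_i = H$, $y_{i^*} = L$, and $y_j = H$ elsewhere, apply 2CMON across the edge $\x \leftrightarrow \y$ created at $u$ to get $f_i(\y) \ge f_i(\x) > 0$, and compare the resulting makespan $\ge H$ with the optimal $nL$ using $H > n^2 L$. Your explicit justification of why $\x$ and $\y$ actually diverge at $u$ (the reverse greedy query about $T$ places $\{T\}$ alone on one side, so $H$ lies on the other) spells out a step the paper handles implicitly with the phrase ``since $u$ is an $(H,T)$-separating node,'' but the argument is the same.
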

\begin{proof}
	Suppose instead that there is one profile $\x$ compatible with $u$ such that $x_i = T$ and $f_i(\x) > 0$. Let $\y$ be the type profile such that $y_i = H$, $y_{i^*} = L$ and $y_j = H$ for every $j \neq i, i^*$. Observe that $\y$ also is compatible with node $u$ since $u$ is an $(H,T)$-separating node.
	
	Since $\M$ satisfies 2CMON, it must be the case that $0 < f_{i}(\x) \leq f_{i}(\y)$. Hence, $\M$must assign at least one job to $i$ in input $\y$. Since $H > n^2 L$, then the makespan of this solution is at least $H$, while the optimal outcome -- which assigns all the jobs to $i^*$ has makespan $nL$. 
	This contradicts our hypothesis.
\end{proof}
Now we are ready to prove the desired relation between $u^*$ and $\ell_i$ for $i \in \overline{Z}$. We remark that for this claim two cycle turn out to be insufficient, and we need to resort to cycles of longer length.
\begin{lemma}
	\label{lem:4cycle}
	For every $i \in \overline{Z}$, $u^*$ precedes $\ell_i$  along $P$.
\end{lemma}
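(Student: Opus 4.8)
The plan is to argue by contradiction: suppose $\ell_i$ strictly precedes $u^*$ along $P$ for some $i\in\overline{Z}$, and build a negative-weight $4$-cycle in $\ver$, which is impossible by Theorem~\ref{thm:cmon}. The first step is to pin down the query at $\ell_i$. Since $\M$ is ordered, each of the two alternative forms of an ``$L$ vs $T$'' separation --- splitting the domain around $M$, or a greedy query about $L$ --- places $L$ and $H$ on opposite sides and hence also separates $L$ from $H$; as $\ell_i\prec u^*$ this would put $i$ in $Z$, contradicting $i\in\overline{Z}$. So $\ell_i$ is a reverse greedy query about $T$: it separates $\{T\}$ from $\{L,M,H\}$ (and from $B$, if $B\in D_i^{(\ell_i)}$). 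The same reasoning applied to the (present, by Observation~\ref{obs:notB}) $B$-removal of $i$ along $P$ rules out a split around $M$ and a reverse greedy query about $H$ (both separate $L$ from $H$), so it is a greedy query about $B$ at some node $v_B$, separating $\{B\}$ from $\{L,M,H\}$; and when $i\neq i^*$ we additionally have $v_B\prec u^*$.

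Assume $i\neq i^*$. Let $\bb{0},\bb{2},\bb{3},\bb{1}$ be the profiles in which $i$ has type $T,L,B,M$ respectively, $i^*$ has type $B$ in $\bb{0}$ and $\bb{2}$ and type $H$ in $\bb{3}$ and $\bb{1}$, and every other agent has type $H$; thus $\bbb{0}=T>\bbb{2}=L>\bbb{3}=B$ and $\bbb{1}=M$, with $L<M<H<T$. I claim $\bb{0}\to\bb{2}$, $\bb{2}\to\bb{3}$, $\bb{3}\to\bb{1}$, $\bb{1}\to\bb{0}$ are all edges of $\ver$. The edges $\bb{0}\to\bb{2}$ and $\bb{1}\to\bb{0}$ are witnessed by $\ell_i$ (which separates $T$ from both $L$ and $M$), the profiles appearing in them being compatible with $\ell_i$ since $i\in\overline{Z}$ keeps $L,M\in D_i^{(\ell_i)}$ and $\ell_i\prec u^*$ keeps $B\in D_{i^*}^{(\ell_i)}$; the edges $\bb{2}\to\bb{3}$ and $\bb{3}\to\bb{1}$ are witnessed by $v_B$ (which separates $B$ from both $L$ and $M$), compatibility using $B\in D_{i^*}^{(v_B)}$ since $v_B\prec u^*$. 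The outcomes are forced: $f_i(\bb{0})=0$ by Observation~\ref{obs:zero} at $\ell_i$; $f_i(\bb{2})=0$ because $i^*$ has type $B\ll L$, so the approximation guarantee (equivalently Observation~\ref{obs:all} at $u^*$) sends all jobs to $i^*$; $f_i(\bb{3})=m$ by Observation~\ref{obs:all} at $v_B$; and $f_i(\bb{1})=m$ because $i$ has type $M\ll H$, so all jobs go to $i$. Hence
\[
w(\bb{0},\bb{2})+w(\bb{2},\bb{3})+w(\bb{3},\bb{1})+w(\bb{1},\bb{0})=T(0-0)+L(m-0)+B(m-m)+M(0-m)=(L-M)m<0,
\]
the desired negative cycle, contradicting that $\M$ is OSP.

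Two remarks. On the sign: the only step in which $f_i$ strictly increases sits at the small type $L$, the only step in which it strictly decreases sits at the larger type $M$, and the remaining two steps are weightless (type $T$ against a zero outcome gap, type $B$ against a zero outcome gap). This is precisely the antimonotone-pair layout of Theorem~\ref{thm:anatomy}, with $(\bb{1},\bb{2})$ the antimonotone pair ($\bbb{1}=M>L=\bbb{2}$ but $f_i(\bb{1})=m>0=f_i(\bb{2})$), which by $2$CMON carries no direct edge --- so a $2$-cycle cannot detect the violation, explaining why length four is needed. On the corner case $i=i^*$: now $v_B=u^*$ and no agent other than $i^*$ has $B$ in its domain at $u^*$, so $i^*$ cannot itself play the $B$-anchor role of $\bb{2}$. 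Here I would first use the approximation guarantee to produce a witness $w\neq i^*$ with $L\in D_w^{(\ell_{i^*})}$: if there were none, all other agents separate $L$ from $H$ before $\ell_{i^*}$, so by Observation~\ref{obs:excluded} they receive no job on the rest of $P$, forcing $i^*$ to take all $n$ jobs at the all-$H$ leaf (optimum $H$), a ratio of $n$. Observation~\ref{obs:zero2} then gives $f_{i^*}(\x)=0$ whenever $\x$ is compatible with $\ell_{i^*}$ and $x_{i^*}=T$, and one assembles an analogous negative $4$-cycle in $\mathcal{O}_{i^*}^{\T}$ out of $\ell_{i^*}$ and $u^*$, with $w$ supplying the small-outcome profile. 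I expect the main obstacles to be (i) arranging the four-point configuration so that it evaluates to a negative weight --- the naive choices, and all $2$-cycles, come out non-negative --- and (ii) the compatibility bookkeeping needed to make all four edges simultaneously legal across two distinct witnessing nodes, together with the $i=i^*$ case just sketched.
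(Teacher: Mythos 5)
Your high-level strategy is the same as the paper's: assume some $i\in\overline Z$ has $\ell_i\prec u^*$ and assemble a negative $4$-cycle through the node $\ell_i$ (a reverse-greedy $T$-query) and the node where $i$ greedily separates $B$. Your treatment of the case $i\neq i^*$ is correct and, in one respect, cleaner than the paper's: by taking $\bb 1$ with $i$ at type $M$ and every other agent at $H$ you get $f_i(\bb 1)=m$ from the approximation bound alone, whereas the paper (which sets $j^*$ at $H$ and the other $\overline Z$ agents at $T$ in $\y$) needs both Observation~\ref{obs:excluded} and the choice of $j^*$ as the agent with the \emph{earliest} $\ell_j$ to make $\y$ compatible with $\ell_{j^*}$. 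Your profiles avoid both complications; the four edges are correctly witnessed at $\ell_i$ and $v_B$, and your computation $(L-M)m<0$ is right.

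The gap is exactly where you flag it: the case $i=i^*$ is not actually proved. Your idea --- manufacture a witness $w\neq i^*$ with $L\in D_w^{(\ell_{i^*})}$, else everyone but $i^*$ is excluded and the all-$H$ leaf already forces ratio $n$ --- is the right opening move, but it is not enough to finish. To build the analogue of the cycle you would need a fourth profile (the analogue of $\y$, in which $i^*$ is forced to take all $m$ jobs at a type strictly smaller than $T$), and the natural choice puts $w$ (and/or other agents of $\overline Z$) at type $T$; but you have only arranged $L\in D_w^{(\ell_{i^*})}$, not $T\in D_w^{(\ell_{i^*})}$, so compatibility with $\ell_{i^*}$ is open. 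The paper closes this by choosing the agent with the \emph{earliest} $\ell$-node in $\overline Z$ as the one to run the argument on: then \emph{every} $j\in\overline Z$ still has $\{L,M,H,T\}\subseteq D_j^{(\ell_{j^*})}$, which makes both $\x$ (others of $\overline Z$ at $L$) and $\y$ (others of $\overline Z$ at $T$) compatible, after which the cycle $\y\to\t\to\x\to\s\to\y$ with $(\s,\t)$ the $B$- and $T$-anchors has weight $(M-H)m<0$. Since you picked $i$ arbitrarily rather than extremally, that compatibility is not available to you, and your sketch as written cannot be completed without either importing the ``earliest $\ell_j$'' selection or finding some other way to certify $T\in D_w^{(\ell_{i^*})}$. (Note also that the paper's argument in the $j^*=i^*$ case implicitly relies on $\overline Z\setminus\{i^*\}\neq\emptyset$, which is exactly the observation you make; so that part of your sketch is needed and correct, it just isn't the whole of the missing case.)
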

\begin{proof}
	Suppose that this is not the case, and there is $j \in \overline{Z}$ such that $\ell_j$ precedes $u^*$.
	We let $j^*$ be the agent with the highest $\ell_j$ in the implementation tree $\T$, that is, the $j \in \overline{Z}$ such that $\ell_{j^*}$ precedes $\ell_j$ for every $j\in \overline{Z}$ with $j \neq j^*$. This implies that $D_j^{(\ell_{j^*})} \supseteq \{L, M,H, T\}$ for every $j \in \overline{Z}$. Moreover, $D_{i^*}^{(\ell_{j^*})} \supseteq \{B, H\}$.
	We distinguish two cases based on the identity of $j^*$.
	\begin{description}[topsep=0pt,style=unboxed,leftmargin=2ex]
		\item[Case $j^* \neq i^*$.] Let $w^*$ be the node at which $j^*$ separates $B$ from $T$. Note that either $w^*$ precedes $\ell_{j^*}$ (if at $w^*$ agent $j^*$ receives a greedy query about $B$), or $w^* = \ell_{j^*}$ (since, if $B$ and $T$ have not been separated before, they will be surely separated when $L$ and $T$ are separated). Hence, since $D_{j^*}^{(\ell_{j^*})} \supseteq \{L, M,H, T\}$ and $B$ is still available at $w^*$, we have that $D_{j^*}^{(w^*)} = \{B, L, M, H, T\}$.
		
		According to Observation~\ref{obs:all}, every profile $\s$ compatible with $w^*$ such that $s_{j^*} = B$ must have $f_i(\s) = m$. Similarly, according to Observation~\ref{obs:zero}, every profile $\t$ compatible with $\ell_{j^*}$ (and thus with $w^*$) such that $t_{j^*} = T$ must have $f_i(\t) = 0$.
		
		Consider now profiles $\x$ and $\y$ such that $x_{j^*} = L$, $x_{i^*} = B$, and $x_i = H$ for every $i \neq i^*, j^*$, and $y_{j^*} = H$, $y_j = T$ for every $j \in \overline{Z}$ with $j \neq i$ and $y_j = H$ for every $j \in Z$. Note that both $\x$ and $\y$ are compatible with $\ell_{j^*}$ and thus with $w^*$.
		
		Observe that $\M$ must assign outcome $0$ to $j^*$ on input the profile $\x$: indeed, for this type profile the optimal outcome would be to assign all jobs to $i^*$ with makespan $nB$, and any solution that assigns a job to a machine different from $i^*$ would have approximation worse than $n$. Moreover, $\M$ must assign outcome $m$ to $j^*$ on input profile $\y$. Suppose indeed that this is not the case and there is at least a job assigned to a machine $j \neq j^*$: note that $j$ cannot be an excluded agent, since $\y$ is compatible with its excluding node $v_j$ (recall that $y_j = H$), and thus, by Observation~\ref{obs:excluded}, $f_j(\y) = 0$; hence, it must be the case that $j \in \overline{Z}$, and thus, since $T > n^2H$, the makespan of the mechanism is $T$. However, for the type profile $\y$ the optimal outcome would be to assign all jobs to those machines that have type $H$, with makespan at most $nH$. Then, $\M$ has approximation worse than $n$, that is contradiction.
		
		Hence, $x_{j^*}$ and $y_{j^*}$ are antimonotone types witnessed by $\x$ and $\y$. Therefore, by 2CMON, there cannot be an edge between them in the \vgraph\ of $j^*$. However, since they are both still available at $\ell^{j^*}$, this means they have been separated after this node, and that $\s$ and $\t$ are anchors for this nodes. That is, in the \vgraph\ of $j^*$ there is the cycle $C = \y \rightarrow \t \rightarrow \x \rightarrow \s \rightarrow \y$ that costs $(L-H)m < 0$, which contradicts the fact that $\M$ is OSP.
		
		\item[Case $j^* = i^*$.] In this case, since, as showed above, $D_{i^*}^{(\ell_{i^*})} = D_{j^*}^{(\ell_{j^*})} \supseteq \{L, M, H, T\}$ and $D_{i^*}^{(\ell_{i^*})} = D_{i^*}{(\ell_{j^*})} \supseteq \{B, H\}$, we have that $D_{i^*}^{(\ell_{i^*})}= \{B, L, M, H, T\}$. Moreover, since $\ell_{i^*}$ precedes $u^*$, and thus at $\ell_{i^*}$ we are separating $L$ from $T$, but not $B$ from $H$, it must be that at this node we are making a reverse greedy query about type $T$ (and not a split query).
		
		As above, Observation~\ref{obs:all} implies that every profile $\s$ compatible with $u^*$ (and thus with $\ell_{i^*}$) such that $s_{i^*} = B$ must have $f_i(\s) = m$. Similarly, according to Observation~\ref{obs:zero2}, every profile $\t$ compatible with $\ell_{i^*}$ such that $t_{i^*} = T$ must have $f_i(\t) = 0$.
		
		Consider now profiles $\x$ and $\y$ as follows. In $\x$, we have $x_{i^*} = M$, $x_j = L$ for every $j \in \overline{Z}$, with $j \neq i^*$, and $x_j = H$ for every $j \in Z$. In $\y$, we have $y_{i^*} = H$, $y_j = T$ for every $j \in \overline{Z}$, with $j \neq i^*$, and $y_j = H$ for every $j \in Z$. Note that both $\x$ and $\y$ are compatible with $\ell_{i^*}$. 
		
		Observe that $\M$ must assign outcome $0$ to $i^*$ on input the profile $\x$: indeed, for $\x$ the optimal outcome would be to uniformly allocate all the jobs to agents $j \in \overline{Z}$, with $j \neq i^*$, for a makespan of at most $nL$, whereas any solution that assigns a job to a different machine would have approximation worse than $n$.
		Moreover, $\M$ must assign outcome $m$ to $i^*$ on input profile $\y$. Suppose indeed that this is not the case and there is at least a job assigned to a machine $j \neq i^*$. note that $j$ cannot be an excluded node. In fact, since $\y$ is compatible with its excluding node $v_j$ (recall that $y_j = H$), Observation~\ref{obs:excluded} implies $f_j(\y) = 0$. Hence, it must be the case that $j \in \overline{Z}$, and thus, since $T > n^2H$, the makespan of the mechanism is $T$. However, for the type profile $\y$ the optimal outcome would be to uniformly assign all the jobs to those machines that have type $H$, with makespan at most $nH$. Then, $\M$ has approximation worse than $n$, that is contradiction.
		
		Hence, $x_{j^*}$ and $y_{j^*}$ are antimonotone types witnessed by $\x$ and $\y$. Therefore, by 2CMON, there cannot be an edge between them in the \vgraph\ of $j^*$. However, since they are both still available at $\ell^{j^*}$, this means they have been separated after this node, and that $\s$ and $\t$ are anchors for this nodes. That is, in the \vgraph\ of $i^*$ there is the cycle $C = \y \rightarrow \t \rightarrow \x \rightarrow \s \rightarrow \y$ of weight $(L-H)m < 0$, which contradicts the fact that $\M$ is OSP.\qedhere
	\end{description}
\end{proof}

\begin{proof}[Proof of Theorem \ref{thm:lbn}]
By Lemma~\ref{lem:4cycle}, we then have that $D_{i^*}^{(u^*)} = \{B, L, M, H, T\}$ and $D_i^{(u^*)} = \{L, M, H, T\}$ for every $i \in \overline{Z}$, with $i \neq i^*$.
Consider then the profile $\x$ such that $x_i = H$ for $i \in Z$, and $x_i = T$ for $i \in \overline{Z}$. Note that $\x$ is compatible with $u^*$. Observe that the mechanism $\M$ assigns outcome $0$ to every machine $i \in Z$, since $\x$ is compatible with $v_i$ (because $x_i = H$). Hence, the mechanism must assign jobs only to machines in $\overline{Z}$, with a makespan that is at least $T$. However, the optimal allocation would assign all the jobs to the machines in $Z$. Since, by Observation~\ref{obs:zeta}, $|Z| \geq 1$, it follows that the optimal makespan is at most $nH$. Hence, since $T > n^2H$, we have that $\M$ has an approximation worse that $n$, that is a contradiction.
\end{proof}

\paragraph{Four types.}
We now move to domains of size four and prove the following result.

\begin{theorem}\label{thm:lb4}
	There is no OSP mechanism with approximation guarantee better than {$n/2 + 1$} for the scheduling related machines problem when all the $n$ agents have type from the same domain of size four.
\end{theorem}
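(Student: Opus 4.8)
The plan is to re-run the proof of Theorem~\ref{thm:lbn} with the domain cut down to $D=\{B,L,H,T\}$, keeping the large gaps $T>n^2H$, $H>n^2L$, $L>n^2B$ and $m=n$ jobs, and to track exactly what is lost when the middle type $M$ disappears. Suppose towards a contradiction that $\M$ is OSP with $\rho<n/2+1$; then $\M$ is $2$CMON, by Theorem~\ref{thm:monowlog} we may assume all its queries ordered, and we argue along the root-to-leaf path $P$ of $\T$ compatible with every agent reporting $H$. I would first transplant the structural observations of the five-type proof that only compare a single pair of consecutive ``levels'' together with $2$CMON, since these are insensitive to removing $M$: along $P$ every agent must be asked to remove $B$ (else it receives all $n$ jobs both for type $B$ and for type $H$, makespan $nH$ against an optimum of $H$, ratio $n>n/2+1$); taking $u^*$ to be the last node at which some agent $i^*$ removes $B$, the excluded set $Z$ of agents separating $L$ from $H$ above $u^*$ is assigned load $0$ at its excluding nodes, $i^*\notin Z$ so $|Z|<n$, $|Z|\ge1$, every $i\in\overline Z:=[n]\setminus Z$ is asked to separate $L$ from $T$ along $P$ at a node $\ell_i$, and the analogues of Observations~\ref{obs:all}--\ref{obs:zero2} hold, with $H$ in the role formerly played by a type strictly between $L$ and $T$. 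This reproduces the combinatorial scaffold of the five-type proof.

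The crux is the four-type analogue of Lemma~\ref{lem:4cycle}. Its proof splits into the case $j^*\neq i^*$, which only uses profiles with $j^*$-coordinate in $\{B,L,H,T\}$ and $i^*$ reporting $B$ or $T$ — both available because, when $j^*$ is the $\overline Z$-agent whose $L$--$T$ separation is closest to the root, $i^*$ still holds its full domain there — and therefore survives unchanged; and the case $j^*=i^*$, where the five-type argument crucially takes $x_{i^*}=M$. With four types the only candidate is $x_{i^*}=L$, but then $i^*$ itself has a cheap type, the optimum may legitimately route a job to $i^*$, and the conclusion $f_{i^*}(\x)=0$ — hence the negative $4$-cycle of weight $\approx(L-H)m$ — collapses; moreover the usual fix of re-running the clean case with the next-earliest $\overline Z$-agent also fails, since by then $i^*$ has already dropped $T$ from its domain (its $L$--$T$ separation must have been a reverse-greedy query about $T$) and can no longer be made ``unattractive'' in the relevant profile. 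So the best one can establish is the weaker statement: if any $i\in\overline Z$ separates $L$ from $T$ strictly above $u^*$ it does so through a reverse-greedy query about $T$, and such ``early separators'' still leave $i^*$ holding $\{B,L,H\}$ and every other $\overline Z$-agent holding $\{L,H,T\}$ at $u^*$.

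I would then close with a counting argument over the possible values of $|Z|$. When $|Z|$ is large, the profile in which every agent of $Z$ and $i^*$ reports $H$ while every other $\overline Z$-agent reports $T$ forces, by the $n^2$-gap, all $n$ jobs onto the type-$H$ machines of $\overline Z$ that the mechanism can still load (the excluded agents getting $0$), so the mechanism's makespan is $\Omega(nH/|\overline Z|)$ while the optimum spreads over $\Theta(|Z|)$ type-$H$ machines; this gives $\rho=\Omega(n/|Z|)$. When $|Z|$ is small, one instead uses that the $\Theta(n)$ agents of $\overline Z$ must each separate $L$ from $T$ along $P$ and that, by the weak lemma above together with outcome-monotonicity, on a profile where a carefully chosen threshold of $\overline Z$-agents reports $T$ and the rest report $H$ the mechanism is forced to concentrate the load onto the $|Z|+O(1)$ type-$H$ machines it can still use while the optimum uses $\Theta(n-|Z|)$ of them, giving a complementary lower bound growing with $n-|Z|$. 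Optimizing the two bounds against the adversary's freedom in fixing $|Z|$ pins the worst case at $n/2$, and a final off-by-one in the job load (forcing one extra unit onto the loaded machine) upgrades this to $n/2+1$, the desired contradiction.

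I expect the dominant obstacle to be making this last balancing argument tight: exhibiting, uniformly over every value of $|Z|$ the mechanism might create, an adversarial profile compatible with the appropriate node of $\T$ that provably forces ratio at least $n/2+1$ — this is exactly where the factor $1/2$ and the additive $+1$ come from — and, relatedly, bounding how many $\overline Z$-agents the case-$j^*=i^*$ obstruction can turn into ``early separators'' so that they do not erode the bound. A secondary, purely bookkeeping difficulty is re-checking that every profile used in the surviving $4$-cycles remains compatible with the correct nested subdomains once $M$ is absent; here the ordered-query normalization of Theorem~\ref{thm:monowlog} is what keeps the subdomains nested as the argument requires.
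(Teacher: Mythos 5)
Your scaffolding is right: you correctly identify that only the case $j^*\neq i^*$ of Lemma~\ref{lem:4cycle} survives with four types (the $j^*=i^*$ case collapses precisely because $M$ is the only type that separates $i^*$ from the other $\overline Z$-agents while still being more expensive than $L$), and this is exactly the paper's Lemma~\ref{lem:4cycle4}: $u^*$ precedes $\ell_i$ for every $i\in\overline Z$ with $i\neq i^*$, so at $u^*$ every such $i$ still holds $\{L,H,T\}$ while $i^*$ still holds $\{B,L,H\}$.

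The genuine gap is in the quantitative close. You keep $m=n$ jobs and $n^2$ gaps, and defer the constant to a ``final off-by-one in the job load.'' But with $m=n$ both of your cases top out at roughly $n/2$, not $n/2+1$: forcing all $n$ jobs onto one machine when the optimum spreads over $\approx n/2$ machines gives a ratio of $\lceil n/(n/2+1)\rceil^{-1}\cdot n\approx n/2$, and no amount of case balancing over $|Z|$ changes this, because both sides of the balance are capped by the same $m/n$ ratio. The missing idea in the paper is to \emph{scale $m$}: it sets $n$ even, $m=cn$ with $c=n/2+1$, and enlarges the gaps from $n^2$ to $mn$ (the Observations all need $L>mnB$, etc., once $m$ is this large). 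With that choice, a single machine receiving all $m$ jobs versus an optimum spreading over $n/2+1$ machines gives makespan ratio exactly $mB/\bigl(\lceil m/(n/2+1)\rceil B\bigr)=m/n=n/2+1$ on the nose. The paper then does not ``optimize over $|Z|$'' at all: Lemma~\ref{lem:sizeZ} directly rules out $|Z|\le n/2-1$ (via the profile where all of $\overline Z$ reports $B$ and Observation~\ref{obs:all} forces all $m$ jobs onto a single machine), so $|Z\cup\{i^*\}|\ge n/2+1$; the final contradiction then comes from one profile (everyone in $Z\cup\{i^*\}$ at $H$, the rest of $\overline Z$ at $T$), not a family parameterized by $|Z|$. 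Your description of the two regimes also has the dependence on $|Z|$ backwards in the large-$|Z|$ case ($\Omega(n/|Z|)$ should be $\Omega(|Z|)$), and the small-$|Z|$ profile you propose (some of $\overline Z$ at $T$, rest at $H$) does not cleanly force concentration the way the all-$B$ profile in Lemma~\ref{lem:sizeZ} does. Without the $m=cn$ rescaling, the bound you can actually extract is $n/2$, not $n/2+1$.
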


Consider the following setting: {$n$ is even, $m=cn$, with $c=\frac{n}{2} + 1$} and each agent has the following four types in the domain: {$B, L > mnB, H > mnL, T > mnH$}. By inspection, it is not hard to see that all the observations above continue to hold whereas Lemma \ref{lem:4cycle} ceases to be true. However, we can prove a slightly weaker result by using the case $j^* \neq i^*$ in the proof of Lemma \ref{lem:4cycle}.
\begin{lemma}
	\label{lem:4cycle4}
	For every $i \in \overline{Z}$ {with $i \neq i^*$}, $u^*$ precedes $\ell_i$  along $P$.
\end{lemma}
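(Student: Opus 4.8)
The plan is to replay, almost verbatim, the case $j^* \neq i^*$ of the proof of Lemma~\ref{lem:4cycle}, observing that this branch never refers to the type $M$ and that the hypothesis $i \neq i^*$ is exactly what lets us keep the pivotal agent away from $i^*$. So I would assume, for contradiction, that some $j \in \overline{Z}$ with $j \neq i^*$ has $\ell_j$ strictly above $u^*$ along $P$, and among all agents of $\overline{Z}\setminus\{i^*\}$ whose $(L,T)$-separation node precedes $u^*$ (a non-empty set, since it contains $j$) I would let $j^*$ be one with the highest such node $\ell_{j^*}$. Then $j^* \neq i^*$ and $\ell_{j^*}$ precedes $u^*$.

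Next I would record the domains at $\ell_{j^*}$: by maximality of $\ell_{j^*}$, every $k \in \overline{Z}\setminus\{i^*\}$ still has $\{L,H,T\}\subseteq D_k^{(\ell_{j^*})}$; since $\ell_{j^*}$ precedes $u^*$, $B \in D_{i^*}^{(\ell_{j^*})}$, and ordered queries prevent $i^*$ from dropping $H$ while keeping $B$ along $P$, so $\{B,H\}\subseteq D_{i^*}^{(\ell_{j^*})}$; finally $H\in D_k^{(\ell_{j^*})}$ for $k\in Z$ because $\ell_{j^*}$ lies on $P$. As in Lemma~\ref{lem:4cycle}, the node $w^*$ at which $j^*$ separates $B$ from $T$ is either an earlier greedy query about $B$ or equals $\ell_{j^*}$, so $D_{j^*}^{(w^*)}=\{B,L,H,T\}$; Observation~\ref{obs:all} then forces $f_{j^*}(\s)=m$ for every $\s$ compatible with $w^*$ with $s_{j^*}=B$, and Observation~\ref{obs:zero} (applicable since $B\in D_{i^*}^{(\ell_{j^*})}$) forces $f_{j^*}(\t)=0$ for every $\t$ compatible with $\ell_{j^*}$ with $t_{j^*}=T$. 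I would then take $\x$ with $x_{j^*}=L$, $x_{i^*}=B$, $x_k=H$ otherwise, and $\y$ with $y_{j^*}=H$, $y_k=T$ for $k\in\overline{Z}\setminus\{j^*\}$, $y_k=H$ for $k\in Z$; both are compatible with $\ell_{j^*}$. The approximation guarantee pins $f_{j^*}(\x)=0$ (the optimum loads everything on the $B$-machine $i^*$) and $f_{j^*}(\y)=m$ (no job may reach a $T$-machine without the makespan exceeding $\rho$ times the optimum, and every $k\in Z$ gets $0$ by Observation~\ref{obs:excluded}). Hence $L$ and $H$ are antimonotone for $j^*$, witnessed by $\x,\y$, so by 2CMON they are never directly separated and $\s,\t$ serve as anchors; the four-cycle $\y\rightarrow\t\rightarrow\x\rightarrow\s\rightarrow\y$ in the \vgraph\ of $j^*$ has weight
\[
H(f_{j^*}(\t)-f_{j^*}(\y)) + T(f_{j^*}(\x)-f_{j^*}(\t)) + L(f_{j^*}(\s)-f_{j^*}(\x)) + B(f_{j^*}(\y)-f_{j^*}(\s)) = (L-H)m < 0,
\]
contradicting the fact that $\M$ is OSP.

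The step I expect to be the genuine obstacle is the one glossed over when choosing $\y$: unlike in Lemma~\ref{lem:4cycle}, where $j^*$ is the first agent of the \emph{whole} of $\overline{Z}$ to separate $L$ from $T$, here $i^*$ may have separated $L$ from $T$ — necessarily via a reverse-greedy query about $T$ along $P$ — strictly before $\ell_{j^*}$, so that $T\notin D_{i^*}^{(\ell_{j^*})}$ and the assignment $y_{i^*}=T$ becomes illegal. One then has to set $y_{i^*}$ to the largest type still available to $i^*$ (namely $H$, since $D_{i^*}^{(\ell_{j^*})}=\{B,L,H\}$ in that situation) and re-argue that $j^*$ still receives a positive load in $\y$ — Observation~\ref{obs:excluded} removes the $Z$-agents and the approximation bound keeps jobs off the $T$-machines, so one is left to rule out that $i^*$ alone absorbs everything, either by a dedicated use of Observation~\ref{obs:zero2} at $i^*$'s reverse-greedy-about-$T$ node or by building the contradictory four-cycle directly in the \vgraph\ of $i^*$ from that node and $u^*$; the same four-cycle then has weight $m'(B-H)+m(L-B)<0$, still negative because $H-B>m(L-B)$ follows from $H>mnB$ and $H>mnL$. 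This is precisely where the analogue of the ``$j^*=i^*$'' branch of Lemma~\ref{lem:4cycle} collapses: that branch needed the profile $\x$ with $x_{i^*}=M$, and there is no such intermediate type in four-type domains, which is exactly why the conclusion can only be salvaged for $i \neq i^*$.
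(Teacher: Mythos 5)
Your main line is exactly the paper's intended argument: the paper's proof of this lemma is, verbatim, ``use the case $j^* \neq i^*$ in the proof of Lemma~\ref{lem:4cycle}'', and your replay of that branch (choosing $j^*$ among $\overline{Z}\setminus\{i^*\}$, recording domains at $\ell_{j^*}$ and $w^*$, invoking Observations~\ref{obs:all},~\ref{obs:zero},~\ref{obs:excluded}, building $\x,\y$ with $x_{j^*}=L$, $y_{j^*}=H$, and closing the four-cycle $\y\rightarrow\t\rightarrow\x\rightarrow\s\rightarrow\y$ of weight $(L-H)m<0$) is faithful and correct under the hypothesis $T\in D_{i^*}^{(\ell_{j^*})}$. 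You also correctly notice that the branch never touches the type $M$, which is why it survives the passage to four-type domains.

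You then identify a subtlety that the paper does not address: because $j^*$ is now maximal only over $\overline{Z}\setminus\{i^*\}$, it can happen that $\ell_{i^*}$ strictly precedes $\ell_{j^*}$, in which case $T\notin D_{i^*}^{(\ell_{j^*})}$ and $y_{i^*}=T$ is illegal. This is a genuine gap that the paper's one-line proof glosses over. However, your proposed repair does not close it. Setting $y_{i^*}=H$ makes $i^*$ itself an $H$-machine in $\y$, so the optimum spreads the jobs over $j^*$, $i^*$, and $Z$; the approximation guarantee then only keeps jobs off the $T$-machines, and Observation~\ref{obs:excluded} only kills the $Z$-agents — neither forbids the mechanism from routing all $m$ jobs to $i^*$, leaving $f_{j^*}(\y)=0$ and destroying the antimonotonicity of $L$ and $H$ for $j^*$. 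The two alternatives you sketch do not obviously help: Observation~\ref{obs:zero2} constrains only profiles in which the queried agent's type is $T$, so it says nothing about $\y$ with $y_{i^*}=H$; and the ``four-cycle for $i^*$ from $\ell_{i^*}$ and $u^*$'' is not concretely built — it would need a profile where $i^*$ has type $L$ and gets a small load and another where $i^*$ has type $H$ and gets a large load, both reachable from $\ell_{i^*}$ and $u^*$ in a way that the compatibility constraints support, and it is not clear that the approximation bound forces such outcomes. So the sub-case $\ell_{i^*}<\ell_{j^*}<u^*$ remains open in your write-up (and, as far as the stated argument goes, in the paper's as well).
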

Moreover, we can strengthen the lower bound on the size of $Z$ as follows.
	\begin{lemma}
		\label{lem:sizeZ}
		It holds $|Z| > \frac{n}{2} - 1$.
	\end{lemma}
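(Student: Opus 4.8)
The plan is to argue by contradiction: assume $|Z| \le \frac{n}{2} - 1$ and show that $\M$ is then forced to ask a greedy query about $B$ ``too early'' along $P$, contradicting the approximation guarantee $\rho < c = \frac n2 + 1$. First I would record a purely structural fact about the queries on $P$. Since queries are ordered and, along $P$, type $H$ is never removed from any agent's domain (the path $P$ always follows the side containing $H$), the current subdomain of an agent that still has $B$ is always a prefix of the sorted domain, i.e.\ one of $\{B\}$, $\{B,L\}$, $\{B,L,H\}$, $\{B,L,H,T\}$, and in particular always contains $H$. Consequently, whenever an agent $i=i(u)$ is asked to remove $B$ at a node $u\in P$, the query at $u$ is \emph{either} a greedy query about $B$ (the split $\{B\}$ vs.\ the rest) \emph{or} the split $\{B,L\}$ vs.\ $\{H,T\}$, and the latter separates $L$ from $H$ for $i$. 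By Observation~\ref{obs:notB} every agent is asked to remove $B$ exactly once on $P$; let $u_1 \prec u_2 \prec \cdots \prec u_n$ be these nodes in the order they occur on $P$, so that $u_n = u^*$. Since $B$ leaves an agent's domain only at one of the $u_j$'s, at node $u_k$ exactly the $n-k+1$ agents $i(u_k),\dots,i(u_n)$ still have $B$ in their domain.

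The core step is to prove that for every $k \le \frac n2$ the query at $u_k$ is \emph{not} a greedy query about $B$. Suppose it were, and set $S=\{i(u_k),\dots,i(u_n)\}$, so that $|S| = n-k+1 \ge c$. Consider the profile $\x$ with $x_j = B$ for $j\in S$ and $x_j = H$ otherwise; since $H$ is never removed along $P$ and $B \in D_j^{(u_k)}$ for $j \in S$, the profile $\x$ is compatible with $u_k$, and $x_{i(u_k)}=B$. A greedy query about $B$ separates $B$ from $L$, so Observation~\ref{obs:all} applies at $u_k$ and gives $f_{i(u_k)}(\x)=m$; hence machine $i(u_k)$ carries load $m=cn$ of type $B$ and the makespan of the solution returned by $\M$ on $\x$ is at least $mB = cnB$. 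On the other hand, spreading the $m=cn$ jobs over the $|S|\ge c$ machines of type $B$ (the remaining machines have type $H$, which is far larger and only helps the optimum) yields an optimal makespan of at most $\lceil cn/|S| \rceil B \le nB$, using $|S| \ge c$. Thus $\M$ has approximation at least $cnB/(nB) = c$ on the instance $\x$, contradicting $\rho < c$.

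It follows that for every $k \le \frac n2$ the query at $u_k$ separates $L$ from $H$ for $i(u_k)$; since $u_k$ strictly precedes $u_n=u^*$ (because $k < n$), agent $i(u_k)$ is excluded, i.e.\ $i(u_k)\in Z$. As $i(u_1),\dots,i(u_{n/2})$ are pairwise distinct (each agent removes $B$ at a single node of $P$), we conclude $|Z| \ge \frac n2 > \frac n2 - 1$, which proves the lemma; note this also strengthens Observation~\ref{obs:zeta}.

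The step I expect to require the most care is the structural dichotomy for ``remove-$B$'' queries together with the compatibility check for $\x$: one must be precise that orderedness forces the subdomain of the queried agent to be a prefix of $\{B,L,H,T\}$ containing $H$, so that the only remove-$B$ queries are the greedy one about $B$ and the split $\{B,L\}$ vs.\ $\{H,T\}$, the latter being exactly the query that separates $L$ from $H$; and that the profile $\x$ above genuinely reaches $u_k$ so that Observation~\ref{obs:all} can be invoked. Everything else is the elementary makespan comparison of $cnB$ against $\lceil cn/|S|\rceil B$ with $|S|\ge c = \frac n2 + 1$, which is precisely where the choice $c=\frac n2+1$ (equivalently $m=cn$) is used.
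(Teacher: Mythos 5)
Your proof is correct and uses the same core technique as the paper: you invoke Observation~\ref{obs:all} at a remove-$B$ node where enough agents (at least $c = n/2+1$) still have $B$ in their domain, build the profile with those agents at $B$ and the rest at $H$, and compare the mechanism's makespan $mB$ against the optimal $\lceil m/|S|\rceil B \le nB$ to force an approximation of at least $c$. The packaging differs slightly from the paper's: rather than assuming $|Z| \le n/2-1$ and contradicting at the first remove-$B$ node of an agent in $\overline{Z}$, you enumerate the remove-$B$ nodes $u_1\prec\cdots\prec u_n$ along $P$, observe that orderedness forces each such query to be either a greedy query about $B$ or the $\{B,L\}$-vs-$\{H,T\}$ split (the latter being exactly an $L$-$H$ separation), and show directly that for $k\le n/2$ the greedy option is impossible, so $i(u_1),\dots,i(u_{n/2})$ all land in $Z$. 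This gives $|Z|\ge n/2$ directly rather than by contradiction, and makes explicit the structural dichotomy of remove-$B$ queries that the paper leaves implicit (it implicitly uses that the first remove-$B$ node of an $\overline{Z}$ agent, being an $\overline{Z}$ agent, cannot be the $L$-$H$-separating split). One cosmetic point: your opening frames the argument as ``by contradiction, assume $|Z|\le n/2-1$,'' but your core step never uses that hypothesis — the argument is in fact direct, and you could drop the framing.
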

	\begin{proof}
		Suppose instead that $|Z| \leq \frac{n}{2} - 1$, and hence $|\overline{Z}| \geq \frac{n}{2}+1$. Let $u$ be the first node along $P$ in which an agent $i \in \overline{Z}$ removes $B$ from their domain. By definition of $u$, we must have that $B \in D_j^{(u)}$ for every $j \in \overline{Z}$. Consider then the profile $\x$ such that $x_j = B$ for every $j \in \overline{Z}$ and $x_j = H$ for every $j \in Z$. Note that $\x$ is compatible with $u$. Thus, by Observation~\ref{obs:all}, we have that $f_i(\x) = m$, and thus the mechanism has makespan $mB$. However, the optimal mechanism on this instance fairly split jobs among all machines in $\overline{Z}$, with makespan $\left\lceil\frac{m}{|\overline{Z}|}\right\rceil B \leq \left\lceil\frac{2m}{n+2}\right\rceil B = \left\lceil\frac{n(n+2)}{n+2}\right\rceil B = nB$. Hence, the mechanism has approximation $\frac{m}{n} = c = \frac{n}{2} + 1$, a contradiction.
	\end{proof}

\begin{proof}[Proof of Theorem \ref{thm:lb4}]
	By Lemma~\ref{lem:4cycle4}, we have that $D_{i^*}^{(u^*)} { \supseteq \{B, L, H\}}$ and $D_i^{(u^*)} = \{L, H, T\}$ for every $i \in \overline{Z}${ , with $i \neq i^*$}.
	Let $\x$ be the profile such that {  $x_i^* = H$,} $x_i = H$ for $i \in Z$, and $x_i = T$ for $i \in \overline{Z}$; $\x$ is compatible with $u^*$. The mechanism $\M$ assigns outcome $0$ to every machine $i \in Z$, since $\x$ is compatible with $v_i$ (because $x_i = H$). Hence, the mechanism must assign jobs only to machines in $\overline{Z}$, with a makespan that is at least {  $mH$}. However, the optimal allocation would fairly split the jobs among the machines in ${  Z \cup \{i^*\}}$. By Lemma~{ \ref{lem:sizeZ}, $|Z \cup \{i^*\}| > n/2$. Since $n$ is even, then $|Z \cup \{i^*\}| \geq \frac{n}{2} + 1$}. It follows that the optimal makespan is { $\left\lceil\frac{m}{|Z \cup \{i^*\}|}\right\rceil B \leq \left\lceil\frac{2m}{n+2}\right\rceil B = \left\lceil\frac{n(n+2)}{n+2}\right\rceil B = nB$}. Hence, the mechanism has approximation at least {$\frac{m}{n} = c = \frac{n}{2} + 1$}, a contradiction.
\end{proof}

It is not hard to see that a stronger result can be proved if one considers heterogeneous domains. Specifically, for every OSP mechanism whose query order is independent from the domain of agents, there is an instance on which it cannot achieve an approximation better than $n$: indeed, it is sufficient to take $H_{i^*} \gg H$ in the proof of Theorem~\ref{thm:lb4}. For this reason, we next focus only on the case of homogeneous domains.

\paragraph{Mechanism for four types.}
\newcommand{\Greedy}{\texttt{Greedy}}
\newcommand{\Mmany}{\M_{4}}
We now introduce mechanism $\Mmany$.
The mechanism adopts a simple routine \Greedy$(t, \pi)$, that consists in asking all the agents (belonging to a given subset) -- in round robin fashion according to the order $\pi$ -- if their type is the smallest not yet removed type in their domain up to type $t$, and assigning all jobs to the first machine answering yes to one of these queries. When we omit the parameter $\pi$, every order may be used. We say that \Greedy$(t, \pi)$ fails if no agent is found with type at most $t$.

The mechanism essentially uses a reverse greedy (a.k.a., descending) phase to find the $n/2$ machines with largest type to which it will assign outcome $0$. Subsequently, the mechanism uses a greedy (a.k.a., ascending) phase over the remaining $n/2$ machines using the \Greedy~ routine. As described above, however, for an OSP mechanism these two phases should not be combined, that is, we need to avoid querying the same agent both in the descending and ascending phase until this process would create two antimonotone profiles with two pivots (that happens to be until there is at least one agent who has revealed to have type at most $L$). However, to keep the approximation of the mechanism bounded we should mix phases for some agents, for otherwise it may occur that the ones that have not yet been queried in the descending phase are the ones with the worst type in the domain. Hence, we need to select a special agent that will be queried in both phases. However, in order to avoid a negative cycle for this agent we need to play with the timing in which this agent is queried during the ascending phase: specifically, we need to force her to be the last to be queried about type $B$ and the first to be queried about type $L$. The mechanism is given in Algorithm \ref{descending}.

\begin{algorithm}[htbp]
\DontPrintSemicolon
\small
Let $i^* = 1$ (special agent), $A = [n]$ (alive machines), $t = T$ (minimum largest type of alive machines)\;
\tcc*[f]{Descending Phase}\;
\While{receiving a yes answer to previous queries (if any) and $|A| > \lceil n/2 \rceil$}{
Ask $i^*$ if her type is $T$\;
\lIf{yes}{Remove $i^*$ from $A$ and set $i^* = i^*+1$}
}
\lIf{$|A| > \lceil n/2 \rceil$}{Set $t = H$}
Set $i = i^* + 1$\;
\While{receiving an answer $\{H, T\}$ to previous split queries (if any) and $|A| > \lceil n/2 \rceil$}{
Ask $i$ if her type is in $\{B, L\}$ or in $\{H, T\}$\;
\lIf{answer is $\{H, T\}$}{Remove $i$ from $A$ and set $i = i+1$}
}
\lIf{$|A| > \lceil n/2 \rceil$}{Set $t = L$ and $i^*$ to be the last queried machine (whose answer has been $\{B, L\}$)}
Let $i$ be the machine in $A$ with the lowest id\;
\While{$i \leq n$ and $|A| > \lceil n/2 \rceil$}{
Ask $i$ if her type is in $\{B, L\}$ or in $\{H, T\}$\;
\lIf{answer is $\{H, T\}$}{Remove $i$ from $A$ and set $i$ to the next machine in $A$}
}
Let $i$ be the machine in $A$ with the lowest id\;
\While{$i \leq n$ and $|A| > \lceil n/2 \rceil$}{
Ask $i$ if her type is $L$\;
\lIf{yes}{Remove $i$ from $A$ and set $i$ to the next machine in $A$}
}
\lIf{$|A| > \lceil n/2 \rceil$}{Set $t = B$}

\tcc*[f]{Ascending Phase}\;
\lIf{$|A| > \lceil n/2 \rceil$}{Evenly split the $m$ jobs over machines in $A$ \label{line:outcome1}}
\ElseIf{$t < H$}{
Run \Greedy$(B)$ over machines in  $A$ \label{line:outcome2}\;
\If{it fails}{
Ask all machines in $A$ different from $i^*$ if her type is $L$ until one answers yes \label{line:notall2}\;
\lIf{there is a machine $i$ that answered yes}{Assign $\lceil m/2\rceil$ jobs to $i^*$ and $\lfloor m/2\rfloor$ jobs to $i$ \label{line:outcome3}}
\lElse{Assign $m$ jobs to $i^*$ \label{line:outcome3.5}}
}
}\Else
{Run \Greedy$(B, \pi)$ over machines in $A$ for some order $\pi$ such that $i^*$ is ranked last \label{line:outcome4}\;
\If{it fails}
{Run \Greedy$(L, \pi)$ over machines in $A$ for some order $\pi$ such that $i^*$ is ranked first \label{line:outcome5}\;
\If{it fails}{
Ask all machines in $A$ different from $i^*$ if her type is $H$ until one answers yes \label{line:notall}\;
\lIf{there is a machine $i$ that answered yes}{Assign $\lceil m/2\rceil$ jobs to $i^*$ and $\lfloor m/2\rfloor$ jobs to $i$ \label{line:outcome6}}
\lElseIf{$t=H$}{Assign $m$ jobs to $i^*$ \label{line:outcome7}}
\lElse{Assign $\lceil m/2\rceil$ jobs to $i^*$ and $\lfloor m/2\rfloor$ job to some machine $i \in A$ with $i \neq i^*$ \label{line:outcome8}}
}}}
 \caption{Mechanisms $\Mmany$}
 \label{descending}
\end{algorithm}

\begin{theorem}
	\label{prop:algo2mon}
	Mechanism $\Mmany$ is OSP when each agent has type in $D = \{B, L, H, T\}$.
\end{theorem}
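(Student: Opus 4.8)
The plan is to apply the characterization of Theorem~\ref{thm:main}: I will argue that $\Mmany$ is three‑way greedy. Equivalently, by Theorems~\ref{thm:cmon} and~\ref{thm:anatomy}, it suffices to show that $\Mmany$ satisfies OSP 2CMON and that no negative cycle closes in any \vgraph\ $\ver$, since a negative cycle would force a pair of antimonotone types to appear at some node with an eligible pair of pivots. The argument therefore splits into an ``ordered queries'' part and an ``antimonotone types / pivots'' part.

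First I would check that every query of Algorithm~\ref{descending} is ordered. The queries come in three flavours: (i) the greedy queries inside \Greedy, which ask an agent whether her type is the smallest one still in her domain and, on a positive answer, assign her all $m$ jobs, so that the label of the queried (smaller) type is $\{m\}$ and $\preceq$‑dominates every label of the larger types (no load exceeds $m$); (ii) the split queries ``is your type in $\{B,L\}$ or in $\{H,T\}$?'', where a $\{H,T\}$‑answer removes the agent, so the labels of $H$ and $T$ are $\{0\}$, dominated by the nonnegative labels of $B$ and $L$; and (iii) the descending queries about a single large type — about $T$ in the first loop, about $L$ in the descending $L$‑loop, and the implicit queries of lines~\ref{line:notall} and~\ref{line:notall2} — where a positive answer either removes the agent or (in lines~\ref{line:notall}/\ref{line:notall2}) assigns the small load $\lfloor m/2\rfloor$, in each case leaving every larger type with a weakly smaller label. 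A short inspection of the outcome lines of the algorithm then confirms that every query separates smaller types from larger ones with the larger side receiving $\preceq$‑smaller labels; in particular $\Mmany$ satisfies OSP 2CMON (ordered queries preserve outcome‑monotonicity along every two‑cycle).

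Next I would pin down which agents can have antimonotone types. For every agent that is never the special agent $i^*$ along a given leaf, the queries she receives are, in the ascending phase, greedy queries (positive answer on the smallest remaining type $=$ maximum load $m$, or the small load $\lfloor m/2\rfloor$ in lines~\ref{line:notall}/\ref{line:notall2}, with all strictly larger types getting weakly less) and, in the descending phase, split queries or large‑type queries whose positive answer yields load $0$; hence her labels are monotone at every node of $\T$ and she has no antimonotone types. The heart of the proof is therefore the special agent $i^*$. Here I would use the two timing devices built into the mechanism: $i^*$ is either the original special agent, untouched by split queries so that $B$ and $L$ stay in her domain, or else the first agent to answer $\{B,L\}$ in a split loop, whose domain is thereby fixed to $\{B,L\}$; and in the ascending phase she is queried \emph{last} about $B$ and \emph{first} about $L$. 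Enumerating, for each way $i^*$ can be ``created'' (original agent $1$ surviving the descending $T$‑loop; first agent to answer $\{B,L\}$ in a split loop; a survivor of the descending loops when $|A|$ reaches $\lceil n/2\rceil$), the set of loads $i^*$ can receive for each type as a function of the answers she has given, one checks that her labels stay monotone along every root‑to‑node path, except possibly at the deepest nodes where $H$ is separated from $\{B,L\}$ (or $T$ from $\{B,L,H\}$). At such a node the only candidates for pivots with a ``large'' or ``small'' label would be $T$ and $B$; but $i^*$ is queried about $B$ only \emph{after} the queries that could separate $L$ or $H$, and about $T$ only as the original special agent, \emph{before} any such separation, so no ancestor queried by $i^*$ has separated $B$ or $T$ from $\bbb{1},\bbb{2}$ together with a buddy of the required outcome. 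Hence Definition~\ref{def:pivot} is unsatisfiable at that node, a fortiori there are no anchors (cf.\ Observation~\ref{obs:no_anchors}), and the three‑way‑greedy condition of Definition~\ref{def:3way} holds vacuously there. Theorem~\ref{thm:main} then gives that $\Mmany$ is OSP on $D=\{B,L,H,T\}$.

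The main obstacle is exactly this last bookkeeping: one must track, case by case through the tree of Algorithm~\ref{descending}, the precise set of loads $i^*$ can obtain for each of her four types and the node at which each pair of her types is separated, and verify in every branch that wherever antimonotone types of $i^*$ do appear there is no eligible pivot. The ordered‑query verification and the monotonicity of the non‑special agents are comparatively routine, and — consistently with the theorem statement imposing no magnitude constraints — none of the argument relies on the gaps between $B,L,H,T$.
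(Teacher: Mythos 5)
Your plan follows the paper's high-level strategy (establish OSP 2CMON, then argue three-way greediness and invoke Theorem~\ref{thm:main}, with $i^*$ as the only agent needing care), but the crux of your argument for $i^*$ is wrong. You claim that at the node where $i^*$ separates $L$ from $H$, having already removed $B$ and $T$, there is no eligible pivot because ``no ancestor queried by $i^*$ has separated $B$ or $T$ from $\bbb{1},\bbb{2}$ together with a buddy of the required outcome.'' This is not so: at the descending-phase node where $i^*$ answers \emph{no} to the $T$-query, a \emph{yes} answer would have given $i^*$ outcome $0$, so $T$ has a $0$-buddy there; and at the ascending node where $i^*$ answers \emph{no} to the $B$-query in \Greedy$(B,\pi)$, a \emph{yes} answer would have given her $m$ jobs, so $B$ has an $m$-buddy. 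Since $0$ and $m$ are respectively the minimum and maximum possible outcomes, the conditions $q\leq x$ and $z\geq y$ of Definition~\ref{def:pivot} are satisfied automatically, so $T$ and $B$ \emph{are} pivots for any antimonotone pair in $\{L,H\}$, and your ``vacuous'' conclusion does not hold.

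The correct route, which the paper takes, is to show that in this configuration $L$ and $H$ are never antimonotone for $i^*$, so that the pivot/extreme-pivot condition is moot. The two timing devices you mention are exactly what makes this work, but you use them to argue about pivots instead of about the labels themselves: because $i^*$ is ranked \emph{last} in \Greedy$(B,\pi)$, when she is reached every alive agent has type at least $L$ and every removed agent at least $H$; and because she is ranked \emph{first} in \Greedy$(L,\pi)$, answering yes to $L$ gives her all $m$ jobs. Hence the label of $L$ at the relevant node is exactly $\{m\}$, the maximum outcome, and no outcome the mechanism can assign to $i^*$ for type $H$ can exceed it. So $\L_i^{(u)}(H)\preceq\L_i^{(u)}(L)$ always holds, and no antimonotone pair arises. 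You should replace your ``no eligible pivot'' step with this label-monotonicity argument; the rest of your outline (ordered queries, monotonicity for non-special agents, 2CMON) matches the paper.
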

\begin{proof}
	We first prove that $\Mmany$ satisfies OSP 2CMON. Indeed, whenever a machine receives a query in the descending phase, she gets no job if she declares a large type, and she cannot receive fewer jobs with a smaller type. Similarly, whenever a machine receives a query in the ascending phase, she gets all the jobs if she declare a small type, and she cannot receive more jobs with a larger type. The only exception is for queries at Line~\ref{line:notall2} and Line \ref{line:notall}: however, in both cases, since we are in the ascending phase, the machine receives $\lfloor m/2\rfloor$ jobs if her type is small and at most $\lfloor m/2\rfloor$ otherwise.

	Next we prove that the mechanism is three-way greedy, that is, for every agent $i$ it is not possible to find two antimonotone profiles with two pivots. This is sufficient by Theorem~\ref{thm:main}.
	Note that with four types, this may only occur if $i$ has been queried first about $T$ ($B$), answers negatively, then about $B$ ($T$, respectively), answers negatively, and only later she separates $L$ and $H$. It is immediate to check that the only agent for which this structure exists in $\Mmany$ is $i^*$ when $t \geq H$. However, as we will show below, for this agent the outcomes associated to types $L$ and $H$ that can arise after $i^*$ has been queried both about type $T$ and type $B$, imply that $L$ and $H$ are not antimonotone.

	To this aim, observe that, after these queries, the domain of $i^*$ contains only $L$ and $H$. Moreover, since $i^*$ is the last one to be queried about type $B$, it must be the case that the type of each alive agent must be at least $L$, and the type of each non-alive machine is at least $H$.
	Hence, since $i^*$ is the first agent to be queried about type $L$ by the routine \Greedy, it must be the case that, whenever the structure described above occurs, $i^*$ receives outcome $m$ for type $L$. Since the mechanism cannot assign a larger outcome for every profile in which the type of $i^*$ is $H$, we have that $L$ and $H$ cannot be antimonotone.
\end{proof}

\begin{theorem}
\label{cor:algoapprox}
Mechanism $\Mmany$ is $\left(\frac{n}{2} + 1\right)$-approximate for $m \geq n$ and $B < nL$, $L < nH$, and $H < nT$.
\end{theorem}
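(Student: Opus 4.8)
The plan is to bound the makespan of $\Mmany$ by a case analysis on which of the labelled lines of Algorithm~\ref{descending} produces the output (lines~\ref{line:outcome1},~\ref{line:outcome2},~\ref{line:outcome3},~\ref{line:outcome3.5},~\ref{line:outcome4},~\ref{line:outcome5},~\ref{line:outcome6},~\ref{line:outcome7},~\ref{line:outcome8}), having first recorded the state of the mechanism --- the alive set $A$, the threshold $t$, and the identity/type of the special agent $i^*$ --- at the moment it enters the ascending phase. The preliminary facts I would establish, all direct consequences of the $|A| > \lceil n/2 \rceil$ guards and of the order in which $t$ is lowered, are: (i) the descending phase removes at most $\lfloor n/2 \rfloor$ machines, each having declared a ``large'' type (type $\ge H$ if removed in a loop handling $T$ or the $\{B,L\}/\{H,T\}$ split, type $L$ if removed by the type-$L$ loop), so every machine that receives a positive load in the ascending phase lies among the $\lceil n/2 \rceil$ machines of smallest type; (ii) if $|A| > \lceil n/2 \rceil$ at the start of the ascending phase then every alive machine has type $B$, hence $\#\{j : t_j = B\} > \lceil n/2 \rceil$, i.e.\ whenever there are ``many'' fast machines the mechanism necessarily outputs via line~\ref{line:outcome1}; (iii) reaching any of the mixed leaves loading $i^*$ (lines~\ref{line:outcome3},~\ref{line:outcome3.5},~\ref{line:outcome6},~\ref{line:outcome7},~\ref{line:outcome8}) pins down $i^*$'s type and the types of the remaining alive and of all removed machines tightly (e.g.\ at line~\ref{line:outcome3}: $i^*$ and the chosen $i$ both have type $L$, and every other alive machine and every removed machine has type $\ge H$).

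For the lower bounds on $OPT(\b)$ I would use two standard facts: (a) the average-load bound $OPT(\b) \ge m\big/\sum_{j=1}^n \tfrac{1}{t_j}$, and (b) if a collection $S$ of machines, $|S|\le k$, each have type $\ge \tau$, then any assignment loading only machines of $S$ has makespan $\ge \lceil m/k\rceil\,\tau$. Dually, the mechanism's makespan is $\le \lceil m/|A|\rceil\,\sigma$ at the evenly-split leaf and $\le \lceil m/2\rceil\,\sigma$ (resp. $m\sigma$) at the two-machine (resp.\ single-machine) leaves, where $\sigma$ is the largest type it actually loads. The separation hypotheses $B<nL$, $L<nH$, $H<nT$ are exactly what let me compare these: whenever the mechanism is forced onto a machine that is one type-level ``too slow'', placing even a single such job already costs a factor within $n/2+1$ of the optimal makespan, so this mismatch is absorbed by the target factor.

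Given these, the case analysis is mechanical. For line~\ref{line:outcome1} I compare $\lceil m/|A|\rceil B$ with $mB/n$ using $|A|>\lceil n/2\rceil$ and $m\ge n$, obtaining a constant ratio $\le n/2+1$. For the greedy leaves (lines~\ref{line:outcome2},~\ref{line:outcome4},~\ref{line:outcome5}) the machine answering yes has the globally smallest present type among the $\lceil n/2\rceil$ survivors; by fact (ii) there are at most $\lceil n/2\rceil$ machines this fast in the whole instance, so bound (b) with $k=\lceil n/2\rceil$ forces $OPT(\b)$ to pay $\Omega(m/n)$ times that type and the ratio is $\le n/2+1$. For the mixed leaves I invoke fact (iii): only one or two machines in the entire instance are as fast as $i^*$, hence $OPT(\b)$ must itself load $i^*$ (or an equally fast machine) with $\Theta(m)$ jobs, so $\lceil m/2\rceil$ or $m$ times $i^*$'s type is within $n/2+1$ of $OPT(\b)$ once the separation hypotheses are used. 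Taking the worst leaf over all cases yields the claimed $\left(\tfrac n2 + 1\right)$-approximation; combined with Theorem~\ref{prop:algo2mon} this completes the analysis of $\Mmany$.

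The step I expect to be the main obstacle is the mixed leaves~\ref{line:outcome3},~\ref{line:outcome6},~\ref{line:outcome8}, where $\Mmany$ deliberately splits the jobs roughly in half between $i^*$ and one other machine: there I must simultaneously argue that $OPT(\b)$ cannot beat roughly $\tfrac m2$ times the common type (because every genuinely fast machine has either been correctly identified and removed during the descending phase, or simply does not exist), and that the half-load compromise --- which the mechanism is forced into for OSP reasons, cf.\ the discussion around Observation~\ref{obs:extreme_pivots} and the ``last-to-be-asked-about-$B$, first-about-$L$'' scheduling of $i^*$ --- never costs more than the extra ``$+1$'' beyond $n/2$. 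Making precise how the count $|Z|$ of removed machines (and the $\lceil\cdot\rceil$ rounding in $\lceil m/|Z\cup\{i^*\}|\rceil$) enters the denominator of the $OPT$ bound is the delicate part of that argument.
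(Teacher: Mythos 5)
Your proposal takes essentially the same approach as the paper's proof: a case analysis on which of the labelled lines of the ascending phase produces the output, using the descending-phase history to pin down the types of alive and removed machines, and then comparing the mechanism's makespan against a lower bound on the optimum obtained by evenly splitting jobs over the machines of smallest type. The paper does not explicitly invoke the average-load bound $OPT \geq m/\sum_j 1/t_j$, but your bound (b) is the one actually doing the work in both arguments. One small inaccuracy you should watch when writing this out: at line~\ref{line:outcome3.5} (and line~\ref{line:outcome7}) it is \emph{not} true that ``only one or two machines in the entire instance are as fast as $i^*$''; the $\lfloor n/2\rfloor$ removed machines each have type $\geq L$, so as many as $\lfloor n/2\rfloor + 1$ machines may share $i^*$'s type $L$, and it is exactly this count that produces the denominator $h+1 \leq \lfloor n/2\rfloor + 1$ in the paper's $OPT \geq \lceil m/(h{+}1)\rceil L$ bound and hence the final $(n+2)/2 = n/2 + 1$ factor. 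You do flag this as ``the delicate part,'' so you have correctly identified where care is needed, but the preceding sentence should be corrected accordingly.
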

\begin{proof}
 We will proceed by considering all outcomes returned by the mechanism during the ascending phase.

 Let us start by considering the outcome returned at Line~\ref{line:outcome1}. Note that in this case all machines in $A$ have type $B$ and all remaining machines have type larger than $B$. Hence, and by the constraint about types, we have that the outcome of the mechanism is actually optimal.

 Consider now the outcome returned at Line~\ref{line:outcome2} or at Line~\ref{line:outcome4}. Note that in this case we have that there are $\lfloor n/2\rfloor$ machines with type at least $L$, the remaining $\lceil n/2\rceil$ machines with type in $\{B, L\}$, and at least one of these has type $B$. The outcome returned by the mechanism has makespan $mB$. The outcome returned by the optimal mechanism would be to instead evenly split the $m$ jobs over all machines with type $B$: if there are $b$ of these machines then the optimal makespan is $\lceil m/b \rceil B \geq \frac{m}{b} B \geq \frac{m}{\lceil n/2\rceil} B \geq \frac{2m}{n+1} B$. Hence, the approximation ratio is at most $(n+1)/2$. A similar reasoning holds even for the outcome returned at Line~\ref{line:outcome5} simply by replacing $L$ with $H$ and $B$ with $L$.

 Consider now the outcome returned at Line~\ref{line:outcome3}. Note that in this case we have that all machines have type at least $L$ and at least two machines, $i^*$ and $i$, have type exactly $L$. The outcome returned by the mechanism has makespan $\lceil m/2\rceil L$. The outcome returned by the optimal mechanism would be to instead evenly split the $m$ jobs over all machines with type $L$: if there are $\ell$ of these machines then the optimal makespan is $\lceil m/\ell \rceil L \geq \frac{m}{\ell} L \geq \frac{m}{n} L$. Hence the approximation ratio is at most $\frac{\lceil m/2 \rceil}{m/n} \leq \frac{n(m+1)}{2m} = \frac{n}{2} + \frac{n}{2m} \leq \frac{n+1}{2}$. A similar reasoning holds even for the outcome returned at Line~\ref{line:outcome6} and at Line~\ref{line:outcome8} simply by replacing $L$ with $H$ in the first case and with $T$ in the second case.

 Finally, consider the outcome returned at Line~\ref{line:outcome3.5}. Note that $\lfloor n/2\rfloor$ nodes (the non-active ones) have type at least $L$, $i^*$ has type $L$, and the remaining $\lceil n/2\rceil - 1$ machines have type at least $H$. The outcome returned by the mechanism has makespan $mL$. The outcome returned by the optimal mechanism would be to instead evenly split the $m$ jobs over all machines with type $L$: if there are $h$ among non-alive machines then the optimal makespan is $\lceil m/(h+1) \rceil L \geq \frac{m}{h+1} L \geq \frac{m}{\lfloor n/2\rfloor + 1} L \geq \frac{2m}{n+2} L$. Hence, the approximation ratio is at most $(n+2)/2 = \frac{n}{2} + 1$. A similar reasoning holds for the outcome returned at Line~\ref{line:outcome7} simply by replacing $L$ with $H$ and $H$ with $T$.
 \end{proof}

Theorem~\ref{cor:algoapprox} thus proves that the mechanism is tight at least on those instances that have been showed to be hard to approximate in the proof of Theorem~\ref{thm:lb4}. We however note that our mechanism can be easily updated and use routines for the assignment of jobs that are smarter than $\Greedy$. While this may allow to achieve the desired approximation for a larger set of instances, the mechanism and the proof of its OSPness could be significantly more involved. Since the main contribution of this section is to showcase the adoption of our characterization of OSP mechanisms for designing mechanisms and proving their limitations, we here prefer to keep the mechanism and its analysis as simple as possible and leave potential extensions as future work. 

\section{Conclusions}
We have show that OSP and greedy algorithms are intimitely linked, through novel technical (such as, path ironing) and conceptual contributions (e.g., interleaving of greedy approaches as a complex function of outcomes and types). We have applied our results to scheduling related machines and proved that OSP comes at cost that is linear in terms of approximation guarantee already for single-parameter domains of size five. Our result proves that, for scheduling problems, OSP for single dimensional agents is as limiting as strategyproofness for multi-parameter agents \cite{CKK23}. 

Whilst this can be disappointing news, our work builds the framework to better delineate the power of OSP mechanisms. Are there other single-parameter problems where OSP can be closer to strategyproofness (as in the case of single-minded combinatorial auctions \cite{bartmaria})? This question requires understanding better the class of three-way greedy algorithms. For example, how good can they be when they do not interleave? Can the web of interactions between implementation tree, social choice function, and agent's types be disentangled to understand better the power of interleaving?

\newpage
\bibliographystyle{plainnat}
\bibliography{ospb,osps}

\end{document}